\documentclass[a4paper]{article}
\usepackage[UKenglish]{babel}
\usepackage{hyperref}
\usepackage{thm-restate}
\usepackage{authblk}
\usepackage{amsthm}
\usepackage{thmtools}

\newtheorem{theorem}{Theorem}[section]
\newtheorem{lemma}[theorem]{Lemma}
\newtheorem{definition}[theorem]{Definition}
\newtheorem{proposition}[theorem]{Proposition}
\newtheorem{observation}[theorem]{Observation}
\newtheorem{corollary}{Corollary}[section]

\usepackage{amsmath, amssymb, bussproofs, cancel, url, latexsym, multirow, graphicx, stmaryrd, booktabs, float, tikz, listings, forest, tasks, multicol, tabularx, xspace}
\usepackage{cleveref}

\usetikzlibrary{arrows.meta,decorations.pathmorphing}

\newcommand\magicwand{\mathrel{-\mkern-6mu*}}

\newcommand{\septraction}{%
  \mathrel{%
    \begin{tikzpicture}[baseline=-0.6ex]
      \draw (0,0) -- (0.9em,0);

      \node[
        draw,
        circle,
        fill=white,          
        minimum size=0.9ex,  
        inner sep=0pt,
        line width=0.1pt
      ] at (0.9em,0) {$\ast$};
    \end{tikzpicture}%
  }%
}

\newcommand{\BBI}{\textbf{BBI}\xspace}
\newcommand{\SFour}{\textbf{S4}\xspace}
\newcommand{\rev}[1]{[#1]'}

\title{Embedding Modal Logics into Logics of Bunched Implications
} 
\author{%
  Daniele Sansoni\\Australian National University\\\texttt{daniele.sansoni@anu.edu.au}
  \and
  Ranald Clouston\\Australian National University\\\texttt{ranald.clouston@anu.edu.au}
}

\date{}
\begin{document}

\maketitle

\begin{abstract}
We present a new proof of the embedding of the classical modal logic S4 into the logic of Boolean Bunched Implications (BBI).
While the original proof is semantical, this proof is entirely syntactical.
It proceeds via Hilbert-style calculi,
and is built by analogy with a recently discovered proof by G\"odel of the embedding of intuitionistic propositional logic into S4.
We present the first full proofs of deduction theorems for BBI,
which are used to show that the embedding is preserved by reasoning with assumptions, including where those assumptions are organised into bunches.
Unlike the existing proof, our proof is stable under arbitrary axiomatic extensions of S4, and applies to all known axiomatic extensions of BBI in the literature. We observe how this embedding is related to semantical properties of both logics. Moreover, the proof extends gracefully to language extensions of BBI, as we show with hybrid BBI, classical BI, and sub-classical BBI.
\end{abstract}

\section{Introduction}

The Logic of Bunched Implications (BI)~\cite{pym_semantics_2002} is the combination of the usual `additive' connectives of propositional logic with `multiplicative' connectives that need not admit weakening and contraction.
Although its original development~\cite{OHearn1999Logic} was influenced by linear logic~\cite{girard1987Linear}, BI differs by lacking exponentials, and by additive implication being a first class connective.
We will concentrate on Boolean BI (\BBI), along with some extensions, in which the additive connectives are classical.
\BBI is strictly more expressive than its intuitionistic counterpart~\cite{Ishtiaq2001BI}, and forms the propositional basis of most separation logics \cite{1029817}, which are Hoare-style logics originally intended for reasoning about programs that explicitly mutate memory.
Separation logic has been used for program verification in too many ways to sensibly summarise in this paper, but we note that its diverse applications create demand for general techniques that work across the family of separation logics \cite{Parkinson2010Next,Bringing_order}.
Similarly, general techniques are desirable for its foundational logic.

Logics of bunched implications are intended for reasoning about \textit{resources}, and their splitting and combination.
$\alpha * \beta$, where $*$ is often called \emph{separating conjunction}, can be read as ``available resources may be split into a part that makes $\alpha$ true and a part that makes $\beta$ true''.
$\alpha \magicwand \beta$, where $\magicwand$ is often called \emph{magic wand}, can be read as ``given any resources that make $\alpha$ true, if they can be combined with our currently available resources then their combination makes $\beta$ true''.
The archetypal example is fragments of the memory heap, which may be combined if they assign disjoint locations.
Axiomatic and language extensions of \BBI have been proposed for a number of reasons.
First, while certain classes of models, of interest in separation logic applications, have been found to correspond to axiomatic extensions of \BBI, many other such classes cannot be so specified, and require 
a \textit{hybrid logic} extension of the language of \BBI~\cite{separation_theories}.
Second, modal \textit{epistemic} extensions of \BBI have been proposed and used to reason about access control \cite{courtault:hal-01259768,10.1093/logcom/exz024}. 
Third, there has been interest in extending the language of \BBI with a more complete suite of multiplicative propositional connectives~\cite{Brotherston2010ClassicalBI,intermediateBBI}.

This paper looks at \emph{embeddings} from classical modal logics into logics of bunched implications.
The study of such relationships between logics, \emph{interpretational proof theory}, is one of the major branches of proof theory~\cite{Troelstra2000Basic}. Examples of its utility include the embedding of intuitionistic and intermediate logics into their modal companions \cite{chagrov_modal_1992}, with applications to proofs of decidability, Kripke completeness, and the finite model property. This embedding is \emph{parametric}, in the sense that it is stable under axiomatic extension: for a translation $t(-)$ from logic $S1$ to logic $S2$, we have that $\vdash_{S1+\beta}\alpha$ iff $\vdash_{S2+t(\beta)} t(\alpha)$.
Related to this line of work is Artemov's ``realisation'' (a Skolemization-like technique substituting occurrences of $\Box$ with structured terms and functions) of \SFour into his ``logic of proof'' \textbf{LP}, ultimately providing an interpretation of BHK semantics in a formal classical environment \cite{artemov2019justification}.

Combining a number of results in the literature \cite{chagrov_modal_1992, artemov2019justification, embedding_BI_into_BBI, galmiche_expressivity_2006}, \BBI can be placed in this wider network of translations between systems, as shown in Figure~\ref{fig:embeddings}. The embedding of \SFour into \BBI was proved by Galmiche and Larchey-Wendling~\cite{galmiche_expressivity_2006}. Yet this result is more fragile than the ones connecting intermediate, modal, and justification logics. Faithfulness of the embedding is proved \cite{galmiche_expressivity_2006} by a semantic argument: we can always produce a tree such that if a formula $\alpha$ is not \SFour-provable, then the least element of the tree $r$ is such that $r\nvDash \alpha$. This tree is then used to build a \textbf{BBI} resource frame. This proof cannot work for all axiomatic extensions of \textbf{S4}, because not all such extensions have semantics with the tree property. Moreover, no observation is made by Galmiche and Larchey-Wendling regarding whether the embedding is stable under reasoning with assumptions, and it is unclear to us whether their methodology could extend even to this result.

\begin{figure}[t]
\centering
\begin{tikzpicture}[
    >=Stealth,
    node style/.style={
        draw,
        circle,
        minimum size=1.1cm,
        font=\small,
        thick
    }
]

% Nodes
\node[node style] (CPC) at (-3,0) {CPC};
\node[node style] (IPC) at (0,0) {IPC};
\node[node style] (S4)  at (3,0) {S4};
\node[node style] (LP)  at (6,0) {LP};

\node[node style] (BI)  at (0,-2) {BI};
\node[node style] (BBI) at (3,-2) {BBI};

% Arrows
\draw[->, thick, dash pattern=on 1pt off 3pt] (S4) -- (BBI);
\draw[->, thick, decorate, decoration={snake, amplitude=0.75 pt, segment length=7pt}] (S4) -- (LP);
\draw[->, thick] (IPC) -- (S4);
\draw[->, thick, dash pattern=on 1pt off 3pt] (BI) -- (BBI);
\draw[->, thick] (CPC) -- (IPC);

\end{tikzpicture}
\caption{A visualisation of a network of embeddings. \\
\footnotesize Unbroken arrows indicate that there is a known parametric embedding. The embedding of classical propositional logic \textbf{CPC} into \textbf{IPC} is trivially parametric, as classical logic admits no consistent axiomatic extension. The embedding of \textbf{IPC} into \SFour is parametric through the notion of modal companions \cite{chagrov_modal_1992}. Dotted arrows indicate embeddings that are not known to be parametric, of \textbf{BI} into \BBI~\cite{embedding_BI_into_BBI}, and of \SFour into \BBI~\cite{galmiche_expressivity_2006}. A contribution of this paper is to develop the latter embedding into a parametric one. The wavy arrow is for the related but more complex notion of realisation, connecting \SFour to \textbf{LP}; this relationship has been extended to identify justification counterparts of the systems of the normal modal cube~\cite{artemov2019justification, borg2015realization}.}
\label{fig:embeddings}
\end{figure}

In this paper, we remedy this frailty. The methodology we employ in proving the embedding of \SFour into \BBI is parametric, and purely syntactical, allowing us to establish a general result linking normal modal systems extending \SFour to extensions of \BBI (their ``bunched companions'').  Our methodology is of historical interest, as it adapts techniques from a recently discovered alternative proof by G\"odel of his well known embedding of \textbf{IPC} into \SFour. Negri and Von Plato~\cite{Godel_Grundlagen} describe these techniques they unearthed as ``previously unknown''. One of the contributions of this paper is to show that G\"odel's long forgotten techniques have utility beyond the logics they were originally applied to.

The paper is organised as follows. In Section~\ref{Sec:prelim}, we present Hilbert-style systems and relational semantics for \SFour and \BBI. In Section~\ref{Sec:embedding}, we prove the faithfulness of the embedding of \SFour into \BBI using only syntactical means. In Section~\ref{Sec:Assumptions}, we provide extensions of the Hilbert systems for both \SFour and \BBI supporting reasoning with assumptions. We then prove deduction theorems for these systems, on both $\rightarrow$ and $\magicwand$, and that the embedding of \SFour into \BBI holds under reasoning with assumptions. In Section~\ref{Sec:general} we extend our results to a general statement concerning axiomatic extensions of both \SFour and \BBI. This allows us to connect our syntactic results to relational semantics, showing that completeness for a number of interesting \BBI-extensions may be obtained. In Section~\ref{Sec:language_ext}, we observe how the embedding result remains stable under several interesting language extensions of \BBI. In Section~\ref{Sec:conc}, we conclude by discussing related work, and potential future directions of our research.

\section{Syntax and Semantics for \BBI and \SFour}
\label{Sec:prelim}

The logic of Boolean bunched implications (\BBI) has the usual classical operators ($\lor$, $\land$, $\rightarrow$, $\neg$, $\top$, $\bot$) as well as multiplicative symbols $*$, $I$, $\magicwand$.
We occasionally use the symbol $\septraction$ and the name \emph{septraction}
for the De Morgan dual of $\magicwand$,
defined as $\alpha \septraction \beta \equiv \neg(\alpha \magicwand \neg \beta)$~\cite{Calcagno:Modular}.
We adopt the convention that $\neg$ binds more tightly than $\lor$, $\land$, $*$, which in turn bind more tightly than $\rightarrow$, $\magicwand$, $\septraction$.
We present a standard (\cite{galmiche_expressivity_2006, separation_theories, intermediateBBI}) Hilbert-style axiomatic system for \BBI: 

\begin{definition}[Axiomatic System \BBI] Add to an axiomatisation of classical propositional logic the following axiom schemes:

\begin{enumerate}
\begin{multicols}{2}
\item [B1.] $\vdash \phi \rightarrow (I*\phi)$.
\item[B2.] $\vdash (I*\phi) \rightarrow \phi$.
\item[B3.] $\vdash (\phi*\psi)\rightarrow(\psi*\phi)$.
\item[B4.] $\vdash (\phi*(\psi*\chi)) \rightarrow ((\phi*\psi)*\chi))$.

\end{multicols}

\end{enumerate}

We have the following deduction rules:

\begin{enumerate}
\begin{multicols}{2}

\item[R1.] \AxiomC{$\alpha$}
\RightLabel{[MP]}
\AxiomC{$\alpha \rightarrow \beta$}
\BinaryInfC{$\beta$}

\DisplayProof

\vspace{0.5em}
\item[R2.] \AxiomC{$\alpha \rightarrow (\beta \magicwand \gamma)$}
\RightLabel{[$\magicwand 1$]}
\UnaryInfC{$(\alpha * \beta)\rightarrow \gamma$}

\DisplayProof

\item[R3.] \AxiomC{$(\alpha * \beta)\rightarrow \gamma$}
\RightLabel{[$\magicwand 2$]}
\UnaryInfC{$\alpha \rightarrow (\beta \magicwand \gamma)$}

\DisplayProof

\vspace{0.5em}
\item[R4.] \AxiomC{$(\alpha \rightarrow \gamma)$}
\AxiomC{$(\beta \rightarrow \delta)$}

\RightLabel{[$*$]}
\BinaryInfC{$( \alpha * \beta)\rightarrow (\gamma * \delta)$}

\DisplayProof

\end{multicols}

\end{enumerate}
\end{definition}

Rules R2 and R3 deal with the import/export of multiplicative conditional $\magicwand$ in relation to multiplicative conjunction $*$, while rule R4 states that $*$ behaves like additive conjunction $\land$ when joining antecedents and consequents of additive conditional. 

System \BBI has been proven sound and complete with respect to relational semantics based on partial non-deterministic monoids \cite{galmiche_expressivity_2006}. The term ``non-deterministic'' comes from the fact that the composition of elements of the monoid $r_1$ and $r_2$ may yield not one but several results (a set of results), including the possible incompatibility of $r_1$ and $r_2$ (i.e. the empty set). This relational semantics provides the intuitive resource-based interpretation of the substructural operators \cite{pym_semantics_2002}. We briefly recount the semantics, and provide forcing clauses for the operators:

\begin{definition}[Relational Resource Frame]
    A relational frame on a partial non-deterministic monoid is a triple $(M, \triangleright, e)$, where $e \in M$ and $\triangleright$ is a ternary relation on $M$ such that for all $r_1,r_2,r_3,r_4 \in M$:

    \begin{description}
        \item[Identity] $e,r_1 \triangleright r_2$ iff $r_1=r_2$.
        \item[Commutativity] $r_1,r_2 \triangleright r_3$ iff $r_2,r_1 \triangleright r_3$.
        \item[Associativity] if $\exists r'(r_2,r_3 \triangleright r'$ and $r_1,r' \triangleright r_4)$, then $\exists r''(r_1,r_2 \triangleright r'' $ and $r'', r_3 \triangleright r_4) $.

    \end{description}

\end{definition}

As noted \cite{galmiche_expressivity_2006}, the relation ``$r_1, r_2\triangleright r_3$'' can be intuitively read as follows: ``the composition of $r_1$ and $r_2$ yields $r_3$'', or ``$r_3$ is decomposable in $r_1$ and $r_2$''. Axiomatic system \BBI is not able to account \cite{separation_theories} for all the different features of some classes of models. From this standpoint, we may have chosen a model for which instead of a single empty resource $e$ we could have had a set of empty resources $e_1,e_2,..., e_n \in E$. Still, there are certain features of the monoidal semantics that may actually be axiomatized within \BBI: for example, formula $I\land((\alpha*\beta)\magicwand \bot) \rightarrow ((\alpha \magicwand \bot)\lor (\beta \magicwand \bot))$ is valid only \cite{total_monoid_signature} in total monoids. We now define the relational resource model for \BBI:

\begin{definition}[Relational Resource Model] Let $(M, \triangleright, e)$ be a relational resource frame based on a partial non-deterministic monoid, and $v:Prop \longrightarrow \mathcal{P(M)}$ an evaluation mapping propositional letters to subsets of $M$. Then, we define the forcing relation ``$\Vdash$'' as follows:

    \begin{description}
        \item $r \Vdash p$ iff $r \in v(p)$.
        \item $r \Vdash I$ iff $r=e$.
        \item $r \Vdash \bot$ never.
        \item $r \Vdash \top$ always.
        \item $r \Vdash \neg \phi$ iff $\cancel\Vdash \phi$.
        \item $r \Vdash \phi \land \psi$ iff $r \Vdash \phi$ and $r \Vdash \psi$.
        \item $r \Vdash \phi \lor \psi$ iff $r \Vdash \phi$ or $r \Vdash \psi$.
        \item $r \Vdash \phi \rightarrow \psi$ iff $r \cancel\Vdash \phi$ or $r \Vdash \psi$.
        \item $r \Vdash \phi \magicwand \psi$ iff $\forall r_1,r_2 (r,r_1 \triangleright r_2$ and $r_1 \Vdash \phi)$ then $r_2 \Vdash \psi$.
        \item $r \Vdash \phi * \psi$ iff $\exists r_1,r_2 $ such that $ (r_1,r_2 \triangleright r$, and $r_1 \Vdash \phi$ and $r_2 \Vdash \psi$).

    \end{description}

If $r \Vdash \alpha$ for all $r$ in the model, we write $\vDash_{BBI}\alpha$.
    
\end{definition}

Soundness and completeness of \BBI for the relational model are proven in \cite{galmiche_expressivity_2006}. The resource reading of operators suggested by the relational framework comes in handy in providing an intuitive understanding of the main feature of the translation $t(-)$ used for proving the embedding of \SFour into \BBI: that is, modulo the recursive application of $t(-)$ to sub-formulae, transforming occurrences of ``$\Box \alpha$'' with ``$\top \magicwand \alpha$''. The understanding of $\Box \alpha$ provided by $t(-)$ may be informally considered to state: ``if we provide enough resources to make $\top$ true (that is, considering how $\top$ is always true, we may even provide no resources at all) and combine it with what is already available, then it will be enough to make $\alpha$ true''. In short, in an alethic reading of modal logic, necessity of $\alpha$ is understood as $\alpha$ being true regardless of allocated resources.

Normal modal logic \SFour has classical operators as well as an intensional operator $\Box$. We occasionally use the symbol $\Diamond$ to refer to the De Morgan dual of $\Box$, defined as $\Diamond \alpha \equiv \neg \Box \neg \alpha$. We present the nowadays standard Hilbert-style axiomatic system for \SFour:

\begin{definition}[Axiomatic System \SFour]\label{Def:S4}
Add to an axiomatisation of classical propositional logic with the rule of \textit{modus ponens} the following axiom schemes and deduction rule:
\begin{enumerate}
\begin{multicols}{2}
\item [K.] $\Box (\phi \rightarrow \psi)\rightarrow (\Box\phi \rightarrow \Box \psi)$.
\item[T.] $\Box \phi \rightarrow \phi$.
\item[4.] $\Box \phi \rightarrow \Box \Box \phi$.
\item[R1.] \AxiomC{$\alpha$}
\RightLabel{[Nec]}
\UnaryInfC{$\Box\alpha$}

\DisplayProof

\end{multicols}

\end{enumerate}

\end{definition}

Rule R1 is referred as necessitation. System \SFour is sound and complete with respect to Kripke relational semantics based on both pre-orders (i.e. reflexive and transitive binary relations) and partial orders (i.e. reflexive, transitive, and antisymmetric binary relations). Intuitively, this happens because antisymmetry is not definable in modal logic (while it is in hybrid logic \cite{Blackburn_Rijke_Venema_2001}). In any case, partial orders are a particular class of pre-orders. We relay the standard relational model for \SFour:

\begin{definition}[S4 Kripke Model] Let $(W, R)$ be a relational frame based on a pre-order. Thus, the accessibility relation $R:W\longrightarrow W$ is reflexive and transitive. The forcing relation ``$\Vdash$'' behaves as usual for classical operators, with the added clause:

    \begin{description}
        \item $w \Vdash \Box\alpha$ iff $\forall w' \in W$ such that $wRw'$ $w'\Vdash \alpha$.

    \end{description}

If $w \Vdash \alpha$ for all $w$ in the model, we write $\vDash_{S4}\alpha$.
    
\end{definition}

As noted in \cite{galmiche_expressivity_2006}, we can obtain a pre-order from the ternary resource semantics for \BBI. 

\begin{definition}[Induced S4-Frame]\label{induced_frame}

Let $\mathcal{F}$ be a relational resource frame. Let the binary relationship relationship ``$\preceq$'' be defined as such $r_1, r_2 \in M$, $r_1\preceq r_2$ if there exists $r'\in M$ such that $r',r_1 \triangleright r_2$. Then, we define $\mathcal{U}(\mathcal{F})$ as the induced \SFour-frame obtained by collapsing the ternary relation $\triangleright$ into the binary $\preceq$ on the resource relational \BBI-frame $\mathcal{F}$. 
\end{definition}

This notion remains useful even if we add properties to the relational resource frame $\mathcal{F}$, as this will in turn define a particular sub-class of induced \SFour-frames. Validity for a formula (in a class of frames) is shared by the two frameworks: a formula holds in a frame iff it holds for each evaluation point in each model belonging to the particular class of frames.

\section{Embedding \SFour into \BBI}
\label{Sec:embedding}

We adopt the same translation as used by
Galmiche and Larchey-Wendling~\cite{galmiche_expressivity_2006}:

\begin{definition}[Translation from \SFour to \BBI]\label{Def:emb}
The translation $t(-)$ from the language of \SFour into the language of \BBI is:
\begin{description}
    \item[1.] $t(\Box \alpha) =_{Def}\top \magicwand t(\alpha) $.
    \item[2.] $t(\neg \alpha) =_{Def}\neg t(\alpha) $.
    \item[3.] $t(\alpha \circ \beta)=_{Def}t(\alpha) \circ t(\beta) $, for $\circ \in \{\land, \lor, \rightarrow\}$.
    \item[4.] $t(C)=_{Def}C$, for $C \in \{\top, \bot\}$ or $C \equiv p$ for $p$ a propositional atom.
\end{description}
\end{definition}

\begin{theorem}[Embedding is Sound]
\label{Theorem:sound}
If $\vdash_{\SFour}\alpha$ then $\vdash_{\SFour}t(\alpha)$.
\end{theorem}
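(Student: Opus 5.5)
Although the statement as printed reads ``$\vdash_{\SFour}t(\alpha)$'', the intended claim (the soundness half of the embedding) is surely that $\vdash_{\SFour}\alpha$ implies $\vdash_{\BBI}t(\alpha)$; this is how I will read it. The proof goes by induction on the length of an \SFour-derivation of $\alpha$. For each axiom scheme of Definition~\ref{Def:S4} we check that every translated instance is \BBI-derivable. For each rule we check that \BBI-derivability of the translated premises yields \BBI-derivability of the translated conclusion. Because $t(-)$ commutes with all classical connectives (clauses 2--4 of Definition~\ref{Def:emb}), translated instances of classical propositional tautologies are again instances of classical tautologies, so they are \BBI-axioms (or theorems). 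Similarly, translated modus ponens is simply \BBI modus ponens, since $t(\beta\rightarrow\gamma)=t(\beta)\rightarrow t(\gamma)$.

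The modal cases need two basic facts about $\top\magicwand(-)$ in \BBI. The first is \emph{necessitation}: from $\vdash\gamma$ infer $\vdash\top\magicwand\gamma$. From $\gamma$ we get $(\alpha*\top)\rightarrow\gamma$ by propositional reasoning, for any $\alpha$. Rule [$\magicwand 2$] then gives $\alpha\rightarrow(\top\magicwand\gamma)$, and taking $\alpha=\top$ and applying MP yields $\top\magicwand\gamma$. The second is \emph{evaluation}: $\vdash (\beta\magicwand\gamma)*\beta\rightarrow\gamma$, obtained from the identity $(\beta\magicwand\gamma)\rightarrow(\beta\magicwand\gamma)$ by [$\magicwand 1$].

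With these, the axioms are handled as follows.
\begin{itemize}
\item[K.] We need $(\top\magicwand(a\rightarrow b))\rightarrow((\top\magicwand a)\rightarrow(\top\magicwand b))$. By currying (which [$\magicwand 2$] permits), it suffices to show $((\top\magicwand(a\rightarrow b))\land(\top\magicwand a))*\top\rightarrow b$. Using R4 with the projections, the antecedent implies both $(\top\magicwand(a\rightarrow b))*\top$ and $(\top\magicwand a)*\top$. By evaluation these give $a\rightarrow b$ and $a$, hence $b$.
\item[T.] We need $(\top\magicwand a)\rightarrow a$. B1 gives $(\top\magicwand a)\rightarrow I*(\top\magicwand a)$. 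B3 turns this into $(\top\magicwand a)*I$, and R4 with $I\rightarrow\top$ gives $(\top\magicwand a)*\top$. Evaluation then yields $a$.
\item[4.] We need $(\top\magicwand a)\rightarrow(\top\magicwand(\top\magicwand a))$. By [$\magicwand 2$] twice, this reduces to $((\top\magicwand a)*\top)*\top\rightarrow a$. By associativity (B4 together with B3) the antecedent implies $(\top\magicwand a)*(\top*\top)$. R4 with $\top*\top\rightarrow\top$ gives $(\top\magicwand a)*\top$, and evaluation yields $a$.
\end{itemize}
The rule [Nec] is the necessitation fact above.

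The main obstacle is bookkeeping rather than conceptual. Axiom 4 requires care in obtaining the rebracketing $((x*y)*z)\rightarrow(x*(y*z))$ from B4, which is stated in only one direction, combined with B3 and R4. Concretely, $(x*y)*z\rightarrow z*(x*y)\rightarrow z*(y*x)\rightarrow (z*y)*x\rightarrow x*(z*y)$, and since $z=y=\top$ here this suffices. Each step is routine, but each uses R4 with an identity on one side, and that is where I would be most careful.
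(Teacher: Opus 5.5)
Your proof is correct and follows the paper's approach: the paper only asserts a ``straightforward induction'', which, as its hybrid soundness proof in the appendix shows, really means induction over \SFour-derivations checking each axiom and rule. You also rightly read the conclusion as $\vdash_{\BBI}t(\alpha)$, and your explicit \BBI derivations of necessitation, evaluation and the translated K, T and 4 instances (including the rebracketing via B3, B4 and R4) supply exactly the details the paper leaves out.
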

\begin{proof}
Straightforward induction on the length of formulae.
\end{proof} 

The converse of this theorem is the \emph{faithfulness} of the translation.
Instead of employing the semantical approach of Galmiche and Larchey-Wendling~\cite{galmiche_expressivity_2006} we take inspiration from a methodology of G\"odel, recorded only in his 1940-1942 unpublished notebooks “\textit{Resultate Grundlagen}'' \cite{Godel_Grundlagen}, there used to prove the faithfulness of the embedding of intuitionistic propositional logic into \SFour.

The beating heart of the proof is a \emph{reverse} translation $\rev{-}$, from \BBI to \SFour.
This translation $\rev{-}$ must be proven sound but is certainly not faithful; if it were, the undecidable \BBI~\cite{5570897} could be decided by translation into the decidable \SFour. Instead we show that it is `cancelling', providing a sort of left inverse to the embedding $t(-)$.

\begin{proposition}[Reverse Translation is Sound]
\label{Prop:rev_sound}
For every formula $\alpha$ of \BBI, if $\vdash_{\BBI} \alpha$ then $\vdash_{\SFour} [\alpha]'$.
\end{proposition}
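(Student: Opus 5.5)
The plan is to argue by induction on the length of a \BBI-derivation of $\alpha$. The reverse translation $\rev{-}$ has not yet been written out at this point in the paper, so I fix the shape I expect it to have and check each case against it.

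\textbf{Assumed shape of $\rev{-}$.} I take $\rev{-}$ to commute with all classical connectives and constants, and to fix atoms. The multiplicatives I take to be rendered by S4 material chosen so that $\rev{\top \magicwand t(\beta)}$ is $\Box\rev{t(\beta)}$, up to \SFour-provable equivalence. This is what the announced cancelling property will need. Concretely, I expect $\rev{\alpha\magicwand\beta}$ to be a boxed implication $\Box(\rev{\alpha}\rightarrow\rev{\beta})$, $\rev{\alpha*\beta}$ to be a conjunction-like formula of $\rev{\alpha}$ and $\rev{\beta}$, and $\rev{I}$ to be $\top$ or a formula equivalent to it in \SFour.

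\textbf{The induction.} For each axiom instance of \BBI I show that its image is an \SFour-theorem, and for each rule I show that if the images of the premisses are \SFour-theorems then so is the image of the conclusion.

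\textbf{Classical part.} Since $\rev{-}$ commutes with the classical connectives, the image of any classical tautology instance is a substitution instance of the same tautology, hence an \SFour-theorem. By the same fact, modus ponens maps to modus ponens.

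\textbf{Axioms B1--B4.} These are statements about $*$ and $I$ alone: a unit law, commutativity and associativity. Under a conjunction-like reading of $*$ with $\rev{I}$ equivalent to $\top$, they reduce to propositional facts, together with at most the principles K, T and 4 if $\Diamond$ or $\Box$ occurs in the clause for $*$. Each should be a short \SFour-derivation.

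\textbf{Rule R4 (monotonicity of $*$).} I use that \SFour proves $\Box(\phi\rightarrow\psi)$ from $\phi\rightarrow\psi$ by Nec, and that K yields monotonicity of $\Box$ and of $\Diamond$. Hence from $\rev{\alpha}\rightarrow\rev{\gamma}$ and $\rev{\beta}\rightarrow\rev{\delta}$ I obtain the translated conclusion by combining Nec, K and propositional reasoning.

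\textbf{Rule R2.} From $\rev{\alpha}\rightarrow\Box(\rev{\beta}\rightarrow\rev{\gamma})$ I apply T to get $\rev{\alpha}\rightarrow(\rev{\beta}\rightarrow\rev{\gamma})$. Currying then gives the image of $(\alpha*\beta)\rightarrow\gamma$.

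\textbf{Main obstacle: rule R3.} Here I must pass from $\rev{\alpha*\beta}\rightarrow\rev{\gamma}$ to $\rev{\alpha}\rightarrow\Box(\rev{\beta}\rightarrow\rev{\gamma})$. This requires introducing a $\Box$ on the right, which in \SFour is only possible if the antecedent is itself boxed or stable, i.e.\ $\rev{\alpha}\rightarrow\Box\rev{\alpha}$ is available. My first attempt is to exploit whatever modal material the clause for $*$ contributes. For instance, if $\rev{\alpha*\beta}$ carries a $\Box\rev{\alpha}$ (or $\Diamond$-dual) component, then Nec and K give $\Box\rev{\alpha}\rightarrow\Box(\rev{\beta}\rightarrow\rev{\gamma})$, and axiom 4 lets the boxing propagate.

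If the clause for $*$ does not supply this directly, I would strengthen the induction hypothesis. I would prove simultaneously that every $\rev{\varphi}$ lies in a class of \SFour-formulae closed under the relevant operations. For the R3 step this class must be closed under the step from $\rev{\alpha}$ to $\Box\rev{\alpha}$, or equivalently satisfy $\vdash\rev{\varphi}\rightarrow\Box\rev{\varphi}$ whenever $\varphi$ occurs as the left argument of a $*$. I would then discharge the boxing in R3 using this invariant, K and 4.

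Checking that the paper's precise clauses for $*$, $\magicwand$ and $I$ make R3 go through is the step I expect to need the real work. All other cases are routine applications of K, T, 4 and Nec.
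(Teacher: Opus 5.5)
Your overall scheme is the paper's. It is an induction on the \BBI-derivation: B1--B4 become propositional tautologies once $*$ is read as $\land$ and $I$ as $\top$, modus ponens is shared, R4 follows by propositional monotonicity, and R2 uses axiom T. The genuine gap is R3, and under the clause you assume it is false, not merely hard.

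Take $\rev{\alpha\magicwand\beta}=\Box(\rev{\alpha}\rightarrow\rev{\beta})$ and $\rev{\alpha*\beta}=\rev{\alpha}\land\rev{\beta}$. Instantiating B1 at $\top$ then forces $\vdash_{\SFour}\rev{I}\leftrightarrow\top$. B3 and B2 give $\vdash_{\BBI}(p*I)\rightarrow p$, and one application of R3 gives $\vdash_{\BBI}p\rightarrow(I\magicwand p)$. Its image is equivalent to $p\rightarrow\Box p$, which \SFour does not prove.

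Your fallback cannot repair this. An invariant $\vdash_{\SFour}\rev{\varphi}\rightarrow\Box\rev{\varphi}$ for left arguments of $*$ already fails for an atom, and R3 puts no restriction on its left formula. Building $\Box\rev{\alpha}$ into the clause for $\alpha*\beta$ instead breaks B3, whose image would then include $\Box\top\land p\rightarrow\Box p\land\top$.

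The missing idea is the paper's actual clause $\rev{\alpha\magicwand\beta}=\rev{\alpha}\rightarrow\Box\rev{\beta}\lor\rev{\beta}$, together with $\rev{\alpha*\beta}=\rev{\alpha}\land\rev{\beta}$ and $\rev{I}=\top$. The box occurs only disjoined with its own unboxed argument, so R3 needs no box introduction at all. From $\rev{\alpha}\land\rev{\beta}\rightarrow\rev{\gamma}$ you weaken the consequent to $\Box\rev{\gamma}\lor\rev{\gamma}$ and curry, purely propositionally. R2 goes the other way, using T to collapse $\Box\rev{\gamma}\lor\rev{\gamma}$ back to $\rev{\gamma}$.

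The price is that, by T, $\rev{\top\magicwand x}$ is \SFour-equivalent to $x$, not to $\Box x$. So the design constraint you took from cancellation has to be given up. This is forced for any compositional translation with $\rev{I}\leftrightarrow\top$. Soundness applied to the \BBI-theorems $p\rightarrow(I\magicwand p)$ and $(I\magicwand p)\rightarrow p$ makes $\rev{I\magicwand p}$, hence $\rev{\top\magicwand p}$, equivalent to $p$. For the same reason, the step $\Box\beta\lor\beta\leftrightarrow\Box\beta$ in the paper's proof of Proposition~\ref{Prop:cancel} is not valid in \SFour. For the soundness statement under review, though, the paper's argument is complete and elementary.
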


\begin{proposition}[Reverse Translation is Cancelling]
\label{Prop:cancel}
For every formula $\alpha$ in \SFour, $\vdash_{\SFour} \alpha \leftrightarrow \rev{t(\alpha)}$.
\end{proposition}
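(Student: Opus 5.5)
We prove the statement by structural induction on the \SFour-formula $\alpha$. We rely only on the clauses of $\rev{-}$ that matter here. Since $t(-)$ never produces $*$, $I$ or a $\magicwand$ whose antecedent is anything other than $\top$, only three features of $\rev{-}$ are used. First, $\rev{-}$ fixes atoms and the constants $\top,\bot$. Second, it commutes with the classical connectives $\neg,\land,\lor,\rightarrow$ (clauses of this shape are forced, since \Cref{Prop:rev_sound} must send classical tautologies to tautologies). Third, it sends $\top\magicwand\beta$ to a formula that is \SFour-provably equivalent to $\Box\rev{\beta}$. The last point depends on the exact clause chosen for $\magicwand$, so it is the step that needs real checking.

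The base cases are immediate. For $C\in\{\top,\bot\}$ or an atom $p$ we have $t(C)=C$ and $\rev{C}=C$, so $\vdash_{\SFour} C\leftrightarrow \rev{t(C)}$ is an instance of $\phi\leftrightarrow\phi$. For $\neg\alpha$ and $\alpha\circ\beta$ with $\circ\in\{\land,\lor,\rightarrow\}$, Definition~\ref{Def:emb} gives $\rev{t(\alpha\circ\beta)}=\rev{t(\alpha)}\circ\rev{t(\beta)}$. The induction hypotheses $\vdash_{\SFour}\alpha\leftrightarrow\rev{t(\alpha)}$ and $\vdash_{\SFour}\beta\leftrightarrow\rev{t(\beta)}$ then yield the claim by classical propositional reasoning: provable biconditionals are congruences for the Boolean connectives.

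The modal case $\Box\alpha$ is the main step. Here $\rev{t(\Box\alpha)}=\rev{\top\magicwand t(\alpha)}$, and we proceed in two stages. First we show, directly from the $\magicwand$-clause of $\rev{-}$, that $\vdash_{\SFour}\rev{\top\magicwand\gamma}\leftrightarrow\Box\rev{\gamma}$ for arbitrary $\gamma$. For instance, if that clause has the form $\Diamond\rev{\beta}\rightarrow\Box\rev{\gamma}$ or $\Box(\rev{\beta}\rightarrow\rev{\gamma})$, we substitute $\rev{\top}=\top$. We then use $\vdash_{\SFour}\Diamond\top$, obtained from $\Box\bot\rightarrow\bot$ (axiom T), and $\vdash_{\SFour}\Box(\top\rightarrow\phi)\leftrightarrow\Box\phi$, obtained by Nec and K from the tautology $(\top\rightarrow\phi)\leftrightarrow\phi$. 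Second, we push the induction hypothesis under the box. From $\vdash_{\SFour}\alpha\leftrightarrow\rev{t(\alpha)}$, Nec and K give $\vdash_{\SFour}\Box\alpha\leftrightarrow\Box\rev{t(\alpha)}$, which is the standard replacement-of-equivalents argument in a normal modal logic. Chaining the two biconditionals gives $\vdash_{\SFour}\Box\alpha\leftrightarrow\rev{t(\Box\alpha)}$.

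I expect the only genuine obstacle to be the first stage of the modal case. Whatever clause for $\magicwand$ is chosen to make \Cref{Prop:rev_sound} hold, in particular to validate rules R2/R3, must collapse to plain $\Box$ when the antecedent is $\top$. Verifying this may require more than K alone: it may need T, to discharge a $\Diamond\top$ or a $\top$-antecedent, or 4, if the clause nests modalities. I would first try to show it using T and the K-distribution laws, and only then invoke 4. The remaining cases are routine bookkeeping.
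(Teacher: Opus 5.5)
Your Boolean cases are fine, and so is your use of Nec and K to push the induction hypothesis under $\Box$; the paper's proof has the same overall shape and is less careful there. But the step you flagged as the real obstacle is not just unverified. It is false for the reverse translation the paper actually uses, and neither T nor 4 can repair it.

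The paper's clause is $\rev{\alpha\magicwand\beta}=\rev{\alpha}\rightarrow\Box\rev{\beta}\lor\rev{\beta}$, so $\rev{\top\magicwand\gamma}=\top\rightarrow\Box\rev{\gamma}\lor\rev{\gamma}$. Axiom T gives $\Box\rev{\gamma}\rightarrow\rev{\gamma}$, hence $\vdash_{\SFour}(\Box\rev{\gamma}\lor\rev{\gamma})\leftrightarrow\rev{\gamma}$. The box is erased, not kept. Concretely, $\Box p\lor p\rightarrow\Box p$ fails at $w$ in the reflexive, transitive two-point chain $w\,R\,v$ with $p$ true only at $w$.

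Consequently the proposition itself fails. For $\alpha=p\rightarrow\Box p$ you get $\rev{t(\alpha)}=p\rightarrow(\top\rightarrow\Box p\lor p)$, a propositional tautology. So $\vdash_{\SFour}\alpha\leftrightarrow\rev{t(\alpha)}$ would give $\vdash_{\SFour}p\rightarrow\Box p$, which the same model refutes. The paper's own proof breaks at exactly this point: its last step asserts $\vdash_{\SFour}\Box\beta\lor\beta\leftrightarrow\Box\beta$ ``for modal axiom $\Box\beta\rightarrow\beta$'', whereas T yields $\Box\beta\lor\beta\leftrightarrow\beta$.

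Your hope that some other $\magicwand$-clause compatible with \Cref{Prop:rev_sound} would reduce to $\Box$ on a $\top$ antecedent also cannot be realised, as long as $*$ is sent to $\land$. Applying R3 to $p*\top\rightarrow p*\top$, \BBI proves $p\rightarrow(\top\magicwand(p*\top))$. Suppose your first stage held, i.e.\ $\vdash_{\SFour}\rev{\top\magicwand\gamma}\leftrightarrow\Box\rev{\gamma}$ for arbitrary $\gamma$. Then soundness of $\rev{-}$ would give $\vdash_{\SFour}p\rightarrow\Box(p\land\top)$, hence $p\rightarrow\Box p$.

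Restricting to $\gamma$ in the image of $t$ does not help. For a compositional clause $F(\rev{\beta},\rev{\gamma})$, replacement of equivalents makes $F(\top,p\land\top)$ equivalent to $F(\top,p)=\rev{t(\Box p)}$, which cancellation requires to be equivalent to $\Box p$. So the first stage of your modal case is incompatible with \Cref{Prop:rev_sound}.

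Your two candidate clauses illustrate this. Each sends the \BBI theorem $p\rightarrow(q\magicwand(p*q))$ to a non-theorem of \SFour: $p\rightarrow\Box(q\rightarrow p\land q)$ and $p\rightarrow(\Diamond q\rightarrow\Box(p\land q))$ respectively. Both fail at $w$ in the chain above with $q$ true only at $v$.

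So what is missing is not a cleverer use of T or 4. You would need a genuinely different reverse translation in which $*$ is not read as $\land$. With the paper's $\rev{-}$ the statement cannot be proved, because it is false.
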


These propositions, once proved, will imply our main result:

\begin{theorem}[Embedding is Faithful] \label{Theor:Faithfulness}
 If $\vdash_{\BBI}t(\alpha)$, then $\vdash_{\SFour}\alpha$.
\end{theorem}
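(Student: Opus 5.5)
The plan is to derive the theorem directly from Proposition~\ref{Prop:rev_sound} and Proposition~\ref{Prop:cancel}, taking both as given. This follows G\"odel's scheme: a sound translation from \BBI back into \SFour that undoes $t(-)$ up to provable equivalence.

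Assume $\vdash_{\BBI} t(\alpha)$. Since $t(\alpha)$ is a formula of \BBI, Proposition~\ref{Prop:rev_sound} applied to it gives $\vdash_{\SFour} \rev{t(\alpha)}$. Proposition~\ref{Prop:cancel} applied to the \SFour-formula $\alpha$ gives $\vdash_{\SFour} \alpha \leftrightarrow \rev{t(\alpha)}$. From the biconditional we extract $\vdash_{\SFour} \rev{t(\alpha)} \rightarrow \alpha$ by classical propositional reasoning, which is available in \SFour. One application of modus ponens then yields $\vdash_{\SFour}\alpha$.

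The deduction itself has no real obstacle; all the substance lies in the two propositions. To make the plan concrete, here is how I expect they would be discharged.

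\emph{Reverse translation.} I would define $\rev{-}$ to commute with the classical connectives, to send $\phi \magicwand \psi$ to $\Box(\rev{\phi} \rightarrow \rev{\psi})$, to send $\phi * \psi$ to $\rev{\phi} \land \rev{\psi}$, and to send $I$ to $\top$. Then $\rev{\top \magicwand t(\alpha)} = \Box(\top \rightarrow \rev{t(\alpha)})$, which is \SFour-equivalent to $\Box\rev{t(\alpha)}$. Cancellation then follows by induction on $\alpha$, using that \SFour-provable equivalence is a congruence, including under $\Box$ via Nec and K.

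\emph{Soundness of $\rev{-}$.} This goes by induction on \BBI-derivations:
\begin{itemize}
\item B1--B4 become propositional tautologies about $\land$ and $\top$.
\item R4 translates to monotonicity of $\land$.
\item For R3, from $\vdash_{\SFour} (a \land b) \rightarrow c$ we need $\vdash_{\SFour} a \rightarrow \Box(b \rightarrow c)$.
\end{itemize}
The R3 case is the delicate step, and it is where I expect the main obstacle. With the naive clause above, R3 fails unless $a$ is boxed. I would therefore first try translating $*$ as $\Box\rev{\phi} \land \Box\rev{\psi}$, or translating every \BBI formula into boxed form, so that R2 and R3 reduce to \SFour facts such as $\Box a \rightarrow \Box\Box a$ together with K. I would then check that cancellation still holds up to equivalence, since $t(\alpha)$ contains no $*$.
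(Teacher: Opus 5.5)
Your deduction of the theorem from Propositions~\ref{Prop:rev_sound} and~\ref{Prop:cancel} is exactly the paper's proof, and as a derivation from those two propositions it is valid. The genuine gap is that, for reverse translations of the kind you sketch (and of the kind the paper uses), the two propositions cannot both hold. So the argument does not actually establish the theorem.

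The obstruction is $I$, not $*$. \BBI proves $(I\magicwand p)\leftrightarrow p$. The direction $p\to(I\magicwand p)$ is R3 applied to $(p*I)\to p$, which follows from B3 and B2. The converse is R2 applied to $(I\magicwand p)\to(I\magicwand p)$, giving $((I\magicwand p)*I)\to p$, chained with B1 and B3. Now take any sound $\rev{-}$ that respects $\leftrightarrow$ and has $\rev{p}=p$. Soundness then gives $\vdash_{\SFour}\rev{I\magicwand p}\leftrightarrow p$. Suppose also that $\rev{\phi\magicwand\psi}$ is built from $\rev{\phi},\rev{\psi}$ by a fixed context, and that $\rev{I}$ is \SFour-equivalent to $\rev{\top}$; you and the paper both set $\rev{I}=\top$. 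Then $\rev{t(\Box p)}=\rev{\top\magicwand p}$ is equivalent to $\rev{I\magicwand p}$, hence to $p$, whereas cancellation demands $\Box p$. Since $\nvdash_{\SFour}p\to\Box p$, soundness and cancellation are incompatible for any such translation.

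Concretely, your clause $\Box(\rev{\phi}\to\rev{\psi})$ sends the \BBI theorem $p\to(I\magicwand p)$ to $p\to\Box(\top\to p)$, which \SFour does not prove. Your repairs aim at $*$ and do not reach this case. Boxing the conjuncts of $*$ leaves $\rev{I\magicwand p}$ unchanged, and it additionally breaks B1, which becomes $\rev{\phi}\to\Box\top\land\Box\rev{\phi}$. Boxing every formula still makes $\rev{I}$ and $\rev{\top}$ equivalent, so the argument above applies unchanged.

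The paper makes the opposite trade-off. Its clause $\rev{\alpha}\to\Box\rev{\beta}\lor\rev{\beta}$ is, by T, \SFour-equivalent to $\rev{\alpha}\to\rev{\beta}$, so Proposition~\ref{Prop:rev_sound} is easy. But then $\rev{t(\Box\beta)}\leftrightarrow\beta$, and the step ``$\vdash_{\SFour}\Box\beta\lor\beta\leftrightarrow\Box\beta$'' in its proof of Proposition~\ref{Prop:cancel} would require $\beta\to\Box\beta$. So taking both propositions as given rests on a statement that already fails at $\alpha=\Box p$.

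A G\"odel-style proof would need a reverse translation in which $\rev{I}$ is not equivalent to $\top$. Neither your sketch nor the paper supplies one, so as things stand the theorem is secured only by the semantic argument of Galmiche and Larchey-Wendling.
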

\begin{proof}If $\vdash_{\BBI}t(\alpha)$ then, by
Proposition~\ref{Prop:rev_sound}, $\vdash_{\SFour}\rev{t(\alpha)}$. But by Proposition~\ref{Prop:cancel}, we have $\vdash_{\SFour}\rev{t(\alpha)}\leftrightarrow\alpha$, hence the result.
\end{proof}

So far things have proceeded \`a la G\"odel, but as our logics of interest differ, we require a novel reverse translation: 

\begin{definition}[Reverse Translation]\label{Def:Reverse_Translation}
The translation $\rev{-}$ from the language of \BBI into the language of \SFour is:
\begin{description}
    \item[1.] $[\alpha \magicwand \beta]' =_{Def}[\alpha]' \rightarrow \Box[\beta]'\lor [\beta]'$.
    \item[2.]$[\alpha * \beta]' =_{Def}[\alpha]' \land [\beta]'$.
    \item[3.]$[I]'=_{Def}\top$. 
    \item[4.]$[\neg \alpha]'=_{Def}\neg [\alpha]'$ 
    \item[5.]$[\alpha \circ \beta]'=_{Def}[\alpha]'\circ[\beta]'$ for $\circ \in \{\land, \lor, \rightarrow\}$.
    \item[6.]For propositional variables $p$, $[p]'=_{Def}p$ 
\end{description}
    
\end{definition}

\begin{proof}[Proof of Proposition~\ref{Prop:rev_sound}]
    By induction on the derivation in \BBI. First, axioms of \BBI translate to formulae provable in \SFour. Indeed, as axioms do not contain $\magicwand$ these axioms translate into the language of propositional logic, and are clearly theorems of classical propositional logic:

  \begin{description}

   \item[1.] $[\alpha \rightarrow I*\alpha]' \;=\; [\alpha]' \rightarrow \top\land[\alpha]'$.

    \item[2.] $[I*\alpha \rightarrow \alpha]' \;=\; \top \land [\alpha]' \rightarrow [\alpha]'$.

    \item[3.] $[\alpha*\beta\rightarrow\beta*\alpha]' \;=\; [\alpha]' \land [\beta]'\rightarrow[\beta]' \land [\alpha]'$.

    \item[4.] $[\alpha*(\beta*\gamma) \rightarrow (\alpha *\beta) * \gamma)]' \;=\; [\alpha]' \land ([\beta]' \land [\gamma]') \rightarrow ([\alpha]' \land [\beta]') \land [\gamma]'$.

  \end{description}

    As the induction step, we see that the translations of inference rules of \BBI preserve derivability in \SFour:

\begin{description}

    \item[1.] Rule R1 (\textit{modus ponens}) is shared by \BBI and \SFour.

\item[2.] The translations of rules R2 and R3 correspond to asking that $(\alpha \land \beta)\rightarrow\gamma$ if and only if $\alpha \rightarrow (\beta \rightarrow (\Box \gamma \lor \gamma))$. In one direction:
\begin{description}
    \item[1.] $\vdash_{\SFour}\alpha \land \beta \rightarrow \gamma$.\textit{ Hyp.}
    \item[2.] $\vdash_{\SFour}(\alpha \land \beta \rightarrow \gamma) \rightarrow (\alpha \land \beta \rightarrow \gamma \lor \Box \gamma)$. \textit{Taut.}
    \item[3.] $\vdash_{\SFour}\alpha \land \beta \rightarrow \gamma \lor \Box \gamma$. \textit{[MP] on 1. and 2.}
    \item[4.] $\vdash_{\SFour}(\alpha \land \beta \rightarrow \gamma \lor \Box \gamma) \rightarrow (\alpha \rightarrow (\beta \rightarrow \Box \gamma \lor \gamma))$. \textit{Taut.}
    \item[5.] $\vdash_{\SFour}\alpha \rightarrow (\beta \rightarrow \Box \gamma \lor \gamma)$. \textit{[MP] on 3. and 4.}
    
\end{description}
In the other direction:
\begin{description}
    \item[1.] $\vdash_{\SFour}\alpha \rightarrow(\beta\rightarrow\Box \gamma \lor \gamma)$. \textit{ Hyp.}
    \item[2.] $\vdash_{\SFour}\Box \gamma \rightarrow \gamma$. \textit{ Modal Axiom Scheme \SFour.}
    \item[3.] $\vdash_{\SFour}(\alpha \rightarrow(\beta\rightarrow\gamma \lor \gamma).$ \textit{Classical Logic on 1. and 2.}
    \item[4.] $\vdash_{\SFour}\alpha \rightarrow(\beta\rightarrow \gamma).$ \textit{Classical Logic on 3.}
    \item[5.] $\vdash_{\SFour}((\alpha \rightarrow (\beta \rightarrow \gamma)) \rightarrow (\alpha \land \beta \rightarrow \gamma)$. \textit{Taut.}
    \item[6.] $\vdash_{\SFour}(\alpha \land \beta \rightarrow \gamma).$ \textit{[MP] on 4. and 5.}
\end{description}

    \item[3.] Translation of rule R4 states that from $(\alpha \rightarrow \gamma)$ and $(\beta \rightarrow \delta)$, we can conclude $( \alpha \land \beta)\rightarrow (\gamma \land \delta)$, which follows via classical propositional logic.

\end{description}

\end{proof}

\begin{proof}[Proof of Proposition~\ref{Prop:cancel}] We proceed by induction on the complexity of the formulae of \SFour. We briefly notice how each \SFour-axiom scheme $\alpha$ is \SFour-equivalent to its ``back-and-forth translation'' $\rev{t(\alpha)}$:

    \begin{description}
        \item[1.] $\vdash_{\SFour} (\Box(\alpha \rightarrow \beta) \rightarrow (\Box \alpha \rightarrow \Box \beta)) \leftrightarrow (\Box(\alpha \rightarrow \beta)\lor (\alpha \rightarrow \beta) \rightarrow (\Box \alpha \lor \alpha \rightarrow \Box \beta \lor \beta)).$ 
        \item[2.] $\vdash_{\SFour} (\Box \alpha \rightarrow \alpha) \leftrightarrow ((\Box \alpha \lor \alpha) \rightarrow \alpha).$   
        \item[3.] $\vdash_{\SFour}(\Box \alpha \rightarrow \Box \Box \alpha)\leftrightarrow (\Box(\Box \alpha\lor \alpha)\lor (\Box \alpha \lor \alpha)).$ 
        
    \end{description}
    
    As the basis of the induction, we notice how both $t(-)$ and $\rev{-}$ leave unaltered propositional atoms. Then, for the induction step, we proceed by structural induction on \SFour-formulae:

    \begin{description}

        \item[1.] The form of $\alpha$ is such that $\alpha \equiv \beta \land \gamma$. We want to show that $\vdash_{\SFour}(\beta\land\gamma) \leftrightarrow [t(\beta\land\gamma)]'$. Then, by definition of $t(-)$, $t(\beta \land \gamma)=t(\beta)\land t(\gamma)$. Thus, by definition of $\rev{-}$, $[t(\beta)\land t(\gamma)]'=[t(\beta)]'\land[t(\gamma)]'$. By induction hypothesis, we have $[t(\beta)]'\land[t(\gamma)]'=\beta \land \gamma$, and we conclude that $\vdash_{S4}(\beta \land \gamma) \leftrightarrow (\beta \land \gamma)$. The cases for remaining classical operators are analogous.
        
        \item[2.] The form of $\alpha$ is such that $\alpha \equiv \Box \beta$. We want to show that $\vdash_{\SFour}(\Box\beta) \leftrightarrow [t(\Box\beta)]'$. Then, by definition of $t(-)$, $t(\Box \beta)= \top \magicwand t(\beta)$. Thus, by definition of $\rev{-}$, we have that $[t(\Box \beta)]'= [\top \magicwand t(\beta)]'$. In turn, for definition of $\rev{-}$: $[\top \magicwand t(\beta)]' = [\top]'\rightarrow \Box [t(\beta)]'\lor [t(\beta)]'$. By induction hypothesis and definition of $\rev{-}$, then: $[\top]'\rightarrow \Box [t(\beta)]'\lor [t(\beta)]'=\top \rightarrow \Box \beta \lor \beta$. For classical propositional logic: $\vdash_{\SFour}(\top \rightarrow \Box \beta \lor \beta) \leftrightarrow (\Box \beta \lor \beta)$. Ultimately, for modal axiom $\Box \beta \rightarrow \beta$ of \SFour, we conclude that: $\vdash_{\SFour}\Box \beta \lor \beta \leftrightarrow \Box \beta$.
        
    \end{description}

\end{proof}

Having proved both lemmas, faithfulness of the translation $t(-)$ follows as stated in Theorem~\ref{Theor:Faithfulness}. Thus, we have proven the embedding.

\section{Derivability Under Assumptions}
\label{Sec:Assumptions}

While the calculi of Section~\ref{Sec:prelim} act on formulae alone,
it is usual to reason in the presence of assumptions,
and hence worthwhile to confirm the stability of our embedding with respect to derivability under assumptions.
We take our notion of \SFour-reasoning under assumptions from a yet to be published book by Negri~\cite{SaraNegriNewBook}:

\begin{definition}[Axiomatic \SFour with Reasoning Under Assumptions] A formula $\alpha$ is derivable from a multiset of assumptions $\Gamma$ in the system \textbf{HS4}, written $\Gamma \vdash_{\textbf{HS4}} \alpha$, if $\alpha$ is in $\Gamma$ or follows from \SFour axiom schemes, and applications of the rules of inference \textit{modus ponens} and \textit{necessitation}, where necessitation is applicable only to derivations without assumptions:
\begin{enumerate}
\begin{multicols}{2}

\item \AxiomC{$\alpha \in \Gamma $}
\UnaryInfC{$\Gamma \vdash \alpha$}

\DisplayProof

\vspace{0.5em}
\item \AxiomC{$\alpha \in Axiom $}
\UnaryInfC{$\Gamma \vdash \alpha$}

\DisplayProof

\item \AxiomC{$\Gamma \vdash \alpha$}
\AxiomC{$\Delta \vdash \alpha \rightarrow \beta$}

\BinaryInfC{$\Gamma, \Delta \vdash \beta$}

\DisplayProof

\vspace{0.5em}
\item \AxiomC{$\vdash \alpha$}
\UnaryInfC{$\Gamma \vdash \Box \alpha$}

\DisplayProof

\end{multicols}

\end{enumerate}
    
\end{definition}

Unlike most classical sequent calculi, we do not use multisets of formulae to the right of $\vdash$. We observe some properties of \textbf{HS4}, proved in more detail by Negri~\cite{SaraNegriNewBook}:

\begin{lemma}\label{lemma_HS4_a}
 If $\vdash_{\SFour} \alpha$, then $\vdash_{\textbf{HS4}} \alpha$.   
\end{lemma}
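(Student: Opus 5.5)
The plan is to argue by induction on the length of a derivation of $\alpha$ in the Hilbert system \SFour of Definition~\ref{Def:S4}. The claim is that every formula occurring in such a derivation is derivable in \textbf{HS4} with the empty multiset of assumptions, i.e.\ $\vdash_{\textbf{HS4}}\beta$. Since \textbf{HS4} has exactly the same axiom schemes as \SFour (classical propositional tautologies together with K, T and 4), each \SFour rule only needs to be simulated by an \textbf{HS4} rule instance with empty assumptions.

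The base case covers the last formula being an axiom of \SFour. Rule 2 of \textbf{HS4}, instantiated with $\Gamma$ empty, gives $\vdash_{\textbf{HS4}}\alpha$ directly.

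For the inductive step there are two cases.
\begin{description}
\item[Modus ponens.] If $\alpha$ is obtained from $\beta$ and $\beta\rightarrow\alpha$, the induction hypothesis gives $\vdash_{\textbf{HS4}}\beta$ and $\vdash_{\textbf{HS4}}\beta\rightarrow\alpha$. Rule 3 with $\Gamma$ and $\Delta$ both empty then yields $\vdash_{\textbf{HS4}}\alpha$, since the union of two empty multisets is empty.
\item[Necessitation.] If $\alpha\equiv\Box\beta$ is obtained from $\beta$, the induction hypothesis gives $\vdash_{\textbf{HS4}}\beta$ with no assumptions. This is exactly the side condition of rule 4, so taking $\Gamma$ empty we obtain $\vdash_{\textbf{HS4}}\Box\beta$.
\end{description}

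There is no real obstacle here. The only point needing care is the restriction that necessitation in \textbf{HS4} applies only to assumption-free derivations. The induction must therefore be stated so that every intermediate conclusion is assumption-free, which it is, because \SFour derivations carry no assumptions and rules 2 and 3 with empty contexts never introduce any.
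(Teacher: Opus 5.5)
Your proposal is correct and follows the same route as the paper, which also argues by induction on the \SFour derivation. The paper gives only a one-line sketch, and your three cases (axioms via rule 2, modus ponens via rule 3, necessitation via rule 4, all with empty assumptions) fill it in, including the one point that needs care: necessitation may only be applied to assumption-free derivations.
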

\begin{proof}[Proof Sketch:] By induction on the derivation in \textbf{S4} (ref. Definition~\ref{Def:S4}).
\end{proof}

We define the formula translation $\bigwedge$ from multisets of \SFour-formulae to \SFour-formulae as usual, by mapping ``$,$'' to $\land$ and the empty multiset to $\top$.

\begin{lemma}\label{lemma_HS4_b}
     $\Gamma \vdash_{\textbf{HS4}} \alpha$ iff $\vdash_{\SFour}\bigwedge\Gamma \rightarrow \alpha$. 
\end{lemma}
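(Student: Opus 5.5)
We prove the two directions separately: the left-to-right direction by induction on the \textbf{HS4}-derivation of $\Gamma \vdash \alpha$, and the right-to-left direction by a short direct derivation that combines Lemma~\ref{lemma_HS4_a} with the assumption rule and modus ponens.

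\emph{Right to left.} Suppose $\vdash_{\SFour}\bigwedge\Gamma \rightarrow \alpha$. By Lemma~\ref{lemma_HS4_a}, $\vdash_{\textbf{HS4}}\bigwedge\Gamma\rightarrow\alpha$. Let $\Gamma=\gamma_1,\dots,\gamma_n$.
\begin{enumerate}
\item Each $\gamma_i$ is derivable from the singleton multiset $\{\gamma_i\}$ by the assumption rule.
\item Let $\tau$ be the tautology $\gamma_1\rightarrow(\gamma_2\rightarrow(\cdots\rightarrow \gamma_1\land\dots\land\gamma_n))$. It is an instance of classical propositional axioms (or derivable from them without assumptions), so $\vdash_{\textbf{HS4}}\tau$.
\item Successive applications of the modus ponens rule to $\tau$ and the derivations $\{\gamma_i\}\vdash\gamma_i$ yield $\gamma_1,\dots,\gamma_n\vdash\bigwedge\Gamma$. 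Here we use that modus ponens merges contexts as $\Gamma,\Delta$.
\item A final modus ponens with $\vdash\bigwedge\Gamma\rightarrow\alpha$ gives $\Gamma\vdash\alpha$.
\end{enumerate}
If $\Gamma$ is empty, $\bigwedge\Gamma=\top$ and we simply detach $\top$.

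\emph{Left to right.} We induct on the \textbf{HS4}-derivation of $\Gamma\vdash\alpha$.
\begin{itemize}
\item \textbf{Assumption.} If $\alpha\in\Gamma$, then $\bigwedge\Gamma\rightarrow\alpha$ is a tautology (a conjunction implies each conjunct), hence \SFour-provable.
\item \textbf{Axiom.} If $\alpha$ is an axiom, then $\vdash_{\SFour}\alpha$, and weakening by the tautology $\alpha\rightarrow(\bigwedge\Gamma\rightarrow\alpha)$ gives the claim.
\item \textbf{Necessitation.} Here $\alpha=\Box\beta$ with $\vdash_{\textbf{HS4}}\beta$ from no assumptions. The induction hypothesis gives $\vdash_{\SFour}\top\rightarrow\beta$, hence $\vdash_{\SFour}\beta$. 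By Nec we get $\vdash_{\SFour}\Box\beta$, and we weaken to $\bigwedge\Gamma\rightarrow\Box\beta$ as in the axiom case.
\item \textbf{Modus ponens.} The induction hypotheses give $\vdash_{\SFour}\bigwedge\Gamma\rightarrow\alpha$ and $\vdash_{\SFour}\bigwedge\Delta\rightarrow(\alpha\rightarrow\beta)$. By classical reasoning, $\vdash_{\SFour}\bigwedge\Gamma\land\bigwedge\Delta\rightarrow\beta$. Since $\bigwedge(\Gamma,\Delta)$ is propositionally equivalent to $\bigwedge\Gamma\land\bigwedge\Delta$ (associativity and commutativity of $\land$, $\top$ as unit), we obtain $\vdash_{\SFour}\bigwedge(\Gamma,\Delta)\rightarrow\beta$.
\end{itemize}

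The step needing the most care is the necessitation case. The restriction that necessitation applies only to assumption-free derivations is exactly what makes it go through: without it, from $\bigwedge\Gamma\rightarrow\beta$ we could only reach $\Box\bigwedge\Gamma\rightarrow\Box\beta$, not $\bigwedge\Gamma\rightarrow\Box\beta$. We must also check that the induction hypothesis, specialised to the empty multiset with $\bigwedge\emptyset=\top$, really returns plain \SFour-provability of $\beta$; this follows by modus ponens with $\vdash\top$. The rest is routine propositional bookkeeping about $\bigwedge$ on multisets.
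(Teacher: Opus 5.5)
Your proof is correct and follows essentially the paper's route: induction on the \textbf{HS4} derivation for the forward direction, and Lemma~\ref{lemma_HS4_a} plus detachment for the converse. Your explicit derivation of $\Gamma\vdash\bigwedge\Gamma$ from the curried conjunction tautology fills in the paper's unproved remark that $\Gamma\vdash_{\textbf{HS4}}\alpha$ iff $\bigwedge\Gamma\vdash_{\textbf{HS4}}\alpha$, and it shows that the paper's outer induction on the \SFour derivation is not actually needed, since each of its cases just reapplies Lemma~\ref{lemma_HS4_a}.
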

\begin{proof}[Proof Sketch:] ($\Rightarrow$): by induction on the structure of the \textbf{HS4} derivation. ($\Leftarrow$): by induction on the structure of the \SFour derivation, using the detachment theorem for $\rightarrow$.    
\end{proof}

\begin{observation}[Completeness for HS4] \label{HS4_Completeness}
Completeness follows from the previous results:
first note that $\Gamma \vDash_{\SFour}\alpha$ $\Leftrightarrow$ $\vDash_{\SFour} \bigwedge\Gamma\rightarrow \alpha$ $\Leftrightarrow$ $\vdash_{\SFour}\mathbf{T}(\Gamma) \rightarrow \alpha$.
Then by Lemma~\ref{lemma_HS4_a}, $\vdash_{\textbf{HS4}}\bigwedge\Gamma \rightarrow \alpha$, so by
Lemma~\ref{lemma_HS4_b}, $\Gamma\vdash_{\textbf{HS4}} \alpha$.
\end{observation}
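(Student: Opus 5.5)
The plan is to reduce completeness under assumptions to ordinary Kripke completeness of \SFour for single formulae, and then transport the resulting derivation into \textbf{HS4} using Lemma~\ref{lemma_HS4_a} and Lemma~\ref{lemma_HS4_b}. First I fix what $\Gamma \vDash_{\SFour}\alpha$ means. I take it as the \emph{local} consequence relation: for every pre-order model and every world $w$, if $w \Vdash \gamma$ for all $\gamma \in \Gamma$, then $w \Vdash \alpha$. Since $\Gamma$ is a finite multiset, the forcing clause for $\land$ (together with $w \Vdash \top$ for empty $\Gamma$) gives $w \Vdash \bigwedge\Gamma$ iff $w$ forces every member of $\Gamma$. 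The clause for $\rightarrow$ then yields $\Gamma \vDash_{\SFour}\alpha$ iff $\vDash_{\SFour}\bigwedge\Gamma\rightarrow\alpha$.

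Second, I invoke the standard soundness and completeness of \SFour with respect to pre-orders, recalled in Section~\ref{Sec:prelim}, to pass from $\vDash_{\SFour}\bigwedge\Gamma\rightarrow\alpha$ to $\vdash_{\SFour}\bigwedge\Gamma\rightarrow\alpha$. (The expression $\mathbf{T}(\Gamma)$ in the statement should be read as $\bigwedge\Gamma$.) Third, Lemma~\ref{lemma_HS4_a} gives $\vdash_{\textbf{HS4}}\bigwedge\Gamma\rightarrow\alpha$, an assumption-free \textbf{HS4} derivation. Finally, the right-to-left direction of Lemma~\ref{lemma_HS4_b} gives $\Gamma\vdash_{\textbf{HS4}}\alpha$. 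That direction amounts to deriving each conjunct of $\Gamma$ by the assumption rule, assembling $\bigwedge\Gamma$ with propositional axioms and modus ponens, and detaching.

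The right-to-left direction of Lemma~\ref{lemma_HS4_b} already accepts an \SFour-derivation, so the step through Lemma~\ref{lemma_HS4_a} is not strictly needed. I would keep it only to follow the chain as written in the statement.

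The main obstacle is the first step, namely making sure the semantic consequence is local rather than global. Under the global reading (if $\Gamma$ holds at every world of a model, then $\alpha$ holds at every world), we would have $p \vDash \Box p$. But $p \nvdash_{\textbf{HS4}} \Box p$, because necessitation is restricted to assumption-free derivations. So the equivalence with $\vDash\bigwedge\Gamma\rightarrow\alpha$, and hence the observation, depends on the local reading. I would state that definition explicitly before running the chain of equivalences. The remaining steps are direct citations.
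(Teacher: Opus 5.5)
Your proposal is correct and follows the same chain as the paper. You reduce $\Gamma\vDash_{\SFour}\alpha$ to validity of $\bigwedge\Gamma\rightarrow\alpha$, apply Kripke completeness of \SFour, and then pass through Lemma~\ref{lemma_HS4_a} and Lemma~\ref{lemma_HS4_b}. Your three clarifications are accurate refinements of the paper's terse sketch: $\Gamma\vDash_{\SFour}\alpha$ must be read as local consequence for the first equivalence to hold, $\mathbf{T}(\Gamma)$ means $\bigwedge\Gamma$, and the right-to-left direction of Lemma~\ref{lemma_HS4_b} already takes an \SFour-derivation, so the step through Lemma~\ref{lemma_HS4_a} is redundant.
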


The translation of Definition~\ref{Def:emb} from \SFour-formulae to \BBI-formulae can easily be extended to one from multisets of \SFour-formulae to multisets of \BBI-formulae by applying it pointwise, and indeed this suffices for the major result of this Section.
However it is worth noting that \emph{structural} proof theory for bunched logics typically organise assumptions in a richer way,
in the \emph{bunches} that give these logics their name~\cite{pym_semantics_2002}.
We will therefore also use this Section to develop the theory of derivability under bunches of assumptions.
In particular, bunches will allow us to state a version of the deduction theorem for $\magicwand$, which has been stated without proof for the positive fragment of \BBI~\cite{deduction_theorem} but not, to our knowledge, proved anywhere in the literature, so we will present a proof as a small contribution of this paper.

\begin{definition}\emph{Bunches} are generated by the grammar:
\[
  \Gamma \;::=\; \alpha \quad|\quad \varnothing_{Add} \quad|\quad \varnothing_{Mult} \quad|\quad (\Gamma ;\Gamma)
  \quad|\quad (\Gamma , \Gamma).
\]
where $\alpha$ ranges over \BBI-formulas.
We write $\Gamma[\Delta]$, and refer to $\Delta$ as a sub-bunch of $\Gamma$, for a bunch $\Gamma$ in which $\Delta$ appears as a sub-tree.
We then work modulo an equivalence $\equiv$ on bunches generated by the following equations:
\begin{itemize}
    \item[1.] Commutative monoid equations for $\varnothing_{Add}$ and ``$;$''.
    \item[2.] Commutative monoid equations for $\varnothing_{Mult}$ and ``$,$''.
    \item[3.] Congruence: if $\Delta\equiv\Delta'$, then $\Gamma[\Delta] \equiv \Gamma[\Delta']$.
\end{itemize}
\end{definition}

Bunches are therefore trees with formulae as leaves, and internal nodes labelled ``$,$'' or ``$;$''.
We may consider multisets of assumptions as the particular case of bunches in which only formulae and ``$;$'' appear,
or $\varnothing_{Add}$ to indicate the empty multiset.

We present two axiomatic systems for \BBI with assumptions: \textbf{HBBI}, which uses only multisets,
and \textbf{HBBI*}, which uses bunches:

\begin{definition}[Axiomatic \BBI with Reasoning Under Multisets of Assumptions]
     A formula $\alpha$ is derivable from a multiset of assumptions $\Gamma$ in the system \textbf{HBBI}, written $\Gamma \vdash_{\textbf{HBBI}} \alpha$, according to the following rules, where $Axiom$ contains the \BBI axiom schemes.
\begin{enumerate}
\begin{multicols}{2}

\item \AxiomC{$\alpha \in \Gamma $}
\UnaryInfC{$\Gamma \vdash \alpha$}

\DisplayProof

\vspace{0.5em}
\item \AxiomC{$\alpha \in Axiom $}
\UnaryInfC{$\Gamma \vdash \alpha$}

\DisplayProof

\vspace{0.5em}
\item \AxiomC{$\Gamma \vdash \alpha $}
\AxiomC{$\Delta \vdash \alpha \rightarrow \beta$}

\BinaryInfC{$\Gamma; \Delta \vdash \beta$}

\DisplayProof

\item \AxiomC{$ \vdash \alpha \rightarrow \gamma $}
\AxiomC{$ \vdash \beta \rightarrow \delta$}

\BinaryInfC{$\Gamma \vdash (\alpha * \beta) \rightarrow (\gamma * \delta)$}

\DisplayProof

\vspace{0.5em}
\item \AxiomC{$ \Gamma \vdash \alpha \rightarrow (\beta \magicwand \gamma)$}
\UnaryInfC{$\Gamma \vdash (\alpha *\beta) \rightarrow \gamma$}

\DisplayProof

\vspace{0.5em}
\item \AxiomC{$ \Gamma\vdash (\alpha *\beta) \rightarrow \gamma$}
\UnaryInfC{$\Gamma \vdash \alpha \rightarrow (\beta \magicwand \gamma)$}

\DisplayProof

\end{multicols}

\end{enumerate}

\end{definition}

\begin{definition}[Axiomatic \BBI with Reasoning Under Bunches of Assumptions]
     A formula $\alpha$ is derivable from a bunch of assumptions $\Gamma$ in the system \textbf{HBBI*}, written $\Gamma \vdash_{\textbf{HBBI}} \alpha$, according to the following rules, where $Axiom$ contains the \BBI axiom schemes.
\begin{enumerate}
\begin{multicols}{2}

\item \AxiomC{}
\UnaryInfC{$\alpha \vdash \alpha$}

\DisplayProof

\vspace{0.5em}
\item \AxiomC{$\Gamma[\Gamma'] \vdash \alpha$}
\UnaryInfC{$\Gamma [\Gamma'; \Gamma''] \vdash \alpha$}

\DisplayProof

\vspace{0.5em}
\item \AxiomC{$\Gamma[\Gamma'; \Gamma' ]\vdash \alpha$}
\UnaryInfC{$\Gamma[\Gamma']\vdash \alpha$}

\DisplayProof

\vspace{0.5em}
\item \AxiomC{$\alpha \in Axiom $}
\UnaryInfC{$\Gamma \vdash \alpha$}

\DisplayProof

\vspace{0.5em}
\item \AxiomC{$\Gamma \vdash \alpha $}
\AxiomC{$\Delta \vdash \alpha \rightarrow \beta$}

\BinaryInfC{$\Gamma; \Delta \vdash \beta$}

\DisplayProof

\item \AxiomC{$ \vdash \alpha \rightarrow \gamma $}
\AxiomC{$ \vdash \beta \rightarrow \delta$}

\BinaryInfC{$\Gamma \vdash (\alpha * \beta) \rightarrow (\gamma * \delta)$}

\DisplayProof

\vspace{0.5em}
\item \AxiomC{$ \Gamma \vdash \alpha \rightarrow (\beta \magicwand \gamma)$}
\UnaryInfC{$\Gamma \vdash (\alpha *\beta) \rightarrow \gamma$}

\DisplayProof

\vspace{0.5em}
\item \AxiomC{$\Gamma \vdash (\alpha *\beta) \rightarrow \gamma$}
\UnaryInfC{$\Gamma \vdash \alpha \rightarrow (\beta \magicwand \gamma)$}

\DisplayProof

\vspace{0.5em}
\item \AxiomC{$\Gamma \vdash \alpha $}
\AxiomC{$\Delta \vdash \beta$}

\BinaryInfC{$\Gamma, \Delta \vdash \alpha*\beta$}
\DisplayProof

\end{multicols}

\end{enumerate}

\end{definition}

\begin{lemma}\label{lemma_HBBI_a} If $\vdash_{\BBI}\alpha$, then $\vdash_{\textbf{HBBI}/\textbf{HBBI*}} \alpha$.
    
\end{lemma}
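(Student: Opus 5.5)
We argue by induction on the length (equivalently, the height) of a \BBI-derivation of $\alpha$. We show that every formula appearing in a \BBI-derivation is derivable from the empty multiset in \textbf{HBBI}, and from the empty bunch in \textbf{HBBI*}. Here ``$\vdash \alpha$'' in the hypotheses of the rules of both systems is read as derivability with empty assumptions. For \textbf{HBBI} this is the empty multiset $\varnothing_{Add}$. For \textbf{HBBI*} we take it to be an empty bunch; since $\varnothing_{Add}$ is the unit of ``$;$'', weakening lets us pass to any other empty context if needed. The same argument works verbatim for both systems, because every rule we use has an identical counterpart in each.

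\emph{Base case.} If $\alpha$ is an instance of an axiom scheme, whether classical propositional or one of B1--B4, then $\alpha \in Axiom$. Rule 2 of \textbf{HBBI} (rule 4 of \textbf{HBBI*}) then yields $\vdash \alpha$ with empty context.

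\emph{Inductive step.} We consider the last rule applied.
\begin{itemize}
\item For R1 (MP), the induction hypothesis gives $\vdash \alpha$ and $\vdash \alpha \rightarrow \beta$. The modus ponens rule then produces $\varnothing;\varnothing \vdash \beta$, and by the commutative monoid equation for $\varnothing_{Add}$ and ``$;$'' this is just $\vdash \beta$.
\item For R2 and R3, we apply the rules $\magicwand$-out and $\magicwand$-in with $\Gamma$ empty, directly to the induction hypothesis.
\item For R4, the hypotheses $\vdash \alpha\rightarrow\gamma$ and $\vdash\beta\rightarrow\delta$ are exactly the premises of the $*$-rule, which derives $(\alpha*\beta)\rightarrow(\gamma*\delta)$ in any context, in particular the empty one.
\end{itemize}

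The only point needing care is the MP step: the conclusion context $\Gamma;\Delta$ must collapse to the empty context. We take this from the equivalence $\equiv$, which treats $\varnothing_{Add}$ as the unit for ``$;$''. For \textbf{HBBI}, the union of two empty multisets is empty, so the step is immediate. No further obstacle is expected, since the assumption-free fragments of both systems mirror \BBI rule by rule.
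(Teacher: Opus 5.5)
Your proof is correct and matches the paper's argument: induction on the \BBI derivation, with axioms handled by the axiom rule in the empty context, R1 by the additive modus ponens rule with both contexts void, R2/R3 by the $\magicwand$ import/export rules with $\Gamma$ empty, and R4 by the $*$-rule. Your explicit remark that $\varnothing_{Add};\varnothing_{Add} \equiv \varnothing_{Add}$ makes the modus ponens step slightly more careful than the paper's sketch, which leaves this collapse implicit.
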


\begin{proof}[Proof Sketch:] We proceed by induction on the structure of the derivation of $\alpha$ in \BBI. For the full detail of the proof, we refer to the Appendix.    
\end{proof}

\begin{definition}[Formula Translation for Bunches]\label{Formula_Translation}
The translation $\mathcal{T}(-)$ from bunches to \BBI formulae is:
\begin{description}
    \item[1.] $\mathcal{T}(\alpha) =_{Def} \alpha $.
    \item[2.] $\mathcal{T}(\varnothing_{Add}) =_{Def} \top $.
    \item[3.] $\mathcal{T}(\varnothing_{Mult}) =_{Def} I $.
    \item[4.] $\mathcal{T}(\Gamma;\Gamma') =_{Def} \mathcal{T}(\Gamma)\land\mathcal{T}(\Gamma') $.
    \item[5.] $\mathcal{T}(\Gamma,\Gamma') =_{Def} \mathcal{T}(\Gamma)*\mathcal{T}(\Gamma') $.
\end{description}
This restricts to a translation from multisets in the evident way.
\end{definition}

\begin{lemma}\label{HBBI_Completeness_b1} If $\Gamma \vdash_{\textbf{HBBI}/\textbf{HBBI*}} \alpha$, then $\vdash_{\BBI} \mathcal{T}(\Gamma) \rightarrow \alpha$. 
\end{lemma}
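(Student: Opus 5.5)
We prove the statement by induction on the derivation of $\Gamma \vdash \alpha$ in \textbf{HBBI} or \textbf{HBBI*}, treating the two systems together since \textbf{HBBI} is essentially the multiset fragment of \textbf{HBBI*}. Before the induction we record two facts. First, $\equiv$-equivalent bunches have \BBI-provably equivalent translations. The monoid equations for ``$;$'' are classical tautologies about $\land$ and $\top$. Those for ``$,$'' follow from B1--B4; for the swapped form of associativity we combine B3 with B4. Congruence is proved by induction on the context $\Gamma[-]$: the $;$ case is classical, and the $,$ case uses rule R4 to lift an implication $\mathcal{T}(\Delta)\rightarrow\mathcal{T}(\Delta')$ through $*$. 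Second, \emph{monotonicity in context}: if $\vdash_{\BBI}\mathcal{T}(\Delta)\rightarrow\mathcal{T}(\Delta')$ then $\vdash_{\BBI}\mathcal{T}(\Gamma[\Delta])\rightarrow\mathcal{T}(\Gamma[\Delta'])$. This is proved by induction on the context, using classical logic for $\land$ and R4 (with the identity $\vdash\chi\rightarrow\chi$ on the other side) for $*$.

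The cases are then as follows. For the identity axiom and for $\alpha\in\Gamma$, we need $\vdash\mathcal{T}(\Gamma)\rightarrow\alpha$; in \textbf{HBBI} the multiset translates to a conjunction containing $\alpha$, so this is a tautology. For the axiom rule, from $\vdash\alpha$ we obtain $\vdash\mathcal{T}(\Gamma)\rightarrow\alpha$ by the tautology $\alpha\rightarrow(\chi\rightarrow\alpha)$ and MP. For modus ponens, from $\vdash\mathcal{T}(\Gamma)\rightarrow\alpha$ and $\vdash\mathcal{T}(\Delta)\rightarrow(\alpha\rightarrow\beta)$ we get $\vdash\mathcal{T}(\Gamma)\land\mathcal{T}(\Delta)\rightarrow\beta$ classically. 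For the $*$-monotonicity rule, the premises are derivations from the empty bunch, so the induction hypothesis gives $\vdash\top\rightarrow(\alpha\rightarrow\gamma)$, hence $\vdash\alpha\rightarrow\gamma$; similarly $\vdash\beta\rightarrow\delta$. R4 then gives the conclusion, and we weaken by $\mathcal{T}(\Gamma)$.

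For weakening, $\mathcal{T}(\Gamma';\Gamma'')\rightarrow\mathcal{T}(\Gamma')$ is a tautology, so monotonicity in context turns $\mathcal{T}(\Gamma[\Gamma';\Gamma''])$ into $\mathcal{T}(\Gamma[\Gamma'])$. For contraction we use the tautology $\mathcal{T}(\Gamma')\rightarrow\mathcal{T}(\Gamma')\land\mathcal{T}(\Gamma')$ in the same way. For the $*$-introduction rule, from $\vdash\mathcal{T}(\Gamma)\rightarrow\alpha$ and $\vdash\mathcal{T}(\Delta)\rightarrow\beta$, R4 gives $\vdash\mathcal{T}(\Gamma)*\mathcal{T}(\Delta)\rightarrow\alpha*\beta$.

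The main obstacle is the two $\magicwand$ rules, R5 and R6 of the assumption systems. These keep the context $\Gamma$ while R2/R3 of \BBI have no context. From $\vdash\mathcal{T}(\Gamma)\rightarrow(\alpha\rightarrow(\beta\magicwand\gamma))$ we must derive $\vdash\mathcal{T}(\Gamma)\rightarrow((\alpha*\beta)\rightarrow\gamma)$. Currying the context into $\alpha$ does not directly work, since $(\mathcal{T}(\Gamma)\land\alpha)*\beta$ is not $\alpha*\beta$ conjoined with $\mathcal{T}(\Gamma)$. The first route I would try is to interpret these rules, as they must be for soundness, with $\Gamma$ effectively contributing only theorems. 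This means restricting to derivations whose context is empty, or showing that uses of R5/R6 with a nonempty context can be permuted to act on a context-free subderivation and then be weakened. If the rule is genuinely context-sensitive as written, I would instead check whether the case goes through via an auxiliary \BBI-lemma combining $\chi\rightarrow(\alpha\rightarrow(\beta\magicwand\gamma))$ with B-axioms. Otherwise I would flag that the lemma needs the side condition that $\Gamma$ be empty for these rules, analogous to the restriction on necessitation in \textbf{HS4}.
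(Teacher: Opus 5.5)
Your non-$\magicwand$ cases are correct and follow the paper's induction. Your monotonicity-in-context lemma (via R4) replaces the paper's separate weakening and contraction lemmas, which curry the ``$,$'' context through R2/R3. Your check that $\equiv$-equivalent bunches have provably equivalent translations covers something the paper leaves implicit.

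The gap is the two $\magicwand$ rules, and here you should stop hedging: \emph{the lemma as stated is false}. So neither the permutation route nor an auxiliary \textbf{BBI}-lemma can succeed.

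The paper closes these cases by asserting that $((\alpha_1*\alpha_2)\to\alpha_3)\leftrightarrow(\alpha_1\to(\alpha_2\magicwand\alpha_3))$ is \textbf{BBI}-provable ``through R2 and R3''. But R2/R3 only say that one side is a theorem iff the other is. The biconditional itself is not valid, and with a nonempty context the lemma needs exactly that biconditional.

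Concretely, rule 1 gives $(\top*\top)\to r\vdash_{\textbf{HBBI}}(\top*\top)\to r$. Rule 6 then gives $(\top*\top)\to r\vdash_{\textbf{HBBI}}\top\to(\top\magicwand r)$; rules 1 and 8 do the same in \textbf{HBBI*}. The lemma would therefore yield $\vdash_{\mathbf{BBI}}((\top*\top)\to r)\to(\top\to(\top\magicwand r))$.

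To refute this, take the one-cell heap frame $M=\{e,a\}$ with $e,x\triangleright x$ and $x,e\triangleright x$ for both $x$, $a,a$ undefined, and $v(r)=\{e\}$. Every world forces $\top*\top$ (via $e,w\triangleright w$), and $e\Vdash r$. But $e,a\triangleright a$ with $a\not\Vdash r$, so $e\not\Vdash\top\magicwand r$. By soundness of \textbf{BBI} for these frames, the formula is unprovable.

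Rule 5 fails in the same way. From the assumption $\top\to(q\magicwand r)$, rules 1 and 5 derive $(\top*q)\to r$. This is refuted at $a$ with $v(q)=\{a\}$ and $v(r)=\emptyset$: $q\magicwand r$ holds vacuously at $a$ because $a,a$ is undefined, while $a\Vdash\top*q$ and $a\not\Vdash r$.

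So your last option is the right one. Require the premise of each $\magicwand$-rule to be derived from the empty context, exactly as for rule 4 (and for necessitation in \textbf{HS4}), while the conclusion may carry an arbitrary $\Gamma$. The case is then immediate. The induction hypothesis gives $\vdash\top\to(\alpha\to(\beta\magicwand\gamma))$, hence $\vdash\alpha\to(\beta\magicwand\gamma)$. R2 gives $\vdash(\alpha*\beta)\to\gamma$, and weakening adds $\mathcal{T}(\Gamma)$; the other rule is symmetric using R3.

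Be aware that the paper's proofs of the deduction theorems apply these rules with nonempty contexts. The repair therefore has consequences downstream of this lemma.
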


\begin{proof}[Proof Sketch:]

By induction on the structure of the \textbf{HBBI}/\textbf{HBBI*} derivation. For full details we refer to the Appendix.
\end{proof}

The converse of this lemma, and hence completeness of \textbf{HBBI}/\textbf{HBBI*}, 
requires deduction and detachment theorems.

\begin{theorem}[Deduction Theorems]\label{Deduction_theorem} \
\begin{enumerate}
    \item[(a)] If $\alpha; \Gamma \vdash_{\textbf{HBBI}} \beta$, then $\Gamma \vdash_{\textbf{HBBI}} \alpha \rightarrow \beta$.
    \item[(b)] If $\alpha; \Gamma \vdash_{\textbf{HBBI*}} \beta$, then $\Gamma \vdash_{\textbf{HBBI*}} \alpha \rightarrow \beta$.
    \item[(c)] If $\alpha, \Gamma \vdash_{\textbf{HBBI*}} \beta$, then $\Gamma \vdash_{\textbf{HBBI*}} \alpha \magicwand \beta$.
\end{enumerate}
    
\end{theorem}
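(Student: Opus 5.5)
All three parts go by induction on the height of the derivation of the premise. In each case we show that the discharged formula $\alpha$ can be pushed into the conclusion as an antecedent, using \BBI-theorems imported into the assumption calculi by Lemma~\ref{lemma_HBBI_a}, and the fact that axioms and closed derivations may be placed in any context.

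For (a), we follow the classical Hilbert-style argument, case by case.
\begin{itemize}
\item If $\beta$ is an assumption, either $\beta=\alpha$, and we use the tautology $\alpha\rightarrow\alpha$; or $\beta\in\Gamma$, and we apply MP to $\beta$ and the tautology $\beta\rightarrow(\alpha\rightarrow\beta)$.
\item If $\beta$ is an axiom, we weaken in the same way.
\item For MP, the multiset $\alpha;\Gamma$ is split as $\Gamma_1;\Gamma_2$, and $\alpha$ lies in exactly one side. By the induction hypothesis we obtain $\alpha\rightarrow\gamma$ or $\alpha\rightarrow(\gamma\rightarrow\beta)$ on that side. We then combine with the other premise using the tautology $(\alpha\rightarrow(\gamma\rightarrow\beta))\rightarrow((\alpha\rightarrow\gamma)\rightarrow(\alpha\rightarrow\beta))$, or with $\gamma\rightarrow(\alpha\rightarrow\gamma)$ in the other case.
\item Rule 4 has closed premises, so its conclusion is derivable from $\Gamma$ directly, and we weaken as for axioms.
\end{itemize}
Part (b) is the same argument transported to bunches. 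The only new cases are the structural rules 2 and 3, where the induction hypothesis applies after locating $\alpha$ in the additive context using the equivalence $\equiv$, and rule 9, where $\alpha;\Gamma$ must have the form $\alpha;\Gamma'$ with $\Gamma'$ itself the comma-bunch, or else $\alpha$ sits inside one factor.

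For (c), the top-level constructor of the context is ``$,$''. The key \BBI-theorems are $I\rightarrow(\alpha\magicwand\alpha)$, obtained from B3 and B2 by R3, and $((\alpha\magicwand\phi)*\psi)\rightarrow(\alpha\magicwand(\phi*\psi))$, obtained by R3 from an evaluation map $(\alpha\magicwand\phi)*\alpha\rightarrow\phi$ together with R4, B3 and B4. These are imported via Lemma~\ref{lemma_HBBI_a}.
\begin{itemize}
\item In the base case $\alpha\vdash\alpha$ we have $\Gamma\equiv\varnothing_{Mult}$. Here we must produce $I$ from $\varnothing_{Mult}$; I expect to do this by showing admissibility of $\varnothing_{Mult}\vdash I$, or by reading the identity axiom modulo $\alpha\equiv\alpha,\varnothing_{Mult}$.
\item For rule 9 with $\alpha,\Gamma\equiv(\Delta_1,\Delta_2)$ and $\alpha$ in $\Delta_1$, the induction hypothesis gives $\Delta_1'\vdash\alpha\magicwand\phi$. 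We then form $(\alpha\magicwand\phi)*\psi$ and apply the second theorem by MP, using $\Gamma\equiv\Gamma;\varnothing_{Add}$.
\item Weakening and contraction act inside a component, and the induction hypothesis commutes with them.
\end{itemize}

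The main obstacle is the case of rules 5, 6 (and 7, 8 in \textbf{HBBI*}), which transform $\rightarrow$-formulae into $\magicwand$-formulae under a \emph{nonempty} context. The induction hypothesis yields $\alpha\rightarrow(\phi\rightarrow(\psi\magicwand\chi))$, but R2 cannot move the outer antecedent $\alpha$ past the multiplicative structure. Indeed, as literally stated, the rules give $p\vdash\top\rightarrow(\top\magicwand p)$ and hence $p\vdash\top\magicwand p$, whereas $p\rightarrow(\top\magicwand p)$ is not \BBI-valid. My plan is therefore first to check whether these rules are intended, like necessitation in \textbf{HS4}, to apply only to closed derivations $\vdash$. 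Under that reading the case reduces to the axiom case, since the conclusion holds from any context. If that is not intended, I would try restricting the statement, for example to contexts in which these rules are used only on assumption-free subderivations.
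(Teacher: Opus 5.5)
Your diagnosis of the $\magicwand$-rules is right, and it is exactly where the paper's own argument breaks. The paper handles rule 5 of \textbf{HBBI} by asserting that $(\alpha\rightarrow(\beta\magicwand\gamma))\rightarrow((\alpha*\beta)\rightarrow\gamma)$ is provable in \BBI. That is false: take $\gamma=\bot$ at a world where $\alpha$ fails but $\alpha*\beta$ holds. You can conclude more than you do. In every rule, each formula occurring in a premise's context also occurs in the conclusion's context. So a derivation from a formula-free context never uses rule 1 and, by induction, proves only \BBI-theorems, even with the unrestricted $\magicwand$-rules. Your derivation of $p\vdash\top\magicwand p$ therefore refutes (a) and (b) as literally stated. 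Some repair is forced, and restricting the $\magicwand$-rules to closed premises is the natural one; after it, those cases do reduce to the axiom case as you say.

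The same invariant defeats your plan for the base case of (c): $\varnothing_{Mult}\vdash I$ is not admissible, since $I$ is not a \BBI-theorem. Because $p\equiv p,\varnothing_{Mult}$, the derivable sequent $p\vdash p$ instantiates the hypothesis of (c) with $\Gamma=\varnothing_{Mult}$. Yet $\varnothing_{Mult}\vdash p\magicwand p$ is underivable, because $p\magicwand p$ is not a theorem. So (c) also needs an extra unit sequent $\varnothing_{Mult}\vdash I$, which is sound since $\mathcal{T}(\varnothing_{Mult})=I$. Without it, your rule-9 case also fails whenever $\Delta_1\equiv\alpha$, because the induction hypothesis you invoke there is precisely this base case. (The paper's proof of (c) omits the rule-1 case altogether.)

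A second gap is that contraction need not act inside a component: rule 3 may contract the whole bunch. For example, $p,q\vdash(p*q)\land(p*q)$ can be obtained from $(p,q);(p,q)\vdash(p*q)\land(p*q)$. That premise's context is not of the form $p,Z$, so no induction hypothesis applies. In (b), the analogous premise $\alpha;\alpha;\Gamma$ already forces you to discharge several copies of $\alpha$ at once. The paper's proof has the same hole.

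Both problems disappear if you do not induct on derivations. In the repaired calculus, Lemma~\ref{HBBI_Completeness_b1} gives $\vdash_{\BBI}(\alpha*\mathcal{T}(\Gamma))\rightarrow\beta$. Hence $\vdash_{\BBI}\mathcal{T}(\Gamma)\rightarrow(\alpha\magicwand\beta)$ by B3 and R3, and this theorem is available closed in \textbf{HBBI*} by Lemma~\ref{lemma_HBBI_a}. Separately, $\Gamma\vdash\mathcal{T}(\Gamma)$ holds by induction on $\Gamma$, using rule 1, rule 9, modus ponens for $\land$, $\vdash\top$, and the unit sequent. Modus ponens then yields $\Gamma\vdash\alpha\magicwand\beta$, and (a) and (b) follow in the same way.
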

\begin{proof}[Proof Sketch:] By induction on the structures of the \textbf{HBBI}/\textbf{HBBI*} derivations.
We use Lemma \ref{HBBI_Completeness_b1} for (a) and (b), and Lemma \ref{lemma_HBBI_a} for (c). For full proofs, see the Appendix.
\end{proof}

\begin{theorem}[Detachment Theorems] \label{detachment} \
\begin{enumerate}
    \item[(a)] If $\Gamma \vdash_{\textbf{HBBI}}\alpha \rightarrow \beta$, then $\alpha; \Gamma \vdash_{\textbf{HBBI}} \beta$.

    \item[(b)] If $\Gamma \vdash_{\textbf{HBBI*}}\alpha \rightarrow \beta$, then $\alpha; \Gamma \vdash_{\textbf{HBBI*}} \beta$.
    
    \item[(c)] If $\Gamma\vdash_{\textbf{HBBI*}}\alpha \magicwand \beta$, then $\Gamma, \alpha \vdash_{\textbf{HBBI*}} \beta$.\label{detachment_theorem_magicwand}    

\end{enumerate}

\end{theorem}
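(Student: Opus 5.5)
For (a) and (b) I will argue directly with the rules of the systems, without going through \BBI. Suppose $\Gamma \vdash \alpha \rightarrow \beta$ in \textbf{HBBI} (respectively \textbf{HBBI*}).

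For \textbf{HBBI}, the assumption rule gives $\alpha \vdash \alpha$, since $\alpha \in \{\alpha\}$. The modus ponens rule, applied with left premise $\alpha \vdash \alpha$ and right premise $\Gamma \vdash \alpha \rightarrow \beta$, yields $\alpha;\Gamma \vdash \beta$. This is exactly (a).

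For \textbf{HBBI*}, the identity rule supplies $\alpha \vdash \alpha$, and the same instance of modus ponens gives $\alpha;\Gamma \vdash \beta$, which is (b). If $\Gamma$ is $\varnothing_{Add}$, the conclusion is read modulo $\equiv$ as $\alpha \vdash \beta$, and nothing further is needed.

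For (c) modus ponens is unavailable for $\magicwand$, so I will manufacture the step from the rules relating $*$ and $\magicwand$, in three stages.

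First, I produce a closed derivation of the evaluation formula. Take the \BBI-theorem $(\alpha\magicwand\beta)\rightarrow(\alpha\magicwand\beta)$. By rule R2 of \BBI it gives $\vdash_{\BBI} ((\alpha\magicwand\beta) * \alpha)\rightarrow \beta$. By Lemma~\ref{lemma_HBBI_a} this yields $\vdash_{\textbf{HBBI*}} ((\alpha\magicwand\beta)*\alpha)\rightarrow\beta$ with empty assumptions. Alternatively, one can derive the identity with rule 4 and then apply rule 7 of \textbf{HBBI*} directly.

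Second, I combine the hypothesis with an identity axiom. From $\Gamma \vdash \alpha\magicwand\beta$ and $\alpha\vdash\alpha$, rule 9 gives $\Gamma,\alpha \vdash (\alpha\magicwand\beta)*\alpha$.

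Third, modus ponens (rule 5) with the closed derivation from the first stage gives $\Gamma,\alpha;\varnothing_{Add} \vdash \beta$. The final step is to remove the extra additive unit. Under the monoid equations for $;$ with unit $\varnothing_{Add}$, the bunch $(\Gamma,\alpha);\varnothing_{Add}$ is $\equiv$-equivalent to $\Gamma,\alpha$, so the derivation concludes $\Gamma,\alpha\vdash\beta$.

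The main obstacle I expect is exactly this bookkeeping with bunch equivalence and the empty bunch. I need to check that the empty left-hand side ``$\vdash$'' in the premise of rule 5 is $\varnothing_{Add}$, so that it cancels against $;$, rather than $\varnothing_{Mult}$, which would leave a residue. I also need to check that derivability is closed under $\equiv$, which holds because the paper declares that it works modulo $\equiv$. If that convention were not available, I would instead use contraction and weakening (rules 2 and 3) to absorb the unit.

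I also need Lemma~\ref{lemma_HBBI_a} to deliver the evaluation theorem with no assumptions in \textbf{HBBI*}; an induction on the \BBI derivation that starts from the empty bunch does exactly that.
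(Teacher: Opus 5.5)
Your proof is correct and follows the paper's route: (a) and (b) by modus ponens against the identity $\alpha\vdash\alpha$, and (c) by making $\magicwand$-modus ponens admissible via the evaluation formula $((\alpha\magicwand\beta)*\alpha)\rightarrow\beta$, rule 9 and rule 5. The one difference is in your favour. The paper derives the evaluation formula under $\Gamma$ rather than with no assumptions, reaches $(\Gamma,\Delta);\Gamma\vdash\beta$, and then invokes contraction (rule 3); that rule does not apply here, because one copy of $\Gamma$ sits under ``,''. Your closed derivation avoids this step, and $(\Gamma,\alpha);\varnothing_{Add}\equiv\Gamma,\alpha$ finishes the proof; your reading of the empty context as $\varnothing_{Add}$ matches the paper's convention for empty multisets.
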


\begin{proof}[Proof Sketch:] For (a) and (b), simply apply \textit{modus ponens} with $\Gamma \vdash\alpha \rightarrow \beta$ and $\alpha \vdash \alpha$
For (c) we show that a version of \textit{modus ponens} for $\magicwand$ is admissible in \textbf{HBBI*}; for full details we refer to the Appendix.    
\end{proof}

\begin{lemma}\label{HBBI_Completeness_b2}
 If $\vdash_{\BBI} \mathcal{T} (\Gamma) \rightarrow \alpha$, then $\Gamma \vdash_{\textbf{HBBI/HBBI*}}\alpha$.   
\end{lemma}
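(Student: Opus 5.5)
We prove Lemma~\ref{HBBI_Completeness_b2} by detaching the formula translation $\mathcal{T}(\Gamma)$ back into the bunch $\Gamma$, one constructor at a time. First, from $\vdash_{\BBI}\mathcal{T}(\Gamma)\rightarrow\alpha$, Lemma~\ref{lemma_HBBI_a} gives $\vdash_{\textbf{HBBI}/\textbf{HBBI*}}\mathcal{T}(\Gamma)\rightarrow\alpha$ with empty assumptions. Then the detachment theorem (Theorem~\ref{detachment}(a)/(b)) yields $\mathcal{T}(\Gamma)\vdash\alpha$, where the assumption is the single formula $\mathcal{T}(\Gamma)$. It therefore suffices to prove a \emph{cut-like} claim:
\[
\text{if } \mathcal{T}(\Gamma)\vdash \alpha \text{ then } \Gamma\vdash\alpha,
\]
or, more usefully, that $\Gamma\vdash\mathcal{T}(\Gamma)$ together with a way of replacing the assumption $\mathcal{T}(\Gamma)$ by $\Gamma$.

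For \textbf{HBBI}, where $\Gamma$ is a multiset $\gamma_1;\dots;\gamma_n$, we avoid cut altogether. Instead of detaching once, we curry first. Classical reasoning inside \BBI gives $\vdash_{\BBI}\gamma_1\rightarrow(\gamma_2\rightarrow\cdots(\gamma_n\rightarrow\alpha))$, and Lemma~\ref{lemma_HBBI_a} transfers this to \textbf{HBBI}. Applying Theorem~\ref{detachment}(a) $n$ times peels off $\gamma_n,\dots,\gamma_1$ into the context, giving $\gamma_1;\dots;\gamma_n\vdash\alpha$. The empty multiset is handled by $\top\rightarrow\alpha$ and modus ponens with the axiom $\top$. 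Commutativity of ``$;$'' handles the ordering.

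For \textbf{HBBI*} with genuine bunches we proceed by induction on the structure of $\Gamma$. The statement to prove is: for every bunch $\Gamma$ and every $\alpha$, if $\vdash_{\textbf{HBBI*}}\mathcal{T}(\Gamma)\rightarrow\alpha$ then $\Gamma\vdash\alpha$. The cases are as follows.
\begin{itemize}
\item \emph{Formula.} Use detachment directly.
\item \emph{$\varnothing_{Add}$.} Use modus ponens with $\top$.
\item \emph{$\varnothing_{Mult}$.} We need $\varnothing_{Mult}\vdash I$, from which modus ponens gives $\varnothing_{Mult}\vdash\alpha$. To get it, apply the $*$-introduction rule 9 to $\vdash I$ (using $\vdash I\rightarrow\ldots$ is not needed) in the form $\varnothing_{Mult},\varnothing_{Mult}$; simpler still, derive $\varnothing_{Mult}\vdash I$ from $I\vdash I$ by the unit law via $\Delta\equiv\Delta,\varnothing_{Mult}$. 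This case needs care and may require a small auxiliary lemma.
\item \emph{$\Gamma_1;\Gamma_2$.} Curry to $\vdash\mathcal{T}(\Gamma_1)\rightarrow(\mathcal{T}(\Gamma_2)\rightarrow\alpha)$. The induction hypothesis on $\Gamma_1$ gives $\Gamma_1\vdash\mathcal{T}(\Gamma_2)\rightarrow\alpha$. This is not closed, so we strengthen the induction hypothesis to arbitrary contexts: if $\Delta\vdash\mathcal{T}(\Gamma)\rightarrow\alpha$ then $\Gamma;\Delta\vdash\alpha$. With this, apply the hypothesis to $\Gamma_2$ and then to $\Gamma_1$.
\item \emph{$\Gamma_1,\Gamma_2$.} Use the magic wand. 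From $\vdash(\mathcal{T}(\Gamma_1)*\mathcal{T}(\Gamma_2))\rightarrow\alpha$, rule 8 gives $\vdash\mathcal{T}(\Gamma_1)\rightarrow(\mathcal{T}(\Gamma_2)\magicwand\alpha)$. The induction hypothesis gives $\Gamma_1\vdash\mathcal{T}(\Gamma_2)\magicwand\alpha$, and Theorem~\ref{detachment}(c) gives $\Gamma_1,\mathcal{T}(\Gamma_2)\vdash\alpha$.
\end{itemize}

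The main obstacle is the last step of the ``$,$'' case: we must replace the formula leaf $\mathcal{T}(\Gamma_2)$ inside $\Gamma_1,\mathcal{T}(\Gamma_2)$ by the bunch $\Gamma_2$. Doing this naively is a cut/substitution property inside a multiplicative context, and the strengthened hypothesis only handles additive contexts. Two routes are available:
\begin{itemize}
\item Generalise the induction hypothesis to all bunch contexts: if $\Delta[\mathcal{T}(\Gamma)]\vdash\alpha$ then $\Delta[\Gamma]\vdash\alpha$. This would be proved by induction on the derivation of $\Delta[\mathcal{T}(\Gamma)]\vdash\alpha$, checking each rule. The rules that need care are weakening, contraction (where the replaced leaf may be duplicated), and rules 5 and 9, which split the context.
\item Reorder the currying so that only detachment steps are used. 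Curry $\Gamma_2$ first via $*$-commutativity (axiom B3) and the $\magicwand$ rules, then use the induction hypothesis in the generalised form ``if $\Delta\vdash\mathcal{T}(\Gamma)\magicwand\alpha$ then $\Delta,\Gamma\vdash\alpha$'' alongside its additive counterpart. This reduces everything to detachment and avoids cut.
\end{itemize}
I would try the second route first, proving both the additive and the multiplicative forms of the hypothesis simultaneously by induction on $\Gamma$.
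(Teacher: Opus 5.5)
You prove the \textbf{HBBI} case correctly: curry $\mathcal{T}(\Gamma)\rightarrow\alpha$ into $\gamma_1\rightarrow(\cdots(\gamma_n\rightarrow\alpha))$, transfer it with Lemma~\ref{lemma_HBBI_a}, and detach $n$ times. This is more explicit than the paper, which only asserts that $\mathcal{T}(\Gamma)\vdash\alpha$ and $\Gamma\vdash\alpha$ are ``easily shown to be equivalent''. For \textbf{HBBI*} you induct on $\Gamma$, whereas the paper inducts on the \BBI derivation. The paper's route ultimately rests on the same unproved identification of a bunch with its translation; for instance, its R2 case reads the bunch $\tau',\tau''$ off an antecedent $\tau'*\tau''$, which may be a single formula leaf of $\Gamma$.

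Your ``cut'' obstacle, though, is self-inflicted. You already name $\Gamma\vdash\mathcal{T}(\Gamma)$, and nothing needs replacing if you never detach. Prove $\Gamma\vdash\mathcal{T}(\Gamma)$ by induction on $\Gamma$:
\begin{itemize}
\item rule 1 for a formula;
\item $\top$ for $\varnothing_{Add}$;
\item two uses of rule 5 with $\vdash A\rightarrow(B\rightarrow A\land B)$ for $\Gamma_1;\Gamma_2$;
\item rule 9 for $\Gamma_1,\Gamma_2$.
\end{itemize}
Then one application of rule 5 to $\Gamma\vdash\mathcal{T}(\Gamma)$ and $\varnothing_{Add}\vdash\mathcal{T}(\Gamma)\rightarrow\alpha$ gives $\Gamma;\varnothing_{Add}\equiv\Gamma\vdash\alpha$. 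No deduction or detachment theorem is needed.

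Your currying route would not close as described anyway. For $\Gamma_1;\Gamma_2$, the hypothesis $\Delta\vdash(A\land B)\magicwand\alpha$ cannot be curried into separate steps. For $\Gamma_1,\Gamma_2$ with a nonempty additive $\Delta$, rule 8 followed by your two hypotheses yields $(\Gamma_1;\Delta),\Gamma_2$ rather than $(\Gamma_1,\Gamma_2);\Delta$. Both cases are closed exactly by $\Gamma_i\vdash\mathcal{T}(\Gamma_i)$, after which the currying is redundant.

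The genuine gap is the $\varnothing_{Mult}$ case, and it cannot be closed in \textbf{HBBI*} as defined. Neither of your suggestions works. $I$ is not a theorem, so rule 9 has nothing to act on. The unit law only inserts or deletes $\varnothing_{Mult}$ leaves; it never removes the formula leaf $I$ from $I\vdash I$.

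In fact $\varnothing_{Mult}\vdash I$ is underivable:
\begin{itemize}
\item Every formula occurring in a premise's context also occurs in the conclusion's context (contraction only drops a duplicate, and $\equiv$ preserves formula leaves).
\item So in a derivation whose conclusion has a formula-free context, every context is formula-free, and rule 1 never applies.
\item In formula-free contexts, rules 2--9 preserve \BBI-theoremhood. For rule 9, $\vdash\alpha$ and $\vdash\beta$ give $\vdash\alpha*\beta$ via B1 and R4 applied to $I\rightarrow\alpha$ and $\beta\rightarrow\beta$.
\item Hence only \BBI-theorems are derivable from $\varnothing_{Mult}$, and $I$ is not one, since it is false at any resource other than $e$.
\end{itemize}
Since $\mathcal{T}(\varnothing_{Mult})\rightarrow I$ is just $I\rightarrow I$, the lemma is false for \textbf{HBBI*} at $\Gamma=\varnothing_{Mult}$, $\alpha=I$. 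The paper's proof hides the same failure in its claim that $\Gamma$ and $\mathcal{T}(\Gamma)$ are interchangeable as premises.

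You need one of two repairs:
\begin{itemize}
\item Add a unit rule $\varnothing_{Mult}\vdash I$ to \textbf{HBBI*}; then the identity-lemma argument above works for every bunch.
\item Restrict the statement to bunches whose $\varnothing_{Mult}$ leaves can all be removed by $\Gamma,\varnothing_{Mult}\equiv\Gamma$. A $\varnothing_{Mult}$-free representative has a translation that implies $\mathcal{T}(\Gamma)$ by B1, B3 and monotonicity.
\end{itemize}
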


\begin{proof}[Proof Sketch:] $\vdash_{\textbf{HBBI/HBBI*}}\mathcal{T}(\Gamma)\rightarrow \alpha$ by Lemma \ref{lemma_HBBI_a} , so by detachment on $\rightarrow$ we have $\mathcal{T}(\Gamma)\vdash_{\textbf{HBBI/HBBI*}}\alpha$.
In the multiset case this can easily be shown to be equivalent to $\Gamma\vdash_{\textbf{HBBI}}\alpha$.
For the \textbf{HBBI*} result we require an induction on the \BBI derivation, and the deduction and detachment theorems for $\magicwand$. For full details, we refer to the Appendix.
\end{proof}

\begin{observation}
\sloppy Completeness for \textbf{HBBI/HBBI*} with respect to partial non-deterministic monoid semantics follows analogously to \textbf{HS4} in Observation~\ref{HS4_Completeness}.  
\end{observation}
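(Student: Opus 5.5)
The plan is to mirror Observation~\ref{HS4_Completeness} exactly, with the translation $\mathcal{T}(-)$ of Definition~\ref{Formula_Translation} taking the place of $\bigwedge$. First I fix the semantic notion of consequence from a bunch. Given a relational resource model, a resource $r$ forces a bunch $\Gamma$ when $r \Vdash \mathcal{T}(\Gamma)$. We then write $\Gamma \vDash_{\BBI} \alpha$ when every $r$ in every model that forces $\Gamma$ also forces $\alpha$. For multisets, $\mathcal{T}(\Gamma)$ is just $\bigwedge\Gamma$, so this agrees with the additive reading used for \textbf{HS4}. Unfolding the forcing clause for $\rightarrow$ then gives immediately $\Gamma \vDash_{\BBI}\alpha \Leftrightarrow\ \vDash_{\BBI}\mathcal{T}(\Gamma)\rightarrow\alpha$.

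For completeness, I chain the equivalences as for \SFour. Suppose $\Gamma\vDash_{\BBI}\alpha$. Then $\vDash_{\BBI}\mathcal{T}(\Gamma)\rightarrow\alpha$, and by the completeness of \BBI for partial non-deterministic monoid semantics~\cite{galmiche_expressivity_2006} we get $\vdash_{\BBI}\mathcal{T}(\Gamma)\rightarrow\alpha$. Lemma~\ref{HBBI_Completeness_b2} then yields $\Gamma\vdash_{\textbf{HBBI/HBBI*}}\alpha$. Here Lemma~\ref{HBBI_Completeness_b2} plays the role that Lemmas~\ref{lemma_HS4_a} and~\ref{lemma_HS4_b} jointly played in the \SFour case, since it already packages the passage through Lemma~\ref{lemma_HBBI_a} and the detachment theorems. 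For soundness, which makes the statement an equivalence, I run the chain backwards. From $\Gamma\vdash\alpha$, Lemma~\ref{HBBI_Completeness_b1} gives $\vdash_{\BBI}\mathcal{T}(\Gamma)\rightarrow\alpha$, and soundness of \BBI gives $\vDash_{\BBI}\mathcal{T}(\Gamma)\rightarrow\alpha$, hence $\Gamma\vDash_{\BBI}\alpha$.

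The only step needing genuine care, and the one I expect to be the main obstacle, is that derivability in \textbf{HBBI*} is taken modulo the equivalence $\equiv$ on bunches. The semantic notion must therefore be well defined on $\equiv$-classes. I would prove by induction on the generation of $\equiv$ that $\Gamma\equiv\Delta$ implies $\vdash_{\BBI}\mathcal{T}(\Gamma)\leftrightarrow\mathcal{T}(\Delta)$. The cases are as follows.
\begin{itemize}
\item The commutative monoid equations for ``$;$'' and $\varnothing_{Add}$ reduce to classical tautologies about $\land$ and $\top$.
\item The equations for ``$,$'' and $\varnothing_{Mult}$ follow from axioms B1--B4. Associativity in the missing direction is obtained by combining B3 and B4.
\item Congruence follows by induction on the context $\Gamma[-]$. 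For $\land$ this is classical reasoning; for $*$ it uses rule R4 applied to the equivalence and to $\beta\rightarrow\beta$.
\end{itemize}
By soundness of \BBI, equivalent bunches are then forced by exactly the same resources, so $\vDash_{\BBI}$ respects $\equiv$. With this in hand, the chain of equivalences above goes through verbatim for both \textbf{HBBI} and \textbf{HBBI*}.
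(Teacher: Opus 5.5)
Your route is the paper's own: replace $\bigwedge$ by $\mathcal{T}$, pass through completeness of \BBI, and finish with Lemma~\ref{HBBI_Completeness_b2}. So your argument stands or falls with that lemma as stated.

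\textbf{The completeness step fails for \textbf{HBBI*}.} For the multiset system \textbf{HBBI} the argument is fine. There the lemma really holds: Lemma~\ref{lemma_HBBI_a} gives $\vdash\bigwedge\Gamma\to\alpha$, and $\Gamma\vdash\bigwedge\Gamma$ follows from rule 1, the $\land$-axioms and modus ponens. Your $\equiv$-invariance check is also correct, but the real difficulty lies elsewhere. For \textbf{HBBI*} the step ``Lemma~\ref{HBBI_Completeness_b2} yields $\Gamma\vdash\alpha$'' fails.

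Take $\Gamma=\varnothing_{Mult}$ and $\alpha=I$. Under your semantics $\varnothing_{Mult}\vDash I$ holds trivially, and $\vdash_{\BBI}\mathcal{T}(\varnothing_{Mult})\to I$. Yet $\varnothing_{Mult}\nvdash_{\textbf{HBBI*}} I$: no rule introduces $I$. More precisely, by induction on derivations, every formula derivable from a bunch without formula leaves is valid:
\begin{itemize}
\item every sequent above such a root also has a formula-free context, so rule 1 never fires;
\item the structural rules and axioms trivially preserve validity;
\item modus ponens and rules 6--8 preserve validity because the corresponding \BBI rules are sound;
\item rule 9 preserves validity because every $r$ satisfies $e,r\triangleright r$.
\end{itemize}
Since $I$ is not valid, completeness of \textbf{HBBI*} needs an extra rule such as $\varnothing_{Mult}\vdash I$. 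With that rule added, the lemma is best proved directly rather than by the paper's induction on \BBI derivations. Show $\Gamma\vdash\mathcal{T}(\Gamma)$ by induction on $\Gamma$, using rule 1, the $\land$-axioms with modus ponens, rule 9, and the new rule. Then apply modus ponens with $\vdash\mathcal{T}(\Gamma)\to\alpha$ and use $\Gamma;\varnothing_{Add}\equiv\Gamma$.

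\textbf{The added soundness half is false.} This is because Lemma~\ref{HBBI_Completeness_b1} is false. Rules 5--6 of \textbf{HBBI} (7--8 of \textbf{HBBI*}) trade $\alpha\to(\beta\magicwand\gamma)$ for $(\alpha*\beta)\to\gamma$ under an arbitrary context. The equivalence $((\alpha*\beta)\to\gamma)\leftrightarrow(\alpha\to(\beta\magicwand\gamma))$ that the paper invokes to justify this is not a \BBI theorem.

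Concretely, let $\Gamma=\{\top\to(p\magicwand p)\}$. Rule 1 followed by rule 5 gives $\Gamma\vdash_{\textbf{HBBI}}(\top*p)\to p$. Now take the frame on $\mathbb{Z}$ with $a,b\triangleright c$ iff $a+b=c$ and $e=0$, and set $v(p)=\{1\}$. The point $0$ forces $\top\to(p\magicwand p)$, since $0,r\triangleright r'$ forces $r'=r$. It also forces $\top*p$, via $-1,1\triangleright 0$. But it does not force $p$.

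So either drop the soundness claim, or restrict the premises of those rules to the empty context, as rule 4 already does. Completeness survives that restriction: the direct argument above uses them only with empty context, through Lemma~\ref{lemma_HBBI_a}.
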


We now set the bunched system aside to prove the stability of our embedding on reasoning with assumptions.
We write $t(\Gamma)$ for the pointwise application of $t(-)$ to formulae, also replacing ``$,$'' with ``$;$'' in accordance with the usual notation for structural connectives in \BBI.

\begin{theorem}[Embedding Holds Under Assumptions] $\Gamma \vdash_{\textbf{HS4}}\alpha$ iff $t(\Gamma)\vdash_{\textbf{HBBI}}t(\alpha)$.
\end{theorem}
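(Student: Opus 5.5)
The plan is to reduce both sides to the assumption-free setting, using the lemmas relating each assumption calculus to its plain Hilbert system, and then invoke Theorem~\ref{Theorem:sound} and Theorem~\ref{Theor:Faithfulness}. By Lemma~\ref{lemma_HS4_b}, $\Gamma \vdash_{\textbf{HS4}}\alpha$ holds iff $\vdash_{\SFour}\bigwedge\Gamma\rightarrow\alpha$. By Lemma~\ref{HBBI_Completeness_b1} together with Lemma~\ref{HBBI_Completeness_b2}, $t(\Gamma)\vdash_{\textbf{HBBI}}t(\alpha)$ holds iff $\vdash_{\BBI}\mathcal{T}(t(\Gamma))\rightarrow t(\alpha)$. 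Because $t(\Gamma)$ is a multiset, built using ``$;$'' only, $\mathcal{T}(t(\Gamma))$ is the conjunction of the $t(\gamma)$ for $\gamma\in\Gamma$, with $\top$ for the empty multiset. Since $t(-)$ commutes with $\land$, $\rightarrow$ and $\top$, this formula is syntactically equal to $t(\bigwedge\Gamma\rightarrow\alpha)$, provided both conjunctions are bracketed in the same way. If the bracketing differs, associativity and commutativity of $\land$ in classical propositional logic, which both systems contain, repair it.

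The equivalence then reads
\[
\Gamma\vdash_{\textbf{HS4}}\alpha \iff \vdash_{\SFour}\textstyle\bigwedge\Gamma\rightarrow\alpha \iff \vdash_{\BBI}t(\textstyle\bigwedge\Gamma\rightarrow\alpha) \iff t(\Gamma)\vdash_{\textbf{HBBI}}t(\alpha).
\]
The middle step uses Theorem~\ref{Theorem:sound} left to right, read as concluding $\vdash_{\BBI}t(\alpha)$, which is clearly the intended statement, and Theorem~\ref{Theor:Faithfulness} right to left. Both are applied to the single formula $\bigwedge\Gamma\rightarrow\alpha$, so no new induction is needed.

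The main obstacle is the soundness direction, since its stated proof is only a sketch. To make it honest, I would check by induction on \SFour derivations that each axiom translates to a \BBI theorem and that the rules are preserved. Propositional tautologies and modus ponens are immediate. For necessitation, from $\vdash_{\BBI}t(\alpha)$ I would obtain $\vdash_{\BBI}\top\rightarrow t(\alpha)$, then $\vdash_{\BBI}\top*\top\rightarrow t(\alpha)$, and apply R3 to get $\vdash_{\BBI}\top\rightarrow(\top\magicwand t(\alpha))$, hence $\top\magicwand t(\alpha)$.
\begin{itemize}
\item For K, I would take the composite of $(\top\magicwand(a\rightarrow b))*\top$ and $(\top\magicwand a)*\top$. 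This uses the evaluation $(x\magicwand y)*x\rightarrow y$ obtained from R2 applied to identity, together with the rule that $(c*\top)\rightarrow y$ follows from $c\rightarrow y$ under suitable monotonicity via R4. Applying R3 then gives the result.
\item For T, from $\top\magicwand a$ I would use $I$: B1 gives $a'\rightarrow I*a'$, and $I\rightarrow\top$ with R4 gives $(\top\magicwand a)*I\rightarrow(\top\magicwand a)*\top\rightarrow a$. Combined with B1 and B3, this yields $(\top\magicwand a)\rightarrow a$.
\item For 4, I would use B4 and $\top*\top\rightarrow\top$ to derive $((\top\magicwand a)*\top)*\top\rightarrow a$, then apply R3 twice.
\end{itemize}
These are the computations I would expect to need the most care.
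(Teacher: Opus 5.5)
Your route is the paper's own: reduce both sides via Lemma~\ref{lemma_HS4_b} and Lemmas~\ref{HBBI_Completeness_b1}--\ref{HBBI_Completeness_b2}, then apply soundness and faithfulness of $t(-)$. The left-to-right direction is fine. The right-to-left direction, however, rests on Lemma~\ref{HBBI_Completeness_b1}, and that lemma is false for \textbf{HBBI} as defined; the paper's proof has the same defect. The problem is that rules 5 and 6 of \textbf{HBBI} may be applied under an arbitrary context $\Gamma$, whereas R2 and R3 only relate \emph{theorems}. The formula $((\alpha*\beta)\rightarrow\gamma)\rightarrow(\alpha\rightarrow(\beta\magicwand\gamma))$ is not a \BBI theorem, contrary to what the paper's proof of that lemma assumes. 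Concretely:
\[
\begin{array}{ll}
p\vdash(\top*\top)\rightarrow p & \text{rule 3, from } p\vdash p \text{ and the axiom } p\rightarrow((\top*\top)\rightarrow p),\\
p\vdash\top\rightarrow(\top\magicwand p) & \text{rule 6},\\
p\vdash_{\textbf{HBBI}}\top\magicwand p & \text{modus ponens with } \vdash\top.
\end{array}
\]
So rule 6 smuggles in necessitation under assumptions. Yet $p\rightarrow(\top\magicwand p)$ fails at $e$ in the partial monoid $\{e,a\}$ with $a\circ a$ undefined and $p$ true only at $e$, since $e,a\triangleright a$ and $a\nVdash p$. Hence $t(p)\vdash_{\textbf{HBBI}}t(\Box p)$ while $p\nvdash_{\textbf{HS4}}\Box p$ (by Lemma~\ref{lemma_HS4_b}, since $\nvdash_{\SFour}p\rightarrow\Box p$). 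The right-to-left direction of the statement is false as written, and no argument can close it without changing \textbf{HBBI}.

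The repair is to restrict rules 5 and 6 to assumption-free premises, exactly as rule 4 and the necessitation rule of \textbf{HS4} already are. Lemma~\ref{HBBI_Completeness_b1} then follows by a routine induction: in the restricted cases the induction hypothesis yields a \BBI theorem, to which R2 or R3 applies before weakening. Lemmas~\ref{lemma_HBBI_a} and~\ref{HBBI_Completeness_b2} are unaffected, and your chain of equivalences becomes correct.

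So the real obstacle is not the soundness direction you singled out. Your checks there are essentially right, up to two slips. First, ``$(c*\top)\rightarrow y$ follows from $c\rightarrow y$'' is not a valid rule; it fails for $c=y=I$. Instead, apply R4 to $c\rightarrow c'$ and then use $c'*\top\rightarrow y$. Second, for axiom 4 you need the converse of B4, which is derivable from B3 and B4.

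Finally, Theorem~\ref{Theor:Faithfulness} is a legitimate black box only as the Galmiche--Larchey-Wendling result. The paper's syntactic proof of it does not work: $\rev{t(\Box p)}=\top\rightarrow\Box p\lor p$ is \SFour-equivalent to $p$, not to $\Box p$, so Proposition~\ref{Prop:cancel} fails.
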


\begin{proof}
\begin{description}

    \item[$\Rightarrow$ (Soundness):] $ \vdash_{\SFour} \bigwedge \Gamma \rightarrow \alpha$ by Lemma~\ref{lemma_HS4_b}.
By soundness of the translation $t(-)$ and its definition, we get $\vdash_{\BBI}t(\bigwedge \Gamma \rightarrow \alpha)\equiv t(\bigwedge \Gamma)\rightarrow t(\alpha)$. By Lemma~\ref{lemma_HBBI_a}, $\vdash_{\textbf{HBBI}} t(\bigwedge \Gamma)\rightarrow t(\alpha)$, and by detachment on ``$\rightarrow$'' in \textbf{HBBI}, we get $t(\bigwedge \Gamma) \vdash_{\textbf{HBBI}}t(\alpha)$. But the translation of the multiset containing the conjunction of the formulae of the original $\Gamma$ is nothing but the conjunction of the translations of each of the formulae in $\Gamma$. This is equivalent to the multiset $t(\Gamma)$, containing the translations of each formula of the original $\Gamma$.

\item[$\Leftarrow$ (Faithfulness):] By Lemma~\ref{HBBI_Completeness_b1} we have $\vdash_{\BBI}\mathcal{T} (t(\Gamma))\rightarrow t(\alpha)$. But $\mathcal{T} (t(\Gamma)) \equiv t(\bigwedge \Gamma)$, so we have $\vdash_{\BBI}t(\bigwedge\Gamma)\rightarrow t(\alpha) \equiv t(\bigwedge \Gamma \rightarrow \alpha)$ by the definition of $t(-)$. By faithfulness of the translation we get $\vdash_{\SFour}\bigwedge \Gamma \rightarrow \alpha$. By Lemma~\ref{lemma_HS4_a} we obtain $\vdash_{\textbf{HS4}}\bigwedge \Gamma \rightarrow \alpha$, and by detachment $\bigwedge \Gamma\vdash_{\textbf{HS4}} \alpha$, and is straightforward to conclude that
$\Gamma\vdash_{\textbf{HS4}} \alpha$.    
\end{description}    
\end{proof}

\section{Generalised Embedding}
\label{Sec:general}

Unlike the semantic proof of the embedding by Galmiche and Larchey-Wendling~\cite{galmiche_expressivity_2006},
our proof does not depend on particular semantic features of \SFour. Thus, our result can be generalised to arbitrary
axiomatic extensions of \SFour. Further, it can be generalised to a class of axiomatic extensions of \BBI that includes,
to our knowledge, all such extensions that have been studied in the literature.

\begin{proposition}[Generalised Embedding] Let $\alpha$ be a modal logic formula and $\beta$ a \BBI-formula such that $\vdash_{\SFour + \alpha}[\beta]'$. Then: $\Gamma\vdash_{S4+Ax+[\beta]'} \alpha$ iff $t(\Gamma)\vdash_{\BBI+t(Ax)+\beta} t(\alpha)$.     
\end{proposition}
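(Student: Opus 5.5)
I read the hypothesis as saying that the extra \BBI-axiom scheme $\beta$ has a reverse translation already provable in the modal extension, i.e.\ $\vdash_{\SFour+Ax}[\beta]'$. Then $\SFour+Ax+[\beta]'$ and $\SFour+Ax$ have the same theorems, and $Ax$ is an arbitrary set of modal axiom schemes. The plan is to rerun the proof of Theorem~\ref{Theor:Faithfulness} and of the theorem on derivability under assumptions, with every system replaced by its axiomatic extension. First I would prove the assumption-free statement
\[
\vdash_{\SFour+Ax}\alpha \iff \vdash_{\BBI+t(Ax)+\beta} t(\alpha).
\]
Afterwards I would lift it to assumptions $\Gamma$ exactly as before. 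Lemmas~\ref{lemma_HS4_a}, \ref{lemma_HS4_b}, \ref{lemma_HBBI_a}, \ref{HBBI_Completeness_b1} and the deduction and detachment theorems only use the axioms as a set of schemes closed under instances. So their proofs go through verbatim once the extra schemes are added to $Axiom$.

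A preliminary observation will be used in both directions. Both $t(-)$ and $\rev{-}$ are compositional homomorphisms on the propositional structure, so they commute with uniform substitution:
\[
t(\varphi(\psi_1,\dots,\psi_n)) = t(\varphi)(t(\psi_1),\dots,t(\psi_n)),
\]
and similarly for $\rev{-}$. Hence instances of a scheme are sent to instances of the translated scheme.

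\emph{Soundness} ($\Rightarrow$). I would induct on the $\SFour+Ax$ derivation.
\begin{itemize}
\item Theorem~\ref{Theorem:sound} handles the $\SFour$ axioms.
\item Modus ponens is preserved because $t$ commutes with $\rightarrow$.
\item Necessitation needs $\vdash t(\alpha)$ to yield $\vdash \top\magicwand t(\alpha)$. I obtain $\vdash \top * t(\alpha)\rightarrow t(\alpha)$ from $\vdash t(\alpha)$ classically, and then apply R3 after commuting with B3.
\item An instance of an $Ax$ scheme is sent by the substitution observation to an instance of a $t(Ax)$ scheme, which is an axiom of the target system.
\end{itemize}
The axiom $\beta$ is not needed in this direction. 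Any use of $[\beta]'$ on the modal side is eliminable, because $[\beta]'$ is already a theorem of $\SFour+Ax$.

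\emph{Faithfulness} ($\Leftarrow$). I would strengthen Proposition~\ref{Prop:rev_sound} to: if $\vdash_{\BBI+t(Ax)+\beta}\gamma$ then $\vdash_{\SFour+Ax}\rev{\gamma}$. The old cases are unchanged, so only the two new kinds of axiom need treatment.
\begin{itemize}
\item For an instance of $\beta$, its reverse translation is an instance of $[\beta]'$, which is provable by hypothesis and closure of $\SFour+Ax$ under substitution.
\item For an instance $t(A)(\gamma_1,\dots,\gamma_n)$ of a scheme $A\in Ax$, its reverse translation is $\rev{t(A)}(\rev{\gamma_1},\dots,\rev{\gamma_n})$. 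Proposition~\ref{Prop:cancel} gives $\vdash_{\SFour}A\leftrightarrow\rev{t(A)}$ for the scheme with atoms. Substituting $\rev{\gamma_i}$ for the atoms makes this equivalent to $A(\rev{\gamma_1},\dots)$, which is an axiom of $\SFour+Ax$.
\end{itemize}
Then, exactly as in Theorem~\ref{Theor:Faithfulness}, $\vdash t(\alpha)$ gives $\vdash_{\SFour+Ax}\rev{t(\alpha)}$, and cancellation gives $\vdash_{\SFour+Ax}\alpha$.

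I expect the main obstacle to be the substitution step, together with checking that the cancellation equivalence is robust. Proposition~\ref{Prop:cancel} is proved by structural induction, and the $\Box$ case silently uses replacement of provable equivalents under $\Box$. I would make this explicit: in $\SFour$, $\vdash\varphi\leftrightarrow\psi$ implies $\vdash\Box\varphi\leftrightarrow\Box\psi$ by Nec and K. I would also check that $\SFour+Ax$ is closed under uniform substitution, which holds since the extensions are given by schemes. With these in hand, the lifting to $\Gamma$ is routine via the extended versions of Lemmas~\ref{lemma_HS4_b} and~\ref{HBBI_Completeness_b1}, using $\mathcal{T}(t(\Gamma))\equiv t(\bigwedge\Gamma)$.
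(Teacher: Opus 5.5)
Your assumption-free argument is correct and is exactly the rerun of the G\"odel-style proof that the paper intends; the paper gives no separate proof of this proposition. The ingredients are the same: soundness of $t(-)$ by induction, soundness of $\rev{-}$ for $\BBI+t(Ax)+\beta$, and then cancellation. For soundness of $\rev{-}$ you use the hypothesis for instances of $\beta$, and Proposition~\ref{Prop:cancel} plus substitution for instances of $t(Ax)$. You are more careful than the paper about substitution instances and about replacement of equivalents under $\Box$. Reading the hypothesis as $\vdash_{\SFour+Ax}[\beta]'$ is the right repair of the statement.

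The gap is in the lift to assumptions. You say Lemma~\ref{HBBI_Completeness_b1} and the $\rightarrow$-deduction theorem go through verbatim. But both are already false for \textbf{HBBI} as defined, because rules 5 and 6 keep an arbitrary context $\Gamma$ and so act like necessitation under assumptions. Here is a derivation:
\begin{itemize}
\item From $p\vdash p$ and the axiom $p\rightarrow((\top*\top)\rightarrow p)$, rule 3 gives $p\vdash(\top*\top)\rightarrow p$.
\item Rule 6 gives $p\vdash\top\rightarrow(\top\magicwand p)$.
\item Modus ponens with $\vdash\top$ gives $p\vdash_{\textbf{HBBI}}\top\magicwand p$, i.e.\ $t(p)\vdash_{\textbf{HBBI}}t(\Box p)$.
\end{itemize}
Yet $p\nvdash_{\textbf{HS4}}\Box p$ by Lemma~\ref{lemma_HS4_b}, since $p\rightarrow\Box p$ is not an \SFour theorem. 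Moreover, $p\rightarrow(\top\magicwand p)$ is not a \BBI theorem: take the frame $(\mathbb{N},+,0)$ with $p$ true only at $0$. So Lemma~\ref{HBBI_Completeness_b1} fails, and the deduction theorem would give the underivable $\vdash_{\textbf{HBBI}}p\rightarrow(\top\magicwand p)$.

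Hence, with $Ax$ empty and $\beta$ any \BBI theorem, the $\Gamma$-form of the proposition is false for \textbf{HBBI} as defined. The same holds for the paper's theorem on embedding under assumptions, and no routine lifting can close the step. The error in the paper is the claimed internal equivalence $\vdash_{\BBI}((\alpha_1*\alpha_2)\rightarrow\alpha_3)\leftrightarrow(\alpha_1\rightarrow(\alpha_2\magicwand\alpha_3))$ in the proof of Lemma~\ref{HBBI_Completeness_b1}: R2 and R3 are rules, not an implication.

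The fix is to restrict rules 5 and 6 of \textbf{HBBI} (and 7 and 8 of \textbf{HBBI*}) to premises with empty context, mirroring rule 4 of \textbf{HS4}. Then those cases of Lemma~\ref{HBBI_Completeness_b1} follow from R2/R3 plus weakening, Lemma~\ref{lemma_HBBI_a} is unaffected, and your lifting goes through as you describe.
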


While the modal formula $\alpha$ is unconstrained, the \BBI-formula $\beta$ is not.
It is therefore an empirical observation that the condition on $\beta$ does not seem restrictive in practice:

\begin{observation} Reverse translation $\rev{-}$ maps the following axioms into \SFour-provable formulae. The first three axioms define \BBI over models with, respectively, an indivisible unit (\textbf{BBI.iu})~\cite{separation_theories}, the divisibility property (\textbf{BBI.d}), and totality (\textbf{BBI.Tot}) \cite{total_monoid_signature}.
The fourth axiom is a *-elimination rule suggested by Cao et al.~\cite{Bringing_order}.

\begin{description}
    \item[BBI.iu.] $[(I \land (\alpha * \beta))\rightarrow \alpha]'\mapsto((\top \land ([\alpha]' \land [\beta]'))\rightarrow [\alpha]')$.
    \item[BBI.d.] $[(\neg I \rightarrow (\neg I * \neg I))]'\mapsto (\neg \top \rightarrow (\neg \top \land \neg \top))$.
     \item[BBI.Tot] $[I\land((\alpha*\beta)\magicwand \bot) \rightarrow ((\alpha \magicwand \bot)\lor (\beta \magicwand \bot))]'\mapsto (\top \land (\alpha \land \beta)\rightarrow \Box \bot \lor \bot) \rightarrow ((\alpha \rightarrow \Box \bot \lor \bot)\lor (\beta \rightarrow \Box \bot \lor \bot))$.
     \item[*-elimination.] $[(\alpha * \beta) \rightarrow \alpha]'\mapsto ([\alpha]' \land
    [\beta]' \rightarrow [\alpha]')$.

\end{description}

\end{observation}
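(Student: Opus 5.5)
The plan is to compute $\rev{-}$ on each of the four schemes using Definition~\ref{Def:Reverse_Translation}, and then show that every resulting \SFour-formula is a substitution instance of a tautology of classical propositional logic. Since the axiomatisation of \SFour in Definition~\ref{Def:S4} contains an axiomatisation of classical propositional logic and is closed under \textit{modus ponens}, every substitution instance of a tautology (with the atoms replaced by arbitrary modal formulae, such as $[\alpha]'$, $[\beta]'$ or $\Box\bot$) is \SFour-provable. So modal reasoning will at most be needed as a convenience. I will treat the constants $\top$ and $\bot$ as fixed by $\rev{-}$, exactly as clause 6 treats atoms; this is the reading already used in the proof of Proposition~\ref{Prop:cancel}.

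First, the three $\magicwand$-free schemes. Clauses 2, 3 and 5 give
\[
[(I\land(\alpha*\beta))\rightarrow\alpha]' = (\top\land([\alpha]'\land[\beta]'))\rightarrow[\alpha]'
\quad\text{and}\quad
[(\alpha*\beta)\rightarrow\alpha]' = [\alpha]'\land[\beta]'\rightarrow[\alpha]' .
\]
Both are instances of $A\land B\rightarrow A$, the first after absorbing $\top$. For \textbf{BBI.d}, clauses 3 and 4 give $\neg\top\rightarrow(\neg\top\land\neg\top)$. Its antecedent is refutable, so it is an instance of $\neg\top\rightarrow C$, which is a tautology.

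The only case involving $\magicwand$ is \textbf{BBI.Tot}, and I expect it to be the one requiring care. Write $A=[\alpha]'$, $B=[\beta]'$ and $X=\Box\bot\lor\bot$. Clause 1 applied three times, together with clauses 2, 3 and 5, yields
\[
(\top\land(A\rightarrow(B\rightarrow X)))\rightarrow\bigl((A\rightarrow X)\lor(B\rightarrow X)\bigr),
\]
where I take care to translate $(\alpha*\beta)\magicwand\bot$ as $(A\land B)\rightarrow X$. The Observation's display loses some bracketing at this point, and I will restore it explicitly.

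The key point is that $((A\land B)\rightarrow X)\rightarrow((A\rightarrow X)\lor(B\rightarrow X))$ is a tautology for an \emph{arbitrary} $X$. Indeed, if $A\rightarrow X$ fails, then $A$ holds and $X$ fails; the hypothesis then forces $\neg B$, so $B\rightarrow X$ holds. Hence the formula is a substitution instance of a classical tautology and is \SFour-provable, with no appeal to axiom T.

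As a cross-check, I could instead use $\vdash_{\SFour}\Box\bot\rightarrow\bot$ (axiom T) to obtain $X\leftrightarrow\bot$. This rewrites the formula to $\neg(A\land B)\rightarrow\neg A\lor\neg B$, which is De Morgan. This confirms the computation is robust to how the displayed form is read.
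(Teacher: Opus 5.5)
Your proposal is correct and takes the same route as the paper: the paper simply displays $\rev{-}$ of each scheme and leaves their \SFour-provability as evident classical reasoning, which your tautology checks make explicit. One harmless slip is that your \textbf{BBI.Tot} display shows the second conjunct as $A\rightarrow(B\rightarrow X)$, whereas clauses 1 and 2 literally give $(A\land B)\rightarrow X$, as your own next sentence states; the two are classically equivalent, so the argument is unaffected.
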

    
We now explain how the generalisation of the embedding to axiomatic extensions of \SFour relates to semantics, via the derivation of \SFour-frames from \BBI-frames of Definition~\ref{induced_frame}.

\begin{theorem}
For $\phi$ a modal logic formula, $\phi$ holds in $\mathcal{U}(\mathcal{F})$ iff $t(\phi)$ holds in $\mathcal{F}$.  
\label{theor:induced_frames}
\end{theorem}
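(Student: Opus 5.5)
The plan is to prove a stronger, model-level statement by structural induction on $\phi$. Fix a relational resource frame $\mathcal{F}=(M,\triangleright,e)$ and a valuation $v:Prop\longrightarrow\mathcal{P}(M)$. Since $\mathcal{U}(\mathcal{F})$ has the same carrier $M$, the same $v$ is also a valuation on $\mathcal{U}(\mathcal{F})$, and every valuation on either frame arises in this way. I will show that for every $r\in M$,
\[
r\Vdash_{\mathcal{U}(\mathcal{F}),v}\phi \quad\text{iff}\quad r\Vdash_{\mathcal{F},v}t(\phi).
\]
Quantifying over all $v$ and all $r$ then gives the theorem, because validity in a frame means truth at every point under every valuation.

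Before the induction I will check that $\mathcal{U}(\mathcal{F})$ really is an \SFour-frame, i.e.\ that $\preceq$ is a pre-order.
\begin{itemize}
\item Reflexivity comes from Identity: $e,r\triangleright r$, so $r\preceq r$.
\item For transitivity, suppose $r_1\preceq r_2$ and $r_2\preceq r_3$. Then $a,r_1\triangleright r_2$ and $b,r_2\triangleright r_3$. By Commutativity, $b,r_2\triangleright r_3$ gives $r_2,b\triangleright r_3$. Associativity (used together with Commutativity to put the arguments in the right order) then produces some $c$ with $b,a\triangleright c$ and $c,r_1\triangleright r_3$. Hence $r_1\preceq r_3$.
\end{itemize}
This bookkeeping with Commutativity to fit the exact shape of the Associativity axiom is the only fiddly step. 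I expect it to be routine once the witnesses are named carefully.

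The induction itself is as follows.
\begin{itemize}
\item Atoms and the constants $\top,\bot$ are immediate, since $t$ fixes them and both forcing relations read $v$ identically.
\item The Boolean cases $\neg,\land,\lor,\rightarrow$ follow directly from the induction hypothesis, because $t$ commutes with these connectives and both semantics interpret them pointwise in the same way.
\item The key case is $\phi=\Box\psi$, where $t(\phi)=\top\magicwand t(\psi)$. By the forcing clause for $\magicwand$, $r\Vdash_{\mathcal{F}}\top\magicwand t(\psi)$ iff for all $r_1,r_2$ with $r,r_1\triangleright r_2$ and $r_1\Vdash\top$, we have $r_2\Vdash t(\psi)$.
\item The condition $r_1\Vdash\top$ is always satisfied, so it drops out. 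By Commutativity, $r,r_1\triangleright r_2$ iff $r_1,r\triangleright r_2$.
\item So the condition becomes: for every $r_2$ such that some $r_1$ satisfies $r_1,r\triangleright r_2$, we have $r_2\Vdash t(\psi)$. That is exactly: for all $r_2$ with $r\preceq r_2$, $r_2\Vdash t(\psi)$.
\item By the induction hypothesis this is equivalent to: for all $r_2$ with $r\preceq r_2$, $r_2\Vdash_{\mathcal{U}(\mathcal{F})}\psi$, which is the \SFour clause for $r\Vdash\Box\psi$.
\end{itemize}

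The only genuine content is therefore the $\Box$ case, which rests on two observations: $\top$ imposes no constraint on the added resource, and Commutativity matches the argument order of $\triangleright$ in the $\magicwand$ clause to that in the definition of $\preceq$. Everything else is routine.
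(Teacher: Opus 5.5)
Your argument is correct, and it differs from the paper's in the level at which the induction is carried out. The paper inducts directly on frame validity. It argues that $\alpha\land\beta$ holds in $\mathcal{U}(\mathcal{F})$ iff both conjuncts do. In the $\Box$ case it collapses validity of $\top\magicwand t(\alpha)$ in $\mathcal{F}$ to validity of $t(\alpha)$, since each point extends itself via $e$. Likewise it collapses validity of $\Box\alpha$ in $\mathcal{U}(\mathcal{F})$ to validity of $\alpha$, using reflexivity of $\preceq$. You instead fix a valuation $v$ on the common carrier and prove the pointwise equivalence $r\Vdash_{\mathcal{U}(\mathcal{F}),v}\phi$ iff $r\Vdash_{\mathcal{F},v}t(\phi)$, quantifying over $v$ and $r$ only at the end.

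This stronger hypothesis is what makes the Boolean cases genuinely routine. Frame validity does not commute with $\neg$, $\lor$ or $\rightarrow$: for example, $p\lor\neg p$ is valid in every frame while neither disjunct is. So the paper's remark that the remaining classical cases are ``similar'' to $\land$ does not go through at the level at which its induction is stated, whereas pointwise those cases are immediate.

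The cost is a more exact $\Box$ case. You must identify the range of the $\magicwand$-quantifier, $\{r_2 \mid \exists r_1.\ r,r_1\triangleright r_2\}$, with the set of $\preceq$-successors of $r$, and this is precisely where Commutativity is used. In exchange you get the correspondence model by model, not just frame by frame.

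Two small remarks. The pre-order check is not needed for the equivalence itself, since the $\Box$ clause is meaningful over any binary relation. In your transitivity argument the appeal to Commutativity is superfluous: Associativity applied directly to $a,r_1\triangleright r_2$ and $b,r_2\triangleright r_3$, with $r_2$ as the intermediate witness, already yields $c$ with $b,a\triangleright c$ and $c,r_1\triangleright r_3$.
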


\begin{proof}
By induction on the structure of $\phi$.

\begin{description}
    \item[1.] $\phi \equiv \alpha \land \beta$ (the cases for the remaining classical operators are similar): if $\phi$ holds in all models of a frame $\mathcal{U}(\mathcal{F})$, then $\forall w, \forall M \in \mathcal{U}(\mathcal{F})$, $w\Vdash \alpha$ and $w\Vdash \beta$. Thus, we apply the induction hypothesis, and obtain that $\forall w, \forall M \in \mathcal{F}$ it is the case that $w\Vdash t(\alpha)$ and $w\Vdash t(\beta)$. Conversely, if $t(\alpha \land \beta)\equiv t(\alpha)\land t(\beta)$ holds in $\mathcal{F}$, then $\forall w, \forall M \in \mathcal{F}$, $w\Vdash t(\alpha)$ and $w\Vdash t(\beta)$. We then apply the induction hypothesis.

\item[2.] $\phi \equiv \Box \alpha$: If $\phi$ holds in all models of a frame $\mathcal{U}(\mathcal{F})$ (that is, $\forall w, \forall M \in \mathcal{U}(\mathcal{F})$), then also $\forall w'$ such that $wRw'$, $\forall M \in \mathcal{F}$, $w'\Vdash \alpha$. We apply the induction hypothesis and obtain that $\forall w', \forall M \in \mathcal{U}(\mathcal{F})$, is the case that $w'\Vdash t(\alpha)$. Thus, for a generic world $w$ such that $\exists w^*$ and $w^*,w \triangleright w'$, we have both $w^*\Vdash \top$ and $w'\Vdash t(\alpha)$. Hence $w\Vdash \top \magicwand t(\alpha)$ holds in $\mathcal{F}$.

Conversely, say $t(\Box \alpha)\equiv \top \magicwand t(\alpha)$ holds in $\mathcal{F}$. Then $t(\alpha)$ holds in all extensions of all worlds: $\forall w \forall w' \exists w^*$ such that $w^*, w\triangleright w'$ and $w^*\Vdash \top$, $w'\Vdash t(\alpha)$. Ny the identity condition $e, r1 \triangleright r1$ on the frame $\mathcal{F}$, and that $e\Vdash \top$, we notice that this holds for all worlds in all the models of the frame, as all worlds are extensions of themselves via the empty resource $e$. We then apply the induction hypothesis and obtain that $\forall w, \forall M \in \mathcal{U}(\mathcal{F})$, $\Vdash \alpha$. By the reflexivity of the induced pre-order $\preceq$, we have that each world $w$ in all the models of the frame $\mathcal{U}(\mathcal{F})$ sees itself. Hence $\Vdash \Box \alpha$ holds in the frame $\mathcal{U}(\mathcal{F})$.
\end{description}    
\end{proof}

\begin{corollary}
Suppose we have some class of \SFour-frames $\mathcal{K}$ and an \SFour-formula $\phi$ such that, for all \SFour-frames $\mathcal{G}$, $\mathcal{G} \in \mathcal{K}$ iff $\mathcal{F}\vDash \phi$. Then, for any \BBI-frame $\mathcal{F}$, $\mathcal{U}(\mathcal{F})\in \mathcal{K}$ iff $\mathcal{F}\vDash t(\phi)$.

\end{corollary}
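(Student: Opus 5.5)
The corollary follows by instantiating Theorem~\ref{theor:induced_frames} at the defining formula $\phi$ and chaining two equivalences. I read the hypothesis as saying that $\phi$ defines the class $\mathcal{K}$: for every \SFour-frame $\mathcal{G}$, $\mathcal{G}\in\mathcal{K}$ iff $\mathcal{G}\vDash\phi$. The statement as written has $\mathcal{F}\vDash\phi$ in place of $\mathcal{G}\vDash\phi$, which I take to be a typo. Here $\mathcal{G}\vDash\phi$ means that $\phi$ holds in every model on $\mathcal{G}$ at every point.

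Fix an arbitrary \BBI-frame $\mathcal{F}$. First check that $\mathcal{U}(\mathcal{F})$ is an \SFour-frame, i.e.\ that $\preceq$ is a pre-order, so that the defining hypothesis applies to it.
\begin{itemize}
\item Reflexivity follows from the Identity condition, which gives $e,r\triangleright r$.
\item For transitivity, suppose $r',r_1\triangleright r_2$ and $r'',r_2\triangleright r_3$. By Commutativity, $r_2,r''\triangleright r_3$. Rewriting the first as $r_1,r'\triangleright r_2$ and applying Associativity (after suitable uses of Commutativity) yields some $s$ with $s,r_1\triangleright r_3$, so $r_1\preceq r_3$.
\end{itemize}
This is routine, and the paper already appeals to the pre-order property (following \cite{galmiche_expressivity_2006}), so I only indicate it.

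Next, take $\mathcal{G}:=\mathcal{U}(\mathcal{F})$ in the hypothesis, which gives $\mathcal{U}(\mathcal{F})\in\mathcal{K}$ iff $\mathcal{U}(\mathcal{F})\vDash\phi$. By Theorem~\ref{theor:induced_frames}, $\phi$ holds in $\mathcal{U}(\mathcal{F})$ iff $t(\phi)$ holds in $\mathcal{F}$. Composing the two equivalences gives $\mathcal{U}(\mathcal{F})\in\mathcal{K}$ iff $\mathcal{F}\vDash t(\phi)$, as required.

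The only real obstacle is making sure the notion of ``holds in a frame'' used in Theorem~\ref{theor:induced_frames} matches frame validity $\vDash$ in the corollary: truth at every point under every valuation. The paper stipulates exactly this just after Definition~\ref{induced_frame}, and the models on $\mathcal{U}(\mathcal{F})$ and on $\mathcal{F}$ range over the same carrier $M$ with the same valuations. Given that alignment, the corollary is a direct consequence of Theorem~\ref{theor:induced_frames}, with no further induction needed.
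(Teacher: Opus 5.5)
Your proposal is correct and follows the paper's argument, which simply calls the corollary immediate from Theorem~\ref{theor:induced_frames}: instantiate the defining property of $\mathcal{K}$ at $\mathcal{G}:=\mathcal{U}(\mathcal{F})$ and compose with that theorem. Reading $\mathcal{F}\vDash\phi$ as a typo for $\mathcal{G}\vDash\phi$, and checking that $\preceq$ is a pre-order so that $\mathcal{U}(\mathcal{F})$ is an \SFour-frame, only make explicit what the paper leaves implicit.
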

\begin{proof}
Immediate from Theorem~\ref{theor:induced_frames}.    
\end{proof}

\begin{table}[t]
\centering
\scriptsize
\setlength{\tabcolsep}{12pt}   % extra spacing between columns
\renewcommand{\arraystretch}{1.25}

\begin{tabular}{p{3.2cm} p{3.2cm} p{6.2cm}}
\toprule
\textbf{S4 Axiomatic Extension} &
\textbf{BBI Axiomatic Extension} &
\textbf{BBI-Frame Condition} \\
\midrule

\textbf{S4.2} = \SFour + $(\Diamond\Box\alpha \rightarrow \Box\Diamond\alpha)$ &
\BBI + $(\top \septraction (\top \magicwand \alpha)) \rightarrow (\top \magicwand (\top \septraction \alpha))$ & \textbf{Confluence}:
$\forall a,b,c.\; (\exists m',m''.\; a,m' \triangleright b \wedge a,m'' \triangleright c)
\Rightarrow (\exists d,k',k''.\; b,k' \triangleright d \wedge c,k'' \triangleright d)$ \\
\midrule

\textbf{S4.3} = \SFour + $\Box(\Box\alpha \rightarrow \beta)\lor \Box(\Box\beta \rightarrow \alpha)$ &
\BBI + $(\top \magicwand ((\top \magicwand \alpha)\rightarrow \beta)) \lor (\top \magicwand ((\top \magicwand \beta)\rightarrow \alpha))$ & \textbf{Linearity}:
$\forall a,b.\; \exists m'.\; (a,m' \triangleright b) \lor (b,m' \triangleright a)$ \\
\midrule

\textbf{S5} = \SFour + $(\alpha \rightarrow \Box\Diamond\alpha)$ &
\BBI + $(\alpha \rightarrow (\top \magicwand (\top \septraction \alpha)))$ & \textbf{Transitivity}:
$\forall a,b,c.\; (\exists m',m''.\; a,m' \triangleright b \wedge b,m'' \triangleright c)
\Rightarrow (\exists m'''.\; a,m''' \triangleright c)$

\textbf{Symmetry}: 
$\forall a,b.\; \exists m',m''.\; (a,m' \triangleright b) \Leftrightarrow (b,m'' \triangleright a)$ \\
\bottomrule
\end{tabular}

%\vspace{0.6em} 

\caption{Some \BBI-extensions obtained via extensions of \SFour.}
\label{tab:bbi-extensions}
\end{table}

We can use this fact to provide a number of \BBI-extensions, sound and complete with regard to a particular class of \BBI-frames, for example those of Table~\ref{tab:bbi-extensions}. An intuitive reading of some of the added conditions on the resource frames can be proposed. For example, the linearity condition states that for any pair of resources, we can always extend one to obtain the other. For example, a wallet with 5 dollars in it can be extended to one with 20 dollars in it, or a heap assigning 5 addresses can be extended to one assigning 20 addresses, if we remain agnostic about their values. Confluence is the weaker condition that any two pairs of resources can be each extended to the same resource. A wallet with 5 dollars in it and one with 20 euros in it are not comparable, in the absence of a money changer, but they can each be extended to a wallet containing both currencies.

\section{Language Extensions}
\label{Sec:language_ext}

Beyond the axiomatic extensions of the previous Section, a number of language extensions have been proposed for \BBI to increase its expressivity. Such extensions in general are not constrained enough to support the proof of a general theorem about embeddings, but in this Section we investigate two specific approaches to extending the language of \BBI: first, hybrid extensions of \BBI~\cite{separation_theories}, and second, Classical Bunched Implication \textbf{CBI} \cite{CBI_Axioms} and the intermediate class of \textbf{BiBBI} logics \cite{intermediateBBI}, which extend the language with new connectives including a multiplicative analogue of disjunction. We observe that our techniques deal successfully with these extensions.

Hybrid logic is an extension of modal logic in which a countable set of special atomic formulae, called ``nominals'', is used to refer to evaluation points of the model \cite{Hybrid_Logic_Pure_extensions}. Moreover, a satisfaction one-place operator indexed to nominals ``$@_n \alpha$'' is used to convey the statement that $\alpha$ is true at evaluation point $n$. Thus, in basic hybrid language $\mathcal{H}(@)$, we also have formulae of shape $n, m,...$ for $n, m, ... \in Nom$, and $@_n \alpha$ for $\alpha \in Form$, $n\in Nom$. When a formula has no occurrences of propositional or formula variables (and thus, only nominals occur), it is called ``pure''. The axiomatic hybrid system \textbf{K}$_{\mathcal{H}(@)}$ is sound and complete with regard of the class of all frames. This system is a conservative extension of modal logic \textbf{K}, and has also been proven complete with regard to arbitrary extensions with pure axioms \cite{Hybrid_Logic_Pure_extensions}. It is thus straightforward for us to obtain the (still) sound and complete axiomatic system \SFour$_{\mathcal{H}(@)}$ by adding pure axioms expressing reflexivity and transitivity.

A number of interesting properties of certain classes of \BBI-models are not definable through \BBI-formulae, but are definable in its hybrid logic extension~\cite{separation_theories}. These properties are referred as ``separation properties'', and the set of axioms holding iff the model has those properties, ``separation theories''. We relate this powerful system to the system \SFour$_{\mathcal{H}(@)}$:

\begin{theorem}[Embedding of \SFour$_{\mathcal{H}(@)}$ into \textbf{HyBBI}] \label{Hybrid_Embedding} Let $\alpha$ be a hybrid logic formula and $\beta$ a \textbf{HyBBI}-formula such that $\vdash_{\SFour_{\mathcal{H}(@)} + \alpha}[\beta]'$. Then: $\vdash_{\SFour_{\mathcal{H}(@)+Ax+[\beta]'}} \alpha$ iff $\vdash_{\textbf{HyBBI}+t(Ax)+\beta} t(\alpha)$.

\end{theorem}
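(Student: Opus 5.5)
We follow the same G\"odel-style scheme used for Theorem~\ref{Theor:Faithfulness} and for the Generalised Embedding, now for the hybrid languages. First we extend both translations to nominals and satisfaction operators. Following \cite{separation_theories}, \textbf{HyBBI} has nominals and $@_n$ with the usual hybrid axioms, and the hybrid rules (name, paste) over the \BBI base. We set $t(n)=n$, $t(@_n\alpha)=@_n t(\alpha)$, and $\rev{n}=n$, $\rev{@_n\beta}=@_n\rev{\beta}$. With this choice both translations commute with every hybrid connective. The notation $\vdash_{\SFour_{\mathcal{H}(@)}+\alpha}[\beta]'$ in the hypothesis means exactly that the added \textbf{HyBBI}-axiom $\beta$ reverse-translates to a theorem of the modal side.

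\textbf{Soundness ($\Rightarrow$).} We go by induction on the derivation in $\SFour_{\mathcal{H}(@)}+Ax+[\beta]'$. Axioms in $Ax$ go to $t(Ax)$ by definition. Theorem~\ref{Theorem:sound} handles the \SFour part: K, T and 4 translate to \BBI-theorems, and necessitation becomes the derived rule ``from $\gamma$ infer $\top\magicwand\gamma$'', obtained from weakening $\gamma*\top\to\gamma$ and R3. The hybrid axioms and rules translate essentially to themselves, except those mentioning $\Box$, such as the bridge axiom $\Diamond n\land @_n\gamma\to\Diamond\gamma$ and the paste rule. These must be checked to be derivable in \textbf{HyBBI} once $\Box$ is replaced by $\top\magicwand-$. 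Reflexivity and transitivity (pure axioms, e.g.\ $@_n\Diamond n$ and $\Diamond\Diamond n\to\Diamond n$) become $\top\septraction$-formulas, and these should be \textbf{HyBBI}-theorems via the identity and associativity conditions of the frame. The remaining case is $[\beta]'$. It is not literally of the form $t(\gamma)$, so for soundness we instead show $\vdash t([\beta]')\leftrightarrow\beta$, or simply restrict attention to the faithfulness direction. For the $\beta$'s of interest this equivalence should hold, but this is the point needing care.

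\textbf{Faithfulness ($\Leftarrow$).} This is the main work. We prove hybrid analogues of Propositions~\ref{Prop:rev_sound} and~\ref{Prop:cancel}.

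\emph{(i) Reverse soundness.} If $\vdash_{\textbf{HyBBI}+t(Ax)+\beta}\gamma$, then $\vdash_{\SFour_{\mathcal{H}(@)}+Ax+[\beta]'}\rev{\gamma}$. The \BBI axioms and rules are handled exactly as in the proof of Proposition~\ref{Prop:rev_sound}. The axiom $\beta$ goes to $[\beta]'$, which is available by construction. The hybrid axioms of \textbf{HyBBI} map to hybrid axioms, since $\rev{-}$ commutes with $@_n$ and nominals. For $t(Ax)$ we use the cancelling property below, which gives $\rev{t(Ax)}\leftrightarrow Ax$.

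\emph{(ii) Cancelling.} $\vdash\gamma\leftrightarrow\rev{t(\gamma)}$, proved by structural induction. Nominals and $@_n$ are new but trivial cases. The $\Box$ case is unchanged and still uses only T. We need replacement of equivalents under $@_n$, which holds in $\SFour_{\mathcal{H}(@)}$ since $@_n$ is normal.

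Faithfulness then follows as in Theorem~\ref{Theor:Faithfulness}.

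The main obstacle is (i) for the genuinely hybrid rules of \textbf{HyBBI} whose form involves the multiplicatives. Examples are a paste-like rule for $*$ and $\magicwand$, e.g.\ from $@_n(m*k)\land @_m\gamma\land @_k\delta\to\chi$ infer $@_n(\gamma*\delta)\to\chi$, or an $@$-distribution axiom over $*$. We will first try to show that under $\rev{-}$, where $*$ becomes $\land$ and $\gamma\magicwand\delta$ becomes $\gamma\to\Box\delta\lor\delta\equiv\gamma\to\delta$ by T, each such rule becomes admissible in $\SFour_{\mathcal{H}(@)}$ using the name and paste rules together with the fact that $@$ distributes over $\land$. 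If a rule turns out to lose the fresh-nominal side condition under translation, we will instead appeal to pure completeness \cite{Hybrid_Logic_Pure_extensions} and verify soundness of the translated rule semantically on preorders.
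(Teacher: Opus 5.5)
Your overall route is the paper's: soundness of $t$ by induction, then faithfulness via reverse soundness plus a cancelling lemma, with the hybrid paste rules and the $@$-axioms for $*$ and $\septraction$ checked semantically. The gap is that step (ii), which carries faithfulness, is false.

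You claim $\vdash\gamma\leftrightarrow\rev{t(\gamma)}$, with the $\Box$ case ``still using only T''. But $\rev{t(\Box\gamma)}=\top\to(\Box\rev{t(\gamma)}\lor\rev{t(\gamma)})$. T gives $\Box\delta\lor\delta\leftrightarrow\delta$, not $\Box\delta\lor\delta\leftrightarrow\Box\delta$, so the induction hypothesis yields $\rev{t(\Box\gamma)}\leftrightarrow\gamma$. You compute exactly this in your last paragraph, where $\gamma\magicwand\delta$ becomes $\gamma\to\Box\delta\lor\delta\equiv\gamma\to\delta$.

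Already at $\Box p$, cancelling would require $\vdash\Box p\leftrightarrow p$. This fails on the two-point preorder with $p$ true only at the root. Likewise $\rev{\top\septraction p}=\top\land p\land\Diamond p\equiv p$. So on $t$-images, $\rev{-}$ is just erasure of the modalities. Reverse soundness then only shows that $\alpha$ with its modalities erased is provable, and that also holds for non-theorems such as $p\to\Box p$.

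The equivalence $\rev{t(Ax)}\leftrightarrow Ax$ that you invoke in (i) is unavailable for the same reason. In the hybrid setting this even breaks (i): for the pure axiom $@_i\Diamond\neg i$, which is valid on a two-point cluster, $\rev{t(@_i\Diamond\neg i)}$ implies $@_i\neg i$, contradicting $@_i i$. The paper's Lemma~\ref{Theor:Faithfulnes_2_HyBBI}, and Proposition~\ref{Prop:cancel} on which it rests, make the identical slip: they derive $\Box\beta\lor\beta\leftrightarrow\Box\beta$ from T. So following the paper does not close this gap.

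Nor can the gap be patched by changing the $\magicwand$ clause. With $*\mapsto\land$, axiom B1 at $\phi=\top$ forces $\vdash\rev{I}$, so $\rev{I}\leftrightarrow\top=\rev{\top}$. Also, \BBI proves $(I\magicwand p)\leftrightarrow p$ from B1--B3 with R2 and R3. Take any sound reverse translation whose $\magicwand$ clause is a fixed schema in the translations of its two arguments. Replacement of equivalents then makes $\rev{\top\magicwand p}$ equivalent to $\rev{I\magicwand p}$, hence to $p$, contradicting cancelling at $\Box p$.

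Faithfulness therefore needs a genuinely different ingredient. One option is a semantic construction: turn a (hybrid) preorder countermodel into a resource frame $\mathcal{F}$ validating $t(Ax)$ and $\beta$ whose induced frame $\mathcal{U}(\mathcal{F})$ refutes $\alpha$. Galmiche and Larchey-Wendling do this for plain \SFour, and it is precisely the part whose parametricity is not automatic.

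A smaller point on the soundness direction: your proposed equivalence $t(\rev{\beta})\leftrightarrow\beta$ is not needed. If the hypothesis means that $\rev{\beta}$ is already a theorem of \SFour$_{\mathcal{H}(@)}+Ax$, then the added axiom $\rev{\beta}$ is redundant.
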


\begin{proof}[Proof Sketch:]
The translation employed is an extension of $t(-)$ that leaves the hybrid operators unaltered, as does the reverse $\rev{-}$ translation used in the proof. We prove soundness of the embedding by the usual proof by induction (proof~\ref{Theor:Soundness_HyBBI}), while faithfulness is established through G\"odel's reverse translation strategy (proof~\ref{Theor:Faithfulnes_1_HyBBI} and~\ref{Theor:Faithfulnes_2_HyBBI}).
    
\end{proof}

Again, the hybrid formula $\alpha$ is unconstrained, while \textbf{HyBBI}-formula $\beta$ is not. It is therefore an empirical observation that the condition on $\beta$ does not seem restrictive in practice:

\begin{observation}
All \textbf{HyBBI} axioms used to define separation properties (Table~\ref{Table:Separation_Theories}) contain only occurrences of nominals, $I$, $@_i$ or $*$. As the reverse translation easily shows, this means that all of these \textbf{HyBBI}-extensions still map into \SFour$_{\mathcal{H}(@)}$. An extension of the embedding on the extended hybrid language $\mathcal{H}(@, \downarrow)$, accounting for $\downarrow$ binder of nominals (and thus, able to express the separation property of ``splittability'') is straightforward. Moreover, the lack of $\magicwand$ in the signature scheme of such an extension suggests that the system still embeds into \SFour$_{\mathcal{H}(@, \downarrow)}$.  
\end{observation}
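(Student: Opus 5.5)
The plan is to check the observation table-wise, using one uniform reduction instead of ad hoc derivations. By the observation's own hypothesis, every axiom $\beta$ in Table~\ref{Table:Separation_Theories} is built only from nominals, $I$, $@_i$, $*$ and the classical connectives, with no $\magicwand$.

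\textbf{Step 1: compute the reverse translations.} By Definition~\ref{Def:Reverse_Translation}, extended as in Theorem~\ref{Hybrid_Embedding} to leave nominals and $@_i$ unchanged, I would show by a trivial structural induction on $\beta$ that $\rev{\beta}$ is obtained from $\beta$ by replacing each $I$ with $\top$ and each $*$ with $\land$. Clause~1 of the reverse translation is never triggered, so no $\Box$ appears in $\rev{\beta}$. Hence $\rev{\beta}$ is a $\Box$-free formula of $\mathcal{H}(@)$. Its semantic value at any point depends only on the valuation of nominals and atoms, not on the accessibility relation.

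\textbf{Step 2: reduce provability to validity.} We need $\vdash_{\SFour_{\mathcal{H}(@)}}\rev{\beta}$. Since $\SFour_{\mathcal{H}(@)}$ extends $\textbf{K}_{\mathcal{H}(@)}$, it suffices to show $\vdash_{\textbf{K}_{\mathcal{H}(@)}}\rev{\beta}$. By the completeness of $\textbf{K}_{\mathcal{H}(@)}$ for the class of all frames~\cite{Hybrid_Logic_Pure_extensions}, this reduces to checking that $\rev{\beta}$ is valid in every hybrid model. Because $\rev{\beta}$ is $\Box$-free, that is an elementary check: in such a model, a nominal is true at exactly one point, and $@_i\varphi$ is true iff $\varphi$ holds at the point named by $i$.

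\textbf{Step 3: the case analysis.} I would go through the axioms one at a time; I expect each instance to collapse quickly.
\begin{itemize}
\item Since $I$ becomes $\top$, every axiom whose consequent is $I$ or $\neg(\ldots)$ of a $\top$-shape simply becomes a tautology or an implication with true consequent.
\item Disjointness, i.e.\ $i*i\rightarrow I$, becomes $i\land i\rightarrow\top$.
\item The indivisible unit axiom, i.e.\ $I\land(i*j)\rightarrow i$, becomes $\top\land(i\land j)\rightarrow i$.
\item Partial determinism, i.e.\ $@_k(i*j)\land @_l(i*j)\rightarrow @_k l$, becomes $@_k(i\land j)\land @_l(i\land j)\rightarrow @_k l$. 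This is valid because $k$ and $l$ must name the unique point named by $i$. If a syntactic derivation is preferred, it follows from the standard hybrid axioms for symmetry and transitivity of nominal equality, together with the bridge principle $@_i j\land @_j\varphi\rightarrow @_i\varphi$.
\item Cancellativity and the remaining axioms should be handled the same way. A conjunction of nominals in the premise forces identifications of points, and these identifications yield the consequent.
\end{itemize}

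\textbf{Step 4: the $\downarrow$ case.} For the extension to $\mathcal{H}(@,\downarrow)$ (splittability), I would extend the translations to leave $\downarrow$ unchanged. I would then use the corresponding completeness result for $\textbf{K}_{\mathcal{H}(@,\downarrow)}$, and apply the same $\Box$-freeness argument.

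\textbf{Main obstacle.} The general reduction is routine, so the risk lies in the case analysis. Some axiom, once $*$ is read as $\land$, might fail to be valid: for instance an axiom whose consequent asserts the existence of a decomposition, or a $\neg I$-style condition becoming $\neg\top$ in a positive position. Such an axiom would be a genuine counterexample to the observation. I would therefore first scan the table for $I$ occurring negatively in a consequent, or positively in an antecedent that must fail. If such an axiom exists, I would restrict the claim to the remaining axioms.
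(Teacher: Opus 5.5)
Your Steps 1--2 are the right argument, and they match the paper's one-line justification (``as the reverse translation easily shows''). With no $\magicwand$ present, $[\beta]'$ is just $\beta$ with $I\mapsto\top$ and $*\mapsto\land$, so it is a $\Box$-free formula whose validity is an elementary check.

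The gap is in Step 3, where you assert that ``the remaining axioms should be handled the same way''. They are not. For the \emph{Single Unit} row, $[@_{l_1}I\land @_{l_2}I\rightarrow @_{l_1}l_2]' = @_{l_1}\top\land @_{l_2}\top\rightarrow @_{l_1}l_2$. Its antecedent contains no conjunction of nominals that forces an identification of points. In fact the antecedent is a theorem (Generalization-$@$ applied to $\top$), so the formula is equivalent to $@_{l_1}l_2$. That fails in a two-point model with the identity relation (which is reflexive and transitive) where $l_1$ and $l_2$ name different points. So it is not provable in \textbf{S4}$_{\mathcal{H}(@)}$.

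Under the reading you adopt, ``each $[\beta]'$ is \textbf{S4}$_{\mathcal{H}(@)}$-provable'', which is also the reading of the paper's earlier observation for \BBI-extensions, the observation is false for this row. The reason is structural: $[I]'=\top$ holds at every point, so the reverse translation turns ``there is exactly one unit'' into ``there is exactly one point''. Your closing heuristic, $I$ occurring positively in an antecedent where its translation no longer restricts anything, is exactly what catches this. But you did not run it, and instead claimed the opposite.

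A correct write-up can establish only a restricted statement. The following rows do translate to valid, hence provable, formulae:
\begin{itemize}
\item Divisibility, because its translated antecedent $\neg\top$ is false. Your first bullet garbles this: $\neg I$-shaped parts become false, not true.
\item Indivisible Units, Functionality, Cancellativity and Disjointness, by the identification argument you give.
\item Cross-split. Its antecedent $(a\land b)\land(c\land d)$ forces $a,b,c,d$ to name the evaluation point. After that, every $\downarrow$ binds that same point, and all the $@$-conjuncts hold trivially.
\end{itemize}
Single Unit must be excluded. It could only be accommodated by passing to the extension of \textbf{S4}$_{\mathcal{H}(@)}$ by the pure axiom $@_{l_1}l_2$, i.e.\ to one-point frames, which is no longer \textbf{S4}$_{\mathcal{H}(@)}$.

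Separately, your Step 4 only checks the Cross-split axiom. The claimed extension of the embedding itself to $\downarrow$ would still need soundness of $t$ and $[-]'$ for the $\downarrow$-rules. The observation only asserts that this is ``straightforward'', so this is a lesser omission.
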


\begin{table}[t]
\centering
\scriptsize
\setlength{\tabcolsep}{12pt}   
\renewcommand{\arraystretch}{1.25}

\begin{tabular}{p{3.2cm} p{3.2cm} p{6.2cm}}
\toprule
\textbf{Name} &
\textbf{Semantic Property} &
\textbf{Separation Theory} \\
\midrule

Divisibility & $\forall w \cancel{\in} E, \exists w_1,  w_2. $  $w \in w_1 \circ w_2$& $\neg I \rightarrow (\neg I * \neg I)$\\
\midrule

Indivisible Units & $(w \circ w') \cap E \cancel{=}\varnothing \Rightarrow w \in E$
 & $(I \land (\alpha * \beta)) \rightarrow \alpha$ or $I \land (l_1 * l_2) \rightarrow l_1$ \\
\midrule

Functionality & $w,w'\in w_1 \circ w_2 \Rightarrow w = w'$
 & $@_l(j*k)\land @_{l'}(j*k) \rightarrow @_l l'$ \\
\midrule

Cancellativity & $(w \circ w_1) \cap (w \circ w_2) \cancel{=}\varnothing \Rightarrow w_1=w_2$ & $(l*j) \land (l*k) \rightarrow @_jk$\\
\midrule

Single Unit & $|E|=1$ & $@_{l_1}I \land@_{l_2}I \rightarrow @_{l_1}l_2$\\
\midrule

Disjointness & $w \circ w \cancel{=}\varnothing \Rightarrow w \in E$ & $l*l \rightarrow I \land l$\\

\midrule

Cross-split &  $(t \circ u) \cap (v \circ w) \cancel{=} \varnothing \Rightarrow \exists tv, tw,  uv, uw.$ $t \in tv \circ tw, u \in uv \circ uw, v \in tv \circ uv, w \in tw \circ uw$  & $(a*b)\land (c*d) \rightarrow @_a(\top *\downarrow ac.@a(\top * \downarrow ad.@_a(ac * ad)
\land @_b(\top *\downarrow bc.@b(\top *\downarrow bd.@_b(bc * bd)
\land @_c(ac * bc)\land @_d(ad * bd)))))$\\

\bottomrule
\end{tabular}

\caption{Separation theories \cite{separation_theories} definable in \BBI, \textbf{HyBBI}, or \textbf{HyBBI}($\downarrow$). We here use notation $s \circ t$ to mean $\{v|s,t \triangleright v\}$.}
\label{Table:Separation_Theories}
\end{table}

We finally consider the family of \textbf{BiBBI}-logics introduced by Brotherston and Villard~\cite{intermediateBBI}. Logics of the family are obtained by adding further multiplicative operators to \BBI: intuitionistic multiplicative disjunction $\lor^*$, its adjoint co-implication $\backslash^*$, and a multiplicative falsum constant $\bot^*$.
The logics of interest definable in this language include Classical BI (\textbf{CBI}) and Basic \textbf{BiBBI}, which will write as \textbf{BiBBI.b}. This logic \textbf{BiBBI.b} may be defined axiomatically as follows:

\begin{definition}[Axiomatic System for BiBBI.b] Basic \textbf{BiBBI} extends the proof system for \BBI with the following axiom schemes and inference rules:
    \begin{enumerate}
\begin{multicols}{2}
\item[R1.] \AxiomC{$(\alpha \rightarrow \gamma)$}
\AxiomC{$(\beta \rightarrow \delta)$}

\RightLabel{[Mon.]}
\BinaryInfC{$( \alpha \lor^* \beta)\rightarrow (\gamma \lor^* \delta)$}

\DisplayProof
\item[R2.] \AxiomC{$\alpha \rightarrow (\beta \lor^* \gamma)$}
\RightLabel{[Residuation 1]}
\UnaryInfC{$(\alpha \backslash^* \beta)\rightarrow \gamma$}

\vspace{0.5em}
\DisplayProof

\item[R3.] \AxiomC{$(\alpha \backslash^* \beta)\rightarrow \gamma$}
\RightLabel{[Residuation 2]}
\UnaryInfC{$\alpha \rightarrow (\beta \lor^* \gamma)$}

\DisplayProof

\item[Ax.] \vspace{0.5em}$\vdash (\alpha \lor^* \beta) \rightarrow (\beta \lor ^* \alpha)$.
\end{multicols}
\end{enumerate}

\end{definition}

The translation $t(-)$ of this paper also works as a parametric embedding from \SFour to the \textbf{BiBBI} logics described by Brotherston and Villard~\cite{intermediateBBI}, including \textbf{CBI}:

\begin{theorem}[Embedding of BiBBI-logics into \SFour]\label{Theor:Embedding_BiBBI} Let $\alpha$ be a modal logic formula and $\beta$ a \textbf{BiBBI}-formula such that $\vdash_{\SFour + \alpha}[\beta]'$. Then: $\vdash_{S4+Ax+[\beta]'} \alpha$ iff $\vdash_{\textbf{BiBBI.b}+t(Ax)+\beta} t(\alpha)$.     

\end{theorem}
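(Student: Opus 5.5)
The plan is to follow the G\"odel-style strategy of Section~\ref{Sec:embedding}, extending both translations to the new connectives. The forward translation $t(-)$ needs no change, since \SFour-formulae contain no $\lor^*$, $\backslash^*$ or $\bot^*$. The reverse translation $\rev{-}$ will be extended by sending the new multiplicatives to additive counterparts that make the translated rules classically valid. Concretely, I would set $\rev{\alpha \lor^* \beta} = \rev{\alpha} \lor \rev{\beta}$, $\rev{\bot^*} = \bot$, and $\rev{\alpha \backslash^* \beta} = \rev{\alpha} \land \neg \rev{\beta}$. The last clause is the additive co-implication, residuated with $\lor$ in classical logic.

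Soundness of the embedding (left to right) will be the usual induction, as in Theorem~\ref{Theorem:sound} and the Generalised Embedding. Each \SFour axiom and rule is simulated in \BBI, and hence in \textbf{BiBBI.b}, because \textbf{BiBBI.b} only extends \BBI. Each extra axiom $\gamma$ of $Ax$ becomes $t(\gamma)$, which is available by hypothesis. For the assumption $\rev{\beta}$ we need $t(\rev{\beta})$ to be derivable in the target system. I would obtain this exactly as in the Generalised Embedding: $\rev{\beta}$ is already a theorem of $\SFour+\alpha$, and the equivalence is read off at the level of the extended theories.

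For faithfulness, I would prove the two analogues of Propositions~\ref{Prop:rev_sound} and~\ref{Prop:cancel}. The cancelling property is unchanged, since $t(\alpha)$ contains no new connectives, so Proposition~\ref{Prop:cancel} applies verbatim. The key step is soundness of $\rev{-}$ for \textbf{BiBBI.b} plus $t(Ax)$ plus $\beta$. This is shown by induction on derivations, reusing the cases of Proposition~\ref{Prop:rev_sound} for the \BBI fragment and checking four further items.
\begin{itemize}
\item \emph{Rule [Mon.]} translates to: from $a\to c$ and $b\to d$ infer $a\lor b \to c\lor d$, which is classical.
\item \emph{Commutativity axiom} for $\lor^*$ translates to commutativity of $\lor$.
\item \emph{The residuation rules} translate to $a \to b\lor c$ iff $(a\land\neg b)\to c$, a classical tautology in both directions.
\item \emph{The axioms $t(\gamma)$ for $\gamma\in Ax$} are handled by Proposition~\ref{Prop:cancel}: $\rev{t(\gamma)}$ is \SFour-equivalent to $\gamma$, so it is provable in $S4+Ax$. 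The axiom $\beta$ maps to $\rev{\beta}$, which is present by construction.
\end{itemize}
Since the target logic is $S4+Ax+\rev{\beta}$, no use of the hypothesis $\vdash_{\SFour+\alpha}\rev{\beta}$ is needed on this side. Then, as in Theorem~\ref{Theor:Faithfulness}, $\vdash t(\alpha)$ gives $\vdash \rev{t(\alpha)}$, and hence $\vdash\alpha$.

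The main obstacle is choosing $\rev{\alpha\backslash^*\beta}$ so that the two residuation rules are simultaneously sound. A careless choice such as $\rev{\alpha}$ alone validates only one direction. The classical co-implication $a\land\neg b$ should work. The remaining delicate point is the soundness direction's handling of $\beta$, which I would treat exactly as the earlier Generalised Embedding does rather than re-deriving it.
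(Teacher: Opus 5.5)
Your route is the paper's: the same three clauses extending $\rev{-}$, soundness of $\rev{-}$ by checking that the translated [Mon.], residuation and commutativity rules are classically admissible (the paper only asserts this; you actually check it), and Proposition~\ref{Prop:cancel} reused unchanged. Your faithfulness direction is fine as written.

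The gap is in the soundness direction. There you need $t(\rev{\beta})$ to be derivable in $\textbf{BiBBI.b}+t(Ax)+\beta$, and you justify it by ``$\rev{\beta}$ is already a theorem of $\SFour+\alpha$''. That fact gives you nothing in the target system, and in general $t(\rev{\beta})$ is not derivable from $\beta$.

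Taken literally, the statement is in fact false. Let $Ax$ be empty, $\alpha=\bot$ and $\beta=\top*\neg I$.
\begin{itemize}
\item Since $\rev{\beta}=\top\land\neg\top$, the side condition $\vdash_{\SFour+\bot}\rev{\beta}$ holds trivially, and $\vdash_{\SFour+\rev{\beta}}\bot$.
\item On the other side, $\top*\neg I$ is valid on the two-element group frame $\{e,a\}$, because $a,a\triangleright e$ and $e,a\triangleright a$.
\item Interpret $\lor^*$, $\backslash^*$ and $\bot^*$ additively there, as $\lor$, $\alpha\land\neg\beta$ and $\bot$. Then every \textbf{BiBBI.b} axiom and rule is validated, so $\bot$ is not derivable in $\textbf{BiBBI.b}+\beta$.
\end{itemize}
Deferring to the Generalised Embedding does not help: the paper gives no proof of it, and it is stated with the same defect.

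The repair is to read the side condition as $\vdash_{\SFour+Ax}\rev{\beta}$. Then $\SFour+Ax+\rev{\beta}$ and $\SFour+Ax$ have the same theorems. In the soundness direction you first eliminate every use of $\rev{\beta}$, and then translate an $\SFour+Ax$ derivation using only the \SFour cases and $t(Ax)$. So $t(\rev{\beta})$ is never needed.

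The paper's own proof follows exactly your route and is equally silent on this point: it dismisses soundness as ``straightforward''.
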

\begin{proof}
    Soundness of the embedding is obtained through induction, and is straightforward. For proving the faithfulness, we extend $\rev{-}$ as follows:

\begin{description}

    \item[] $[\alpha \backslash^* \beta]'=_{Def} [\alpha]' \land \neg [\beta]'$.

    \item[] $[\alpha \lor^* \beta]'=_{Def} [\alpha]' \lor  [\beta]'$.

    \item[]  $[\bot^*]'=_{Def} \bot$.
    
\end{description}

 The translations of the \textbf{BiBBI.b} inference rules and axioms are admissible in classical logic (and moreover the translations of the axiomatic extensions considered by Brotherston and Villard~\cite{intermediateBBI} are provable in \SFour). Thus, soundness of $\rev{-}$ easily obtains. Faithfulness of the embedding obtains just as in Lemma~\ref{Theor:Faithfulness}, as the proof that $\rev{-}$ is cancelling is straightforward.

\end{proof}

\section{Conclusions: Related and Future Work}
\label{Sec:conc}

\textbf{\emph{Weaker base logics.}} While this work, following Galmiche and Larchey-Wendling~\cite{galmiche_expressivity_2006}, took \BBI and \emph{classical} \SFour as its basis, we conjecture a similar embedding between \textbf{BI} with an intuitionistic base, and \emph{intuitionistic} \SFour~\cite{Simpson94}.
If these embeddings were parametric, that would be interesting in light of work by Cao et al.~\cite{Bringing_order} which aims to present a unifying semantics for separation logic, and in particular looks at intermediate logics for the additive connectives, with properties such as weak excluded middle
(this should not be confused with the \textbf{BiBBI} logics of the previous Section, which are intermediate with respect to the multiplicative connectives).
Care must be taken if we reach below \BBI on how the relational semantics for $\to$ and $\magicwand$ are to be related; for example, much work on separation logic with an intuitionistic base defines the semantics of $\to$ by the resource extension notion of Definition~\ref{induced_frame}, but this is a choice that has logical consequences.

\textbf{\emph{Modal extensions of \BBI.}} Modalities have been added to \BBI and separation logic for a variety of purposes. For example Epistemic separation logic~\cite{courtault:hal-01259768} extends the \BBI syntax with a knowledge operator $K$, understood via an additional accessibility relation in the semantics obeying the rules of modal logic \textbf{S5}. The approach of this paper differs by deriving necessity within the language of \BBI.

\textbf{\emph{Reverse translations.}} While the methodology of G\"odel that inspired this paper has only recently been rediscovered~\cite{Godel_Grundlagen}, there is prior work translating from bunched logics to modal logics:
Brotherson et al.~\cite{Brotherston2010ClassicalBI,intermediateBBI} employ such translations in order to apply Sahlqvist theorems for modal logics.
Their work differs from ours because the modal logics used, called `modal similarity types', are technical tools and not of independent interest, and because they do not use their translation to establish an embedding in the other direction.
It is also worth noting that our results are attained in somewhat simpler fashion than those of G\"odel, for whom the reverse translation involved a transformation into a certain normal form, which required a propositional version of Barr's theorem.

\textbf{\emph{A lattice of bunched companions.}}
The structure of the relationships between intermediate and modal logics via translation has been a major theme of interpretational proof theory, with highlights such as the Blok-Esakia theorem \cite{chagrov_modal_1992}. A similar lattice of connections between bunched and modal logics could be investigated.
In particular, our observation that all previously studied axiomatic \BBI-extensions (\cite{separation_theories, total_monoid_signature, Bringing_order}) map into \SFour raises the question of establishing a least and greatest bunched companion of \SFour.

\textbf{\emph{Justification logics.}}
In the network of embeddings of Figure~\ref{fig:embeddings}, justification logic plays an interesting role, as the translation from \SFour to \textbf{LP} requires a sophisticated `realisation' argument that works via structural proof theory rather than mere axiomatisation.
We are interested in how a more direct link between \BBI and a justification logic might proceed.

\textbf{\emph{An epistemic interpretation.}}
We suggest a philosophical interpretation of our embedding, where we consider our resources as \emph{evidence}, so that a resource forces a formula if it provides supporting (but not necessarily conclusive) evidence for it.
The additive connectives of \BBI then relate to the combination and extension of evidence, while knowledge is a derived notion, via the derived $\Box$ picked out by the translation $t(-)$: combining any new evidence with that which we already have, a given formula remains supported. In short, knowledge is that which is \textit{indefeasible under new evidence}. A number of plausible logics for knowledge lie between \SFour and \textbf{S5}, such as Stalnaker's logic of knowledge and belief based on \textbf{S4.2} \cite{articleS42}. The possibility to simulate classical epistemic modalities via \BBI-extensions could be a new approach to the field of substructural epistemic logics that consider the notion of evidence \cite{article_Substructural_Epistemic}.

\bibliographystyle{plain}
\bibliography{References}

\appendix

\section{Extended Proofs}
\subsection{Derivability Under Assumptions}

\begin{proof}[Proof of Lemma~\ref{lemma_HS4_b}(Lemma for HS4 Completeness)]
 We prove the result by induction on the structure of the derivation in \SFour.
\begin{description}
    \item[1.] $\vdash_{\SFour}\bigwedge\Gamma \rightarrow \alpha$ is an axiom. Thus, we apply rule 2 of \textbf{HS4} for void $\Gamma$, and obtain $\vdash_{\textbf{HS4}}\bigwedge\Gamma \rightarrow \alpha$. We obtain the desired result (modulo the formula translation) by detachment. It can be easily proved by induction that $\Gamma \vdash_{\textbf{HS4}}\alpha$ iff $\bigwedge \Gamma \vdash_{\textbf{HS4}}\alpha$: we can always reduce reasoning in a multiset of premises to reasoning made in a multiset of premises containing only the conjunction of all the formulae of the original multiset.

\item[2.] $\vdash_{\SFour}\bigwedge\Gamma \rightarrow \alpha$ is the result of application of \textit{modus ponens}. Thus, we have also in \SFour $\vdash_{\SFour}\alpha'$ and $\vdash_{\SFour}\alpha' \rightarrow (\bigwedge \Gamma \rightarrow \alpha)$. The second formula can be easily proven equivalent to $\vdash_{\SFour}\bigwedge \Gamma \rightarrow (\alpha' \rightarrow \alpha)$. Out of Lemma~\ref{lemma_HS4_a}, we obtain $\vdash_{\textbf{HS4}}\bigwedge \Gamma \rightarrow (\alpha' \rightarrow \alpha)$. Then, we apply detachment to it, obtaining $\bigwedge \Gamma \vdash_{\textbf{HS4}} \alpha' \rightarrow \alpha$. We apply the induction hypothesis on $\vdash _{\SFour}\alpha'$, obtaining $\vdash _{\textbf{HS4}}\alpha'$. To this, we apply rule 3 with $\bigwedge \Gamma \vdash_{\textbf{HS4}} \alpha' \rightarrow \alpha$, obtaining the result (modulo formula translation) of $\bigwedge \Gamma \vdash_{\textbf{HS4}} \alpha$.

\end{description}   
\end{proof}

\begin{proof}[Proof of Lemma~\ref{lemma_HBBI_a}]
    We proceed by induction on the structure of the derivation of $\alpha$ in \BBI.
 \begin{description}
        \item[1.] If $\alpha$ is an axiom, then we apply rule 2/rule 4 of \textbf{HBBI}/\textbf{HBBI*} for void $\Gamma$.

        \item[2.] We go case by case on the rules of \BBI.

\begin{description}

\item[2.1.] $\alpha$ is the result of the application of \textit{modus ponens}. Thus, we also have derivations in \BBI of the premises $\beta$ and $\beta \rightarrow \alpha$. By induction hypothesis on both, we have $\vdash_{\textbf{HBBI}} \beta$ and $\vdash_{\textbf{HBBI}} \beta \rightarrow \alpha$. Then, we apply rule 5 of \textbf{HBBI} (in which we note both contexts are void), and obtain $\vdash_{\textbf{HBBI}} \beta$. In \textbf{HBBI*} the proof is analogous.

\item[2.2.] $\alpha$ is the result of the application of rule R2 of \BBI. Thus, $\alpha \equiv (\beta * \gamma) \rightarrow \delta$, and we have in standard axiomatic \BBI that $\vdash_{\BBI}\beta \rightarrow (\gamma \magicwand \delta)$. We simply apply the induction hypothesis on the premise, obtain $\vdash_{\textbf{HBBI}/\textbf{HBBI*}} \beta \rightarrow (\gamma \magicwand \delta)$ and apply rule 5/rule 7 of \textbf{HBBI/HBBI*} (which retains the void $\Gamma$).

\item[2.3.] $\alpha$ is the result of the application of rule R3 of \BBI. The case is similar to the previous one.

\item[2.4.] $\alpha$ is the result of the application of rule R4 of \BBI. Thus, $\alpha \equiv (\beta * \gamma) \rightarrow (\delta * \epsilon)$, and we have premises $\vdash_{\BBI}\beta \rightarrow \delta$ and $\vdash_{\BBI}\gamma \rightarrow \epsilon$. We apply the induction hypothesis on both, obtaining $\vdash_{\textbf{HBBI}/\textbf{HBBI*}} \beta \rightarrow \gamma$ and $\vdash_{\textbf{HBBI}/\textbf{HBBI*}} \delta \rightarrow \epsilon$, and then apply for void $\Gamma$ rule 4/rule 6 of \textbf{HBBI}/\textbf{HBBI*} to obtain $\vdash_{\textbf{HBBI}/\textbf{HBBI*}} (\beta * \delta) \rightarrow (\gamma * \epsilon)$.

\end{description}
\end{description}
\end{proof}

Before proving Lemma~\ref{HBBI_Completeness_b1}, we prove two additional lemmas in standard axiomatic \BBI. These results are used for the rules of \textbf{HBBI*} relative to weakening and contraction in bunches.

\begin{lemma}[Lemma for Weakening]\label{HBBI_Weakening_lemma} If $\vdash_{\BBI}\mathcal{T}(\Gamma[\Gamma'])\rightarrow \alpha$, then $\vdash_{\BBI}\mathcal{T}(\Gamma[\Gamma'; \Gamma''])\rightarrow \alpha$.
\end{lemma}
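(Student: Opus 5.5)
The plan is to prove a stronger monotonicity statement and derive the lemma from it by transitivity. The key fact is that $\vdash_{\BBI}\mathcal{T}(\Gamma[\Delta'])\rightarrow\mathcal{T}(\Gamma[\Delta])$ whenever $\vdash_{\BBI}\mathcal{T}(\Delta')\rightarrow\mathcal{T}(\Delta)$. The lemma then follows from two observations. First, $\mathcal{T}(\Gamma';\Gamma'')=\mathcal{T}(\Gamma')\land\mathcal{T}(\Gamma'')$, so $\vdash_{\BBI}\mathcal{T}(\Gamma';\Gamma'')\rightarrow\mathcal{T}(\Gamma')$ is a classical tautology. Second, monotonicity gives $\vdash_{\BBI}\mathcal{T}(\Gamma[\Gamma';\Gamma''])\rightarrow\mathcal{T}(\Gamma[\Gamma'])$. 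Chaining this with the hypothesis $\vdash_{\BBI}\mathcal{T}(\Gamma[\Gamma'])\rightarrow\alpha$ by classical hypothetical syllogism (tautology plus two uses of modus ponens) yields the claim.

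I will prove monotonicity by induction on the depth of the hole in the context $\Gamma[-]$, that is, on the path from the root of the bunch tree to the occurrence of $\Delta$.

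\emph{Base case.} If $\Gamma[-]$ is the hole itself, the statement is the assumption.

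\emph{Additive step.} Suppose $\Gamma[-]=\Theta[-];\Sigma$, or symmetrically $\Sigma;\Theta[-]$. By the induction hypothesis $\vdash\mathcal{T}(\Theta[\Delta'])\rightarrow\mathcal{T}(\Theta[\Delta])$. Classical logic gives $\vdash(A\rightarrow B)\rightarrow(A\land C\rightarrow B\land C)$, and modus ponens finishes the step.

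\emph{Multiplicative step.} Suppose $\Gamma[-]=\Theta[-],\Sigma$. The induction hypothesis gives $\vdash\mathcal{T}(\Theta[\Delta'])\rightarrow\mathcal{T}(\Theta[\Delta])$. We also have $\vdash\mathcal{T}(\Sigma)\rightarrow\mathcal{T}(\Sigma)$ as a tautology. Rule R4 of \BBI then yields $\vdash\mathcal{T}(\Theta[\Delta'])*\mathcal{T}(\Sigma)\rightarrow\mathcal{T}(\Theta[\Delta])*\mathcal{T}(\Sigma)$. The mirror case $\Sigma,\Theta[-]$ is handled the same way, with the arguments of R4 swapped.

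The main obstacle is that bunches are taken modulo $\equiv$, so the decomposition $\Gamma[\Gamma']$ may only exist up to commutative monoid rearrangements. I must therefore check that $\mathcal{T}$ respects $\equiv$ up to provable equivalence in \BBI:
\begin{itemize}
\item for ``$;$'' with unit $\varnothing_{Add}$, commutativity, associativity and the unit law $\top\land A\leftrightarrow A$ are classical tautologies;
\item for ``$,$'' with unit $\varnothing_{Mult}$, commutativity is axiom B3, associativity in one direction is B4, and the unit laws are B1 and B2 together with B3;
\item the reverse direction of associativity, $(A*B)*C\rightarrow A*(B*C)$, is derived from B3, B4 and R4 by commuting, reassociating and commuting again;
\item congruence is handled by the same monotonicity induction applied in both directions.
\end{itemize}
With this in place, I can fix a representative in which $\Gamma'$ literally occupies the hole, and transport the hypothesis and conclusion along provable biconditionals. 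This bookkeeping is routine but is the only point requiring care.
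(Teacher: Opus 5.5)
Your proof is correct, but it takes a different route from the paper's. The paper inducts on the context $\Gamma[-]$ directly on the lemma's statement, with the succedent $\alpha$ allowed to vary. In the multiplicative case it uses R3 to turn $\mathcal{T}(\Delta'[\Gamma'])*\mathcal{T}(\Delta'')\rightarrow\alpha$ into $\mathcal{T}(\Delta'[\Gamma'])\rightarrow(\mathcal{T}(\Delta'')\magicwand\alpha)$. It then applies the induction hypothesis with this new succedent and uses R2 to move back. The additive case is the same, with classical currying and uncurrying in place of the residuation rules.

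You instead isolate a replacement (monotonicity) lemma for contexts, proved with R4 and monotonicity of $\land$, and then obtain weakening from one instance of it plus hypothetical syllogism. The paper's version needs no auxiliary lemma, but it relies on the residuation rules and on generalising the induction over the succedent. Yours relies only on functoriality of $*$, so it would also work in a $\magicwand$-free setting. It is also more reusable. The paper's contraction lemma follows at once from your monotonicity fact applied to the tautology $\mathcal{T}(\Gamma')\rightarrow\mathcal{T}(\Gamma')\land\mathcal{T}(\Gamma')$. The same fact, applied in both directions, gives the check that $\mathcal{T}$ respects $\equiv$ up to provable equivalence in \BBI, which the paper's proof silently assumes.
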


\begin{proof}
     We proceed by induction on the bunch $\Gamma[-]$.

\begin{description}
    \item[1.] For the base case, we assume that “$\Gamma[-]$'' in the premise is the empty bunch “$[-]$''. Thus, the premise is of the form $\vdash_{\BBI}\mathcal{T}(\Gamma')\rightarrow \alpha$.
    Given its classical basis, it can be axiomatically proven in \BBI that $(\alpha_1 \rightarrow \alpha_2) \rightarrow ((\alpha_1 \land \alpha_3) \rightarrow \alpha_2)$, and thus we obtain the desired $\vdash_{\BBI}\mathcal{T}(\Gamma')\land\mathcal{T}(\Gamma'') \rightarrow \alpha$, from \textit{modus ponens} and the right instantiation of such a scheme.
    
\item[2.] The bunch “$\Gamma[-]$'' has as its outermost structural symbol the “$,$''. Thus, the premise is of the form $\vdash_{\BBI}\mathcal{T}(\Delta'[\Gamma'], \Delta'')\rightarrow \alpha$. For definition of the formula translation of bunches, this is equivalent to $\vdash_{\BBI}(\mathcal{T}(\Delta'[\Gamma']) * \mathcal{T(}\Delta''))\rightarrow \alpha$. By applying rule $[\magicwand2]$ of \BBI on it, we obtain $\vdash_{\BBI}\mathcal{T}(\Delta'[\Gamma']) \rightarrow (\mathcal{T}(\Delta'')\magicwand \alpha)$. We apply the induction hypothesis, obtaining $\vdash_{\BBI}\mathcal{T}(\Delta'[\Gamma'; \Gamma'']) \rightarrow (\mathcal{T}(\Delta'')\magicwand \alpha)$. By applying the rule $[\magicwand2]$, we obtain $\vdash_{\BBI}(\mathcal{T}(\Delta'[\Gamma'; \Gamma'']) * \mathcal{T}(\Delta''))\rightarrow \alpha$, which is the desired $\vdash_{\BBI}\mathcal{T}(\Delta'[\Gamma'; \Gamma''],\Delta'')\rightarrow \alpha$.

\item[3.] The bunch “$\Gamma[-]$'' has as its outermost structural symbol the “$;$''. Thus, the premise is of the form $\vdash_{\BBI}\mathcal{T}(\Delta'[\Gamma']; \Delta'')\rightarrow \alpha$. For the definition of the formula translation of bunches, this is equivalent to $\vdash_{\BBI}(\mathcal{T}(\Delta'[\Gamma']) \land \mathcal{T(}\Delta''))\rightarrow \alpha$, which is equivalent to $\vdash_{\BBI}(\mathcal{T}(\Delta'[\Gamma']) \rightarrow (\mathcal{T}(\Delta'')\rightarrow \alpha)$. We apply the induction hypothesis, and obtain $\vdash_{\BBI}(\mathcal{T}(\Delta'[\Gamma', \Gamma'']) \rightarrow (\mathcal{T}(\Delta'')\rightarrow \alpha)$. For classical logic, this is equivalent to $\vdash_{\BBI}(\mathcal{T}(\Delta'[\Gamma', \Gamma'']) \land (\mathcal{T}(\Delta''))\rightarrow \alpha$, which is in fact the desired $\vdash_{\BBI}\mathcal{T}(\Delta'[\Gamma'; \Gamma''];\Delta'')\rightarrow \alpha$.
    
\end{description}
\end{proof}

\begin{lemma}[Lemma for Contraction]\label{HBBI_Contraction_lemma} If $\vdash_{\BBI}\mathcal{T}(\Gamma[\Gamma'; \Gamma'])\rightarrow \alpha$, then $\vdash_{\BBI}\mathcal{T}(\Gamma[\Gamma'])\rightarrow \alpha$.
    
\end{lemma}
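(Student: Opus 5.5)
We argue by induction on the structure of the context $\Gamma[-]$, following the proof of Lemma~\ref{HBBI_Weakening_lemma} step for step. The only real content is in the base case; the inductive cases just propagate the result through the surrounding structure.

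\textbf{Base case.} Here $\Gamma[-]$ is the hole $[-]$, so the hypothesis is $\vdash_{\BBI}\mathcal{T}(\Gamma')\land\mathcal{T}(\Gamma')\rightarrow\alpha$. The formula $\mathcal{T}(\Gamma')\rightarrow\mathcal{T}(\Gamma')\land\mathcal{T}(\Gamma')$ is an instance of the tautology $\phi\rightarrow\phi\land\phi$. It is therefore provable in the classical base of \BBI. Chaining the two implications by hypothetical syllogism, again a tautology schema plus two uses of \textit{modus ponens}, gives $\vdash_{\BBI}\mathcal{T}(\Gamma')\rightarrow\alpha$.

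\textbf{Case where the outermost connective of $\Gamma[-]$ is ``$,$''.} Modulo commutativity of ``$,$'', the hypothesis reads $\vdash_{\BBI}\mathcal{T}(\Delta'[\Gamma';\Gamma'])*\mathcal{T}(\Delta'')\rightarrow\alpha$. The steps are:
\begin{enumerate}
\item Apply rule R3 $[\magicwand 2]$ to obtain $\vdash_{\BBI}\mathcal{T}(\Delta'[\Gamma';\Gamma'])\rightarrow(\mathcal{T}(\Delta'')\magicwand\alpha)$.
\item Apply the induction hypothesis to the smaller context $\Delta'[-]$, taking $\mathcal{T}(\Delta'')\magicwand\alpha$ as the consequent. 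This is legitimate because the lemma is stated for an arbitrary formula $\alpha$. It yields $\vdash_{\BBI}\mathcal{T}(\Delta'[\Gamma'])\rightarrow(\mathcal{T}(\Delta'')\magicwand\alpha)$.
\item Apply rule R2 $[\magicwand 1]$ to return to $\vdash_{\BBI}\mathcal{T}(\Delta'[\Gamma'],\Delta'')\rightarrow\alpha$.
\end{enumerate}

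\textbf{Case where the outermost connective of $\Gamma[-]$ is ``$;$''.} Here we use exportation and importation in place of the $\magicwand$ rules:
\begin{enumerate}
\item Export: $(\phi\land\psi\rightarrow\alpha)\rightarrow(\phi\rightarrow(\psi\rightarrow\alpha))$ is a tautology, so we obtain $\mathcal{T}(\Delta'[\Gamma';\Gamma'])\rightarrow(\mathcal{T}(\Delta'')\rightarrow\alpha)$.
\item Apply the induction hypothesis with consequent $\mathcal{T}(\Delta'')\rightarrow\alpha$.
\item Import back using the converse tautology.
\end{enumerate}

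\textbf{The one subtle point.} This is the same point as in the weakening lemma. The hole $\Gamma'$ may sit directly under the root on either side, which is handled by commutativity and the congruence of $\equiv$. Also, $\mathcal{T}$ must respect $\equiv$ up to provable equivalence. To check this, note that the monoid laws for ``$;$'' and $\varnothing_{Add}$ hold classically. For ``$,$'' and $\varnothing_{Mult}$ we use axioms B1 to B4 together with rule R4 to get congruence of $*$. Once $\mathcal{T}$ is known to respect $\equiv$, the case split on the outermost connective is exhaustive, and the induction goes through with no genuine obstacle.
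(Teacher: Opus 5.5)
Your proof is correct and matches the paper's. The base case is closed by the classical equivalence of $\mathcal{T}(\Gamma')\land\mathcal{T}(\Gamma')$ with $\mathcal{T}(\Gamma')$, and the inductive cases follow the weakening lemma's argument: $[\magicwand 2]$, then the induction hypothesis, then $[\magicwand 1]$ for ``$,$'', and exportation/importation for ``$;$''. Your check that $\mathcal{T}$ respects $\equiv$ up to provable equivalence (via B1--B4 and R4) is a detail the paper leaves implicit. You also correctly use $[\magicwand 1]$ for the return step, where the paper's weakening proof writes $[\magicwand 2]$.
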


\begin{proof}
 We proceed by induction on the bunch $\Gamma[-]$. For the base case, we assume that “$\Gamma[-]$'' in the premise is the empty bunch “$[-]$''. Thus, the premise is ultimately of the form $\vdash_{\BBI}\mathcal{T}(\Gamma'; \Gamma')\rightarrow \alpha$. By definition of the formula translation of bunches, we have that $\vdash_{\BBI}\mathcal{T}(\Gamma') \land \mathcal{T}(\Gamma')\rightarrow \alpha$, which for classical logic, is equivalent to the desired $\vdash_{\BBI}\mathcal{T}(\Gamma')\rightarrow \alpha$. The rest of the induction cases are analogous to the ones for Lemma~\ref{HBBI_Weakening_lemma}. 
   
\end{proof}

\begin{proof}[Proof of Lemma~\ref{lemma_HS4_b}]
The theorem deals with the translation of multisets/bunches of formulae into standard axiomatic \BBI. We proceed by induction on derivation $\Gamma \vdash_{\textbf{HBBI}/\textbf{HBBI*}} \alpha$ first in \textbf{HBBI}, then in \textbf{HBBI*}: thus, we will omit the subscript for “$\vdash$''.

    \item[] For \textbf{HBBI}:

\begin{description}

    \item[1.] $\Gamma \vdash \alpha$ is the result of application of rule 1 or 2 of \textbf{HBBI}, and thus is either an axiom or belongs to $\Gamma$. If it's an axiom, because \BBI shares the same axioms of \textbf{HBBI}, we have $\alpha$ in the standard Hilbert calculus. Moreover, $\alpha \rightarrow (\bigwedge \Gamma \rightarrow \alpha)$ is an axiom. By \textit{modus ponens} of the standard Hilbert calculus on $\alpha$ and $\alpha \rightarrow (\bigwedge \Gamma \rightarrow \alpha)$, we have $\vdash_{\BBI}\bigwedge \Gamma \rightarrow \alpha$. If $\alpha$ is in $\Gamma$, then for some $\gamma_i \in \Gamma:=\{\gamma_1, ..., \gamma_n\}$, $\alpha\equiv \gamma_i$. Then, $\gamma_i \rightarrow (\gamma_1 \rightarrow \gamma_i)$ is an axiom, and with \textit{modus ponens} with axiom $(\gamma_i \rightarrow (\gamma_1 \rightarrow \gamma_i)) \rightarrow (\gamma_1 \rightarrow (\gamma_i \rightarrow \gamma_i))$, we have $\gamma_1 \rightarrow (\gamma_i \rightarrow \gamma_i)$ in the standard Hilbert calculus. Thus, we continue to weaken for the remaining formulae of $\Gamma$, ultimately obtaining $\gamma_n \rightarrow(\gamma_{n-1}\rightarrow (... \gamma_1 \rightarrow (\gamma_i \rightarrow \gamma_i)))$. Thus, we have that $\vdash_{\BBI} \bigwedge \Gamma \rightarrow \gamma_i$ in standard Hilbert calculus.

    \item[2.] $\Gamma \vdash \alpha$ is the result of application of rule 3 of \textbf{HBBI}. Thus, $\Gamma \vdash \alpha \equiv \Gamma; \Delta \vdash \alpha$, with $\Gamma \vdash \alpha'$ and $\Delta \vdash \alpha' \rightarrow \alpha$ premises of the rule.  We apply the induction hypothesis on both premises, and respectively obtain $\mathcal{T}(\Gamma) \rightarrow \alpha'$ and $\mathcal{T}(\Delta)\rightarrow(\alpha' \rightarrow \alpha)$ in standard \BBI. Considering that it can be axiomatically proven in classical logic, and thus in \BBI, that $((\alpha_1 \rightarrow \alpha_2)\land (\alpha_3 \rightarrow (\alpha_1 \rightarrow \alpha_4))) \rightarrow ((\alpha_1 \land \alpha_3) \rightarrow \alpha_4)$, we thus obtain $\vdash_{\BBI}(\bigwedge \Gamma') \land (\bigwedge \Gamma'') \rightarrow \alpha$ in the standard Hilbert system.

    \item[3.] $\Gamma \vdash \alpha$ is the result of application of rule 4 of \textbf{HBBI}. Thus, $\Gamma \vdash \alpha$ is of form $\Gamma \vdash (\alpha_1 * \alpha_2) \rightarrow (\alpha_3 * \alpha_4)$, with premises $\vdash \alpha_1 \rightarrow \alpha_3$ and $\vdash \alpha_2 \rightarrow \alpha_4$. By applying the induction hypothesis on the premises, we gain in standard \BBI both $\vdash_{\BBI}\top \rightarrow (\alpha_1 \rightarrow \alpha_3)$ and $\vdash_{\BBI}\top \rightarrow (\alpha_2 \rightarrow \alpha_4)$, respectively equivalent to $\vdash_{\BBI}(\alpha_1 \rightarrow \alpha_3)$ and $\vdash_{\BBI}(\alpha_2 \rightarrow \alpha_4)$. Then, we apply rule R4 of \BBI, and thus gain $\vdash_{\BBI}(\alpha_1 * \alpha_2)\rightarrow(\alpha_3 * \alpha_4)$, which we weaken to the desired $\vdash_{\BBI}\mathcal{T}(\Gamma)\rightarrow ((\alpha_1 * \alpha_2)\rightarrow(\alpha_3 * \alpha_4))$.

    \item[4.] $\Gamma \vdash \alpha$ is the result of application of rule 5 of \textbf{HBBI}. Thus, $\Gamma \vdash \alpha \equiv \Gamma \vdash (\alpha_1 * \alpha_2) \rightarrow \alpha_3$, with premise in \textbf{HBBI} $\Gamma \vdash \alpha_1 \rightarrow (\alpha_2 \magicwand \alpha_3)$. We apply the induction hypothesis on the premise, and obtain $\vdash_{\BBI}\mathcal{T}(\Gamma) \rightarrow (\alpha_1 \rightarrow (\alpha_2 \magicwand \alpha_3))$. Considering how we can axiomatically prove through R2 and R3 of \BBI that $\vdash_{\BBI}((\alpha_1 * \alpha_2) \rightarrow \alpha_3) \leftrightarrow (\alpha_1 \rightarrow (\alpha_2 \magicwand \alpha_3))$, we obtain $\vdash_{\BBI}\mathcal{T}(\Gamma) \rightarrow ((\alpha_1 * \alpha_2) \rightarrow \alpha_3)$.

    \item[5.] $\Gamma \vdash \alpha$ is the result of application of rule 6 of \textbf{HBBI}, and $\Gamma \vdash \alpha \equiv \Gamma \vdash (\alpha_1 \rightarrow (\alpha_2 \magicwand \alpha_3))$. The case is analogous to the previous one.

\end{description}

\item[]For \textbf{HBBI*}:

\begin{description}

    \item[1.] $\Gamma \vdash \alpha$ is the result of application of rule 1 of \textbf{HBBI*}. Thus, is of form $\alpha \vdash \alpha$. Since $\mathcal{T}(\alpha)= \alpha$, we only need to notice that $\alpha \rightarrow \alpha$ is an axiom, and thus we have $\vdash_{\BBI}\alpha \rightarrow \alpha$.

     \item[2.] $\Gamma \vdash \alpha$ is the result of application of rule 2 of \textbf{HBBI*}. Thus, is of form $\Gamma[\Gamma'; \Gamma''] \vdash \alpha$. The premise of the rule is $\Gamma[\Gamma'] \vdash \alpha$: by applying the induction hypothesis on it, we get $\vdash_{\BBI}\mathcal{T}(\Gamma[\Gamma'])\rightarrow \alpha$. The desired result obtains out of Lemma~\ref{HBBI_Weakening_lemma}.

     \item[3.] $\Gamma \vdash \alpha$ is the result of application of rule 3 of \textbf{HBBI*}. Thus, is of form $\Gamma[\Gamma']\vdash \alpha$, with premise $\Gamma[\Gamma'; \Gamma']\vdash \alpha$. By applying the induction hypothesis on the premise, we get $\vdash_{\BBI}\mathcal{T}(\Gamma[\Gamma'; \Gamma'])\rightarrow \alpha$. The desired result obtains out of Lemma~\ref{HBBI_Contraction_lemma}.

    \item [4.] $\Gamma \vdash \alpha$ is the result of application of rule 4 of \textbf{HBBI*}. Thus, $\alpha$ is among the axioms of \textbf{HBBI}. Considering how the axioms are shared with the standard axiomatic system, we have also $\alpha$ in \BBI. Also, we can instantiate axiom scheme $\beta \rightarrow (\gamma \rightarrow \beta)$ with $\alpha \rightarrow (\mathcal{T}(\Gamma) \rightarrow \alpha)$. Thus, we obtain $\vdash_{\BBI}\mathcal{T}(\Gamma) \rightarrow \alpha$ by \textit{modus ponens}.

    \item[5.] Cases for $\Gamma \vdash \alpha$ obtained from rules 5-8 are analogous to the corresponding rules for \textbf{HBBI}.

     \item[6.] $\Gamma \vdash \alpha$ is the result of rule 9 of \textbf{HBBI*}, and is of form $\Gamma, \Delta \vdash (\alpha * \beta)$, with premises $\Gamma\vdash \alpha$ and $\Delta \vdash \beta$. By applying the induction hypothesis on the premises, we gain $\vdash_{\BBI}\mathcal{T}(\Gamma)\rightarrow \alpha$ and $\vdash_{\BBI}\mathcal{T}(\Delta) \rightarrow \beta$, to which we apply R4 of \BBI to gain the desired $\vdash_{\BBI}\mathcal{T}(\Gamma) * \mathcal{T}(\Delta) \rightarrow (\alpha * \beta)$. 

\end{description}
    
\end{proof}

\begin{proof}[Proof of Theorem~\ref{Deduction_theorem} (Deduction Theorem for \BBI)]
    We proceed by induction on the structure of the given derivation in \textbf{HBBI}:

    \begin{description}
        \item[1.] If $\alpha; \Gamma \vdash_{\textbf{HBBI}} \beta$ is obtained from rule 1 of \textbf{HBBI}, then $\beta$ belongs to the multiset of assumptions $\Gamma, \alpha$. Thus, either $\beta \equiv \alpha$ or $\beta \in \Gamma$. In the first case, we would have to prove $\Gamma \vdash \beta \rightarrow \beta$, which is straightforward considering how $\beta \rightarrow \beta$ is an axiom (and thus the derivation obtains out of rule 2 of \textbf{HBBI}). In the second case, $\beta \in \Gamma$, thus we have $\Gamma \vdash \beta$ out of rule 1 of \textbf{HBBI}. Also, $\beta \rightarrow (\alpha \rightarrow \beta) \in Axiom$, thus $\vdash \beta \rightarrow (\alpha \rightarrow \beta)$ out of rule 2 of \textbf{HBBI} (for void $\Gamma$). Out of rule 3 of \textbf{HBBI}, we then obtain $\Gamma \vdash \alpha \rightarrow \beta$ out of the premises.

        \item[2.] If $\alpha; \Gamma \vdash_{\textbf{HBBI}} \beta$ is obtained from rule 2 of \textbf{HBBI}, we proceed as the previous case.

        \item[3.] If $\alpha; \Gamma \vdash \beta$ is obtained from rule 3 of \textbf{HBBI}, then $\Gamma \equiv \Gamma'; \Gamma''$ and $\alpha$ is either in the antecedent of the first premise or in the antecedent of the second premise: in the first case we have $\alpha;\Gamma' \vdash \gamma$, in the second instead $\alpha;\Gamma'' \vdash \gamma \rightarrow \beta$.
\begin{description}
    \item[3.1] If $\alpha;\Gamma' \vdash \gamma$, then we apply to it the induction hypothesis, and obtain $\Gamma' \vdash \alpha \rightarrow \gamma$. Considering how $(\gamma \rightarrow \beta)\rightarrow((\alpha \rightarrow \gamma) \rightarrow (\alpha \rightarrow \beta)) \in Axiom$, we obtain through rule 2 of \textbf{HBBI} for void $\Gamma$ the formula $\vdash(\gamma \rightarrow \beta) \rightarrow ((\alpha \rightarrow \gamma)\rightarrow(\alpha \rightarrow \beta))$. By applying rule 3 of \textbf{HBBI} first with $\Gamma'' \vdash \gamma \rightarrow \beta$ and then with $\Gamma' \vdash \alpha \rightarrow \gamma$, we obtain $\Gamma \vdash \alpha \rightarrow \beta$.

    \item[3.2] If $\alpha;\Gamma'' \vdash \gamma \rightarrow \beta$, then we apply the induction hypothesis on it, and obtain $\Gamma'' \vdash \alpha \rightarrow (\gamma \rightarrow \beta)$. Out of classical axiom scheme $(\alpha \rightarrow (\beta \rightarrow \gamma)) \rightarrow (\beta \rightarrow (\alpha \rightarrow \gamma))$ (and rule 2 of \textbf{HBBI}, with void $\Gamma$) and rule 3 of \textbf{HBBI}, we get $\Gamma''\vdash \gamma \rightarrow (\alpha \rightarrow \beta)$. Then, we apply rule 3 of \textbf{HBBI} with $\Gamma' \vdash \gamma$ to get the desired $\Gamma \vdash \alpha \rightarrow \beta$.
    
\end{description}
        
\item[4.] If $\alpha; \Gamma \vdash_{\textbf{HBBI}} \beta$ is obtained from rule 4 of \textbf{HBBI}, then the formula is of the shape $\alpha'; \Gamma \vdash (\alpha * \beta)\rightarrow(\gamma * \delta)$, with premises $\vdash \alpha \rightarrow \gamma$ and $\vdash \beta \rightarrow \delta$. We apply rule 4 of \textbf{HBBI} on both premises, and obtain $\Gamma \vdash (\alpha * \beta)\rightarrow(\gamma * \delta)$. Then, we notice that $((\alpha * \beta)\rightarrow(\gamma * \delta)) \rightarrow (\alpha' \rightarrow ((\alpha * \beta)\rightarrow(\gamma * \delta)) \in Axiom$, and thus obtain out of rule 2 of \textbf{HBBI} (with void $\Gamma$) the formula $\vdash ((\alpha * \beta)\rightarrow(\gamma * \delta)) \rightarrow (\alpha' \rightarrow ((\alpha * \beta)\rightarrow(\gamma * \delta))$. Then, out of rule 3 of \textbf{HBBI}, we obtain $\Gamma\vdash\alpha' \rightarrow ((\alpha * \beta)\rightarrow(\gamma * \delta)$

\item[5.] If $\alpha; \Gamma \vdash_{\textbf{HBBI}} \beta$ is obtained from rule 5 of \textbf{HBBI}, then the formula is of the shape $\alpha'; \Gamma \vdash (\alpha * \beta) \rightarrow \gamma$, with premise $\alpha'; \Gamma \vdash (\alpha \rightarrow (\beta \magicwand \gamma))$. We apply the induction hypothesis on the premise, and obtain $\Gamma \vdash \alpha' \rightarrow(\alpha \rightarrow (\beta \magicwand \gamma))$. Considering how it can be proven in \BBI that $(\alpha \rightarrow (\beta \magicwand \gamma))\rightarrow ((\alpha * \beta)\rightarrow \gamma)$, then by Lemma~\ref{HBBI_Completeness_b1} we have $\vdash (\alpha \rightarrow (\beta \magicwand \gamma))\rightarrow ((\alpha * \beta)\rightarrow \gamma)$, and we can easily conclude $\Gamma \vdash \alpha' \rightarrow ((\alpha * \beta) \rightarrow \gamma)$.

\item[6.] The case with rule 6 of \textbf{HBBI} is analogous to the previous one.

    \end{description}
    
    The deduction theorem on $\rightarrow$ for \textbf{HBBI*} is proven in a similar manner. Let's now prove the second result by induction on the derivation in \textbf{HBBI*}: if $\alpha, \Gamma \vdash_{\textbf{HBBI*}} \beta$, then $\Gamma \vdash_{\textbf{HBBI*}} \alpha \magicwand \beta$. 

    \begin{description}

        \item[1.] $\Gamma ,\alpha\vdash \beta$ is obtained from rule 2 of \textbf{HBBI*}. Thus, is of form $\alpha, \Gamma[\Delta'; \Delta'']\vdash \beta$, with premise $\alpha, \Gamma[\Delta'] \vdash \beta$. Thus, we simply apply the induction hypothesis on the premise, obtaining $\Gamma[\Delta'] \vdash \alpha \magicwand \beta$, and then weaken to the desired $\Gamma[\Delta'; \Delta''] \vdash \alpha \magicwand \beta$ through rule 2 of \textbf{HBBI*}.

        \item[2.] $\Gamma ,\alpha\vdash \beta$ is obtained from rule 3 of \textbf{HBBI*}. Thus, is of form $\alpha, \Gamma[\Delta']\vdash \beta$, with premise $\alpha, \Gamma[\Delta'; \Delta'] \vdash \beta$. The case is similar to the previous one.

        \item[3.]  $\Gamma ,\alpha'\vdash \alpha$ is obtained from rule 4 of \textbf{HBBI*}, and thus $\alpha \in Axiom$. We apply rule 4 again, this time obtaining $\Gamma \vdash \alpha$. Also, we notice that $\alpha \rightarrow ((\alpha*\alpha')\rightarrow \alpha) \in Axiom$. Thus, we apply again rule 4, obtaining $\vdash \alpha \rightarrow ((\alpha*\alpha')\rightarrow \alpha)$. Considering how we have $\Gamma \vdash \alpha$ and $\vdash\alpha \rightarrow ((\alpha*\alpha')\rightarrow \alpha)$, we obtain $\Gamma \vdash (\alpha * \alpha') \rightarrow \alpha$ out of rule 5 of \textbf{HBBI*}. We then apply rule 8 of \textbf{HBBI*}, and obtain $\Gamma \vdash \alpha \rightarrow (\alpha' \magicwand \alpha)$. Then, we apply rule 5 between $\Gamma\vdash \alpha$ and $\Gamma \vdash \alpha \rightarrow (\alpha' \magicwand \alpha)$, then contraction through rule 3 of \textbf{HBBI*}, and conclude $\Gamma \vdash \alpha' \magicwand \alpha$.

        \item[4.] $\Gamma ,\alpha\vdash \beta$ is obtained from rule 5 of \textbf{HBBI*}, and thus is of form $\alpha', \Gamma; \Delta \vdash \beta$. We want to prove $\Gamma; \Delta \vdash \alpha' \magicwand \beta$. The premises can be of the following forms:

\begin{description}

\item[4.1] $\alpha', \Gamma \vdash \alpha \rightarrow \beta$ and $\Delta\vdash \alpha$. We apply the induction hypothesis on the first premise, and obtain $\Gamma \vdash \alpha' \magicwand (\alpha \rightarrow \beta)$. We easily prove the \textit{a fortiori} of this formula, obtaining $\Gamma \vdash \alpha \rightarrow (\alpha' \magicwand (\alpha \rightarrow \beta))$. Then, we apply rule 7 of \textbf{HBBI*}, and obtain $\Gamma \vdash (\alpha * \alpha') \rightarrow (\alpha \rightarrow \beta)$. Considering how $(\alpha_1 \rightarrow (\alpha_2 \rightarrow \alpha_3)) \rightarrow (\alpha_2 \rightarrow (\alpha_1 \rightarrow \alpha_3))$ is a classical tautology, and thus can be easily proven in \BBI, by Lemma~\ref{lemma_HBBI_a} we have its instantiation $\vdash((\alpha* \alpha') \rightarrow (\alpha \rightarrow \beta)) \rightarrow (\alpha \rightarrow ((\alpha*\alpha') \rightarrow \beta))$ in \textbf{HBBI*}. We apply rule 5 of \textbf{HBBI*} between $\Gamma \vdash (\alpha * \alpha') \rightarrow (\alpha \rightarrow \beta)$ and $\vdash((\alpha* \alpha') \rightarrow (\alpha \rightarrow \beta)) \rightarrow (\alpha \rightarrow ((\alpha*\alpha') \rightarrow \beta))$, obtaining $\Gamma \vdash(\alpha \rightarrow ((\alpha*\alpha') \rightarrow \beta)$. Then, we apply again rule 5, this time with premise $\Delta \vdash \alpha$, obtaining $\Gamma; \Delta \vdash (\alpha * \alpha') \rightarrow \beta$. Then, we apply rule 8 of \textbf{HBBI*}, obtaining $\Gamma; \Delta \vdash \alpha \rightarrow (\alpha' \magicwand \beta)$. Then, we apply again rule 5 with premise $\Delta \vdash \alpha$, obtaining $\Gamma; \Delta; \Delta \vdash \alpha' \magicwand \beta$. The desired $\Gamma; \Delta \vdash \alpha' \magicwand \beta$ follows by contraction through rule 3.

\item[4.2] $\Gamma \vdash \alpha \rightarrow \beta$ and $\alpha', \Delta\vdash \alpha$. We apply the induction hypothesis on the second premise, and obtain $\Delta \vdash \alpha' \magicwand \alpha$. Again, we easily obtain the \textit{a fortiori} of it, thus having $\Delta \vdash (\alpha \rightarrow \beta) \rightarrow (\alpha' \magicwand \alpha)$, to which we apply rule 7 of \textbf{HBBI*}, obtaining $\Delta \vdash ((\alpha \rightarrow \beta)* \alpha') \rightarrow \alpha$. Considering how in classical logic can be easily proved $(\alpha_1 \rightarrow \alpha_2) \rightarrow ((\alpha_2 \rightarrow \alpha_3)\rightarrow (\alpha_1 \rightarrow \alpha_3))$, we can easily prove in \BBI its instantiation $((\alpha \rightarrow \beta)*\alpha' \rightarrow \alpha)\rightarrow ((\alpha \rightarrow \beta)\rightarrow((\alpha \rightarrow \beta)*\alpha'\rightarrow \beta))$. Out of Lemma~\ref{lemma_HBBI_a} thus, we obtain $\vdash((\alpha \rightarrow \beta)*\alpha' \rightarrow \alpha)\rightarrow ((\alpha \rightarrow \beta)\rightarrow((\alpha \rightarrow \beta)*\alpha'\rightarrow \beta))$ in \textbf{HBBI*}. We apply rule 5 of \textbf{HBBI*} between $\vdash((\alpha \rightarrow \beta)*\alpha' \rightarrow \alpha)\rightarrow ((\alpha \rightarrow \beta)\rightarrow((\alpha \rightarrow \beta)*\alpha'\rightarrow \beta))$ and $\Delta \vdash ((\alpha \rightarrow \beta)* \alpha') \rightarrow \alpha$, obtaining $\Delta\vdash ((\alpha \rightarrow \beta)\rightarrow((\alpha \rightarrow \beta)*\alpha'\rightarrow \beta))$. Then, we apply rule 5 again between $\Delta\vdash ((\alpha \rightarrow \beta)\rightarrow((\alpha \rightarrow \beta)*\alpha'\rightarrow \beta))$ and hypothesis $\Gamma \vdash \alpha \rightarrow \beta$, obtaining $\Gamma; \Delta \vdash (\alpha \rightarrow \beta)*\alpha'\rightarrow \beta$. We apply to it rule 8 of \textbf{HBBI*}, obtaining $\Gamma; \Delta \vdash (\alpha \rightarrow \beta)\rightarrow(\alpha'\magicwand \beta)$. We apply again rule 5 $\Gamma; \Delta \vdash (\alpha \rightarrow \beta)\rightarrow(\alpha'\magicwand \beta)$ and hypothesis $\Gamma \vdash \alpha \rightarrow \beta$, obtaining $\Gamma; \Gamma; \Delta \vdash \alpha' \magicwand \beta$. Applying contraction through rule 3, we obtain the desired $\Gamma; \Delta \vdash \alpha' \magicwand \beta$.
    
\end{description}

        \item[5.] $\Gamma ,\alpha\vdash \beta$ is obtained from rule 6 of \textbf{HBBI*}, and thus is of form $\alpha', \Gamma \vdash (\alpha * \beta) \rightarrow (\gamma * \delta)$. The premises are $\vdash \alpha \rightarrow \gamma$ and $\vdash \beta \rightarrow \delta$. We apply rule 6 to the premises, this time obtaining $\Gamma \vdash (\alpha * \beta) \rightarrow (\gamma * \delta)$. We produce the \textit{a fortiori} of $\Gamma \vdash (\alpha * \beta) \rightarrow (\gamma * \delta)$, obtaining $\Gamma \vdash (((\alpha * \beta) \rightarrow (\gamma * \delta))* \alpha') \rightarrow ((\alpha * \beta) \rightarrow (\gamma * \delta))$. We then apply rule 8 of \textbf{HBBI*}, and obtain $\Gamma \vdash (((\alpha * \beta) \rightarrow (\gamma * \delta)) \rightarrow (\alpha' \magicwand ((\alpha * \beta) \rightarrow (\gamma * \delta)))$. By rule 5 with $\Gamma \vdash (\alpha * \beta) \rightarrow (\gamma * \delta)$ and contraction (through rule 3), we obtain the desired $\Gamma \vdash \alpha' \magicwand ((\alpha * \beta) \rightarrow (\gamma * \delta))$.

        \item[6.] $\Gamma ,\alpha\vdash \beta$ is obtained from rule 7 of \textbf{HBBI*}, and thus is of form 
        $\Gamma, \alpha' \vdash \alpha \rightarrow (\beta \magicwand \gamma)$, with premise $\Gamma, \alpha' \vdash (\alpha * \beta) \rightarrow \gamma$. We apply the induction hypothesis to the premise, and obtain $\Gamma \vdash \alpha' \magicwand ((\alpha * \beta) \rightarrow \gamma)$. Then, we notice that it can be proven in \BBI that $((\alpha * \beta)\rightarrow \gamma)\rightarrow (\alpha \rightarrow (\beta \magicwand \gamma))$. By Lemma~\ref{lemma_HBBI_a}, we have $\vdash((\alpha * \beta)\rightarrow \gamma)\rightarrow (\alpha \rightarrow (\beta \magicwand \gamma))$, and we can easily conclude $\Gamma \vdash \alpha' \rightarrow ((\alpha \rightarrow (\beta \magicwand \gamma))$.

        \item[7.] $\Gamma ,\alpha\vdash \beta$ is obtained from rule 8 of \textbf{HBBI*}, and thus is of form 
        $\Gamma, \alpha' \vdash (\alpha * \beta) \rightarrow \gamma$, with premise $\Gamma, \alpha' \vdash \alpha \rightarrow (\beta \magicwand \gamma)$. The case is similar to the previous one.

        \item[8.] $\Gamma, \alpha\vdash \beta$ is obtained from rule 9 of \textbf{HBBI*}, and thus is of form $ \Gamma, \alpha \vdash \alpha_1 * \alpha_2$. The premises are either $\Gamma\vdash \alpha_1$ and $\alpha \vdash \alpha_2$, or $\alpha \vdash \alpha_1$ and $\Gamma \vdash \alpha_2$. Let's assume it's the first case (the second it's entirely analogous). Since $\alpha \vdash \alpha_2$, then $\alpha \equiv \alpha_2$. We notice that $(\alpha_1 * \alpha_2) \rightarrow (\alpha_1 * \alpha_2)$ is an axiom, thus we have $\Gamma \vdash (\alpha_1 * \alpha_2) \rightarrow (\alpha_1 * \alpha_2)$ out of rule 4 of \textbf{HBBI*}. Then, we apply rule 8, and obtain $\Gamma \vdash \alpha_1 \rightarrow (\alpha_2 \magicwand (\alpha_1 * \alpha_2))$. We apply rule 5 between $\Gamma \vdash \alpha_1 \rightarrow (\alpha_2 \magicwand (\alpha_1 * \alpha_2))$ and hypothesis $\Gamma \vdash \alpha_1$, then we apply contraction through rule 2, and obtain the desired $\Gamma \vdash \alpha_2 \magicwand (\alpha_1 * \alpha_2)$.

    \end{description}

\end{proof}

\begin{proof}[Proof of (\ref{detachment_theorem_magicwand}), Theorem~\ref{detachment} (Detachment Principle on $\magicwand$ on HBBI*)]

To prove (c), we show that the rule of \textit{modus ponens} on $\magicwand$ is admissible in \textbf{HBBI*}. This means that if we have premises $\Gamma\vdash_{\textbf{HBBI*}} \alpha \magicwand \beta$ and $\Delta \vdash_{\textbf{HBBI*}} \alpha$, we can conclude $\Gamma, \Delta\vdash_{\textbf{HBBI*}} \beta$. We notice that this version of \textit{modus ponens} employs ``,'' rather than ``;'' to join the bunches of assumptions of the premises. To show it, we proceed like this: for classical reasoning, we can easily produce the \textit{a fortiori} $\Gamma\vdash_{\textbf{HBBI*}} (\alpha \magicwand \beta) \rightarrow (\alpha \magicwand \beta)$ of the premise $\Gamma\vdash_{\textbf{HBBI*}} \alpha \magicwand \beta$. Then, we apply rule 7 of \textbf{HBBI*}, and obtain $\Gamma\vdash_{\textbf{HBBI*}} ((\alpha \magicwand \beta)* \alpha) \rightarrow \beta $. We then apply rule 9 of \textbf{HBBI*} to the premises $\Delta\vdash_{\textbf{HBBI*}} \alpha$ and $\Gamma\vdash_{\textbf{HBBI*}} \alpha \magicwand \beta$, and obtain $\Gamma, \Delta \vdash_{\textbf{HBBI*}} (\alpha \magicwand \beta)* \alpha$. We then apply rule 5 of \textbf{HBBI*} between  $\Gamma, \Delta \vdash_{\textbf{HBBI*}} \alpha* (\alpha \magicwand \beta)$ and $\Gamma\vdash_{\textbf{HBBI*}} (\alpha *(\alpha \magicwand \beta)) \rightarrow \beta$, apply contraction on $\Gamma$ through rule 3, and finally obtain the desired $\Gamma, \Delta \vdash \beta.$ Thus, detachment on $\magicwand$ for \textbf{HBBI*} obtains out of the admissible rule of \textit{modus ponens} on $\magicwand$ between premise $\Gamma \vdash_{\textbf{HBBI*}}\alpha \magicwand \beta$ and $\alpha \vdash \alpha$, obtained through rule 1 of \textbf{HBBI*}.
   
\end{proof}

\begin{proof}[Proof of Lemma~\ref{HBBI_Completeness_b2}]
  We proceed by induction on the derivation in \BBI.
\begin{description}

    \item[1.] $\vdash_{\BBI}\mathcal{T} (\Gamma) \rightarrow \alpha$ is an axiom. The result obtains similarly as in case 1 of Lemma~\ref{lemma_HS4_b}.

    \item[2.] $\vdash_{\BBI}\mathcal{T} (\Gamma) \rightarrow \alpha$ is the result of application of \textit{modus ponens}. Again, the result is analogous to case 2 of Lemma~\ref{lemma_HS4_b}.

    \item[3.] $\vdash_{\BBI}\mathcal{T} (\Gamma) \rightarrow \alpha$ is the result of application of R2 of \BBI: thus, $\mathcal{T} (\Gamma) \rightarrow \alpha \equiv (\tau' * \tau'') \rightarrow \alpha$. As a premise, we have $\vdash_{\BBI} \tau' \rightarrow (\tau'' \magicwand \alpha)$. We apply the induction hypothesis on the premise, and obtain $\tau'\vdash_{\textbf{HBBI*}}  \tau'' \magicwand \alpha$. Then, we apply detachment on $\magicwand$, obtaining $\tau', \tau''\vdash_{\textbf{HBBI*}} \alpha$. Considering the definition of the formula translation of bunches, we in fact obtained $\Gamma \vdash \alpha$, with bunch $\Gamma \equiv \tau', \tau''$. 

    \item[4.] $\vdash_{\BBI}\mathcal{T} (\Gamma) \rightarrow \alpha$ is the result of application of R3 of \BBI: thus, $\mathcal{T} (\Gamma) \rightarrow \alpha \equiv \mathcal{T} (\Gamma) \rightarrow (\alpha' \magicwand \alpha'')$. As a premise, we have $\vdash_{\BBI} (\mathcal{T} (\Gamma) *\alpha') \rightarrow \alpha''$. We apply the induction hypothesis on it, and obtain $\mathcal{T}(\Gamma)* \alpha' \vdash_{\textbf{HBBI*}} \alpha''$. We notice that $\mathcal{T}(\Gamma)* \alpha' \vdash_{\textbf{HBBI*}} \alpha''$ is actually equivalent to $\mathcal{T}(\Gamma), \alpha' \vdash_{\textbf{HBBI*}} \alpha''$. Thus, we apply the deduction theorem on $\magicwand$, obtaining $\mathcal{T}(\Gamma) \vdash_{\textbf{HBBI*}}\alpha' \magicwand \alpha''$. Similarly to our reasoning on multisets, having as a premise bunch $\Gamma$ or its formula translation $\mathcal{T}(\Gamma)$ is equivalent. Thus, we easily conclude $\Gamma \vdash_{\textbf{HBBI*}}\alpha' \magicwand \alpha''$

    \item[5.] $\vdash_{\BBI}\mathcal{T} (\Gamma) \rightarrow \alpha$ is the result of application of R4 of \BBI: thus, $\mathcal{T} (\Gamma) \rightarrow \alpha \equiv (\tau'*\tau'') \rightarrow (\alpha' * \alpha'')$. As premises, we have $\vdash_{\BBI}\tau' \rightarrow \alpha'$ and $\vdash_{\BBI}\tau'' \rightarrow \alpha''$. We apply the induction hypothesis to both, obtaining $\tau'\vdash_{\textbf{HBBI*}} \alpha'$ and $\tau''\vdash_{\textbf{HBBI*}} \alpha''$. We apply rule 9 of \textbf{HBBI*}, and obtain the desired $\tau', \tau''\vdash_{\textbf{HBBI*}} \alpha'*\alpha''$.

\end{description}

\end{proof}

\subsection{Extending the Embedding: Hybrid and BiBBI Logics}

For hybrid bunched implication logic, we extend the notion of evaluation and validity in a relational frame by adding the following clauses, as defined in \cite{separation_theories}:

\begin{definition}[Hybrid \BBI valuation] A hybrid evaluation $\rho$ for a \BBI relational resource model extends a standard evaluation by adding a mapping from nominals $l$ to elements of $\rho(l)\in M$. This time, we explicitly notify the presence of $\rho$ in the forcing relationship, as it will be relevant for the clause for the binder $\downarrow$ (should the hybrid language employ that as well). Also, the forcing relationship is extended as follows:

\begin{description}
    \item[] $r\Vdash_\rho l$ iff $\rho(l)=r$.
    \item[] $r \Vdash_\rho @_l \alpha$ iff $\rho(l)\Vdash \alpha$.
    \item[] $r\Vdash_\rho \downarrow l. \alpha$ iff $r \Vdash_{\rho[l:=r]} \alpha$, where $\rho[l:=r]$ is notation for the hybrid evaluation defined as $\rho$ except that $[\rho:=r](l)=_{Def}r$.
\end{description}

In short, the forcing relation evaluates nominal $l$ true at evaluation point $r$ if the reference of $l$ assigned by $\rho$ is $r\in M$, and $@_l\alpha$ if the formula $\alpha$ is true at the evaluation point which is the reference of $l$. Binder $\downarrow$ binds a nominal to the point of evaluation, and thus $\downarrow l. \alpha$ is true at evaluation point $r$ if and only if $\alpha$ is true at $r$ when $l$ refers to $r$.
    
\end{definition}

We recall that system \textbf{K}$_{\mathcal{H}(@)}$ is proven complete with regard to arbitrary extensions with pure axioms \cite{Hybrid_Logic_Pure_extensions}.
Thus, in order to obtain a sound and complete axiomatic system for hybrid logic \SFour, we add to \textbf{K}$_{\mathcal{H}(@)}$ pure axioms expressing reflexivity and transitivity: these are axioms 8 and 9 of the following definition.

\begin{definition}[Axiomatic Hybrid \SFour]
We add to a finite set of axioms for propositional classical logic the following axiom schemes:

\begin{enumerate}
\item [1.] $\vdash\Box(\alpha \rightarrow \beta) \rightarrow(\Box \alpha \rightarrow \Box\beta)$.
\item[2.] $\vdash@_i(\alpha \rightarrow \beta) \rightarrow(@_i \alpha \rightarrow @_i\beta)$.
\item[3.] $\vdash@_i\alpha \leftrightarrow \neg@_i \neg \alpha$.
\item[4.] $\vdash@_i i$.
\item[5.] $\vdash@_i@_j \alpha \leftrightarrow@_j\alpha$.
\item[6.] $\vdash i \rightarrow (\alpha \leftrightarrow@_i \alpha).$
\item[7.] $\vdash \Diamond @_i \alpha \rightarrow@_i \alpha$.
\item[8.] $\vdash @_i\Diamond i$.
\item[9.] $\vdash @_i\Diamond j \land @_j\Diamond k \rightarrow @_i\Diamond k$.

\end{enumerate}

We have the following deduction rules:

\begin{enumerate}

\item[R1.] If $\vdash\alpha$ and $\vdash\alpha \rightarrow \beta$, infer $\vdash\beta$. \textit{(Modus Ponens).}

\item[R2] If $\vdash \alpha$, then $\vdash \alpha^{\sigma}$. \textit{(Substitution).}

\item[R3] If $\vdash\alpha$, then $\vdash\Box \alpha$. \textit{(Generalization-$\Box$).}

\item[R4] If $\vdash\alpha$, then $\vdash@_i \alpha$. \textit{(Generalization-$@$).}

\item[R5] If $\vdash@_i\alpha$ and $i$ does not occur in $\alpha$, then $\vdash \alpha$. \textit{(Name).}

\item[R6] If $\vdash@_i \Diamond j \rightarrow @_j\alpha$ and $j\cancel{=}i$ does not occur in $\alpha$, then $\vdash @_i\Box\alpha$. \textit{(Bounded Generalisation).}

\end{enumerate}
By $\sigma$, we mean any substitution that uniformly replaces nominals with nominals.

\end{definition}

Axiomatic system \textbf{HyBBI} is proposed in \cite{separation_theories}, and is a conservative extension of \BBI. Out of practicality for some technical development, axiomatization of \textbf{HyBBI} in \cite{separation_theories} employs septraction $\septraction$ rather than $\magicwand$.

\begin{definition}[Axiomatic System for Hybrid \BBI] We add to the set of axioms of \BBI the following axiom schemes:

\begin{enumerate}
\item[1.] $\vdash@_i(\alpha \rightarrow \beta) \rightarrow(@_i \alpha \rightarrow @_i\beta)$.
\item[2.] $\vdash@_i\alpha \leftrightarrow \neg@_i \neg \alpha$.
\item[3.] $\vdash i \rightarrow (\alpha \rightarrow@_i \alpha).$
\item[4.] $\vdash@_i i$.
\item[5.] $\vdash@_i k \rightarrow @_k i$.
\item[6.] $\vdash@_i k \land @_k \alpha \rightarrow @_i \alpha$.
\item[7.] $\vdash@_i@_j \alpha \leftrightarrow@_j\alpha$.

\item[8.] $\vdash (@_i (k*k')\land @_k\alpha\land@_{k'}\beta) \rightarrow @_i(\alpha*\beta)$.

\item[9.] $\vdash (@_i (k\septraction k')\land @_k\alpha\land@_{k'}\beta) \rightarrow @_i(\alpha \septraction \beta)$.

\end{enumerate}

A number of axioms in the system are the same (or slight variations) of those of \SFour$_{\mathcal{H}(@)}$. In addition to R1-R5 of \SFour$_{\mathcal{H}(@)}$, we have the following substructural deduction rules:

\begin{enumerate}

\item[SR1.] If $\vdash (@_i(k*k')\land @_k\alpha\land@_{k'}\beta)\rightarrow \gamma$ and $k, k'$ are not in $\alpha, \beta, \gamma$ or $\{i\}$, then $@_i(\alpha * \beta)\rightarrow \gamma$. \textit{(Paste $*$).}

\item[SR2] If $\vdash (@_i(k \septraction k')\land @_k\alpha\land@_{k'}\beta)\rightarrow \gamma$ and $k, k'$ are not in $\alpha, \beta, \gamma$ or $\{i\}$, then $@_i(\alpha \septraction \beta)\rightarrow \gamma$. \textit{(Paste $\magicwand$).}

\end{enumerate}

\end{definition}

We prove that \SFour$_{\mathcal{H}(@)}$ embeds into \textbf{HyBBI}. This result establishes a parametric embedding connecting extensions of \SFour$_{\mathcal{H}(@)}$ to extensions of \textbf{K}$_{HyBBI}$. The translation from the language of hybrid \SFour into the language of hybrid \BBI adds clauses for nominals and the validity operator $@_i$. In compliance with the original notation of both systems, we also directly translate ``$\Diamond$'' in $t(-)$ and ``$\septraction$'' in $\rev{-}$:

\begin{definition}[Translation from \SFour$_{\mathcal{H}(@)}$ into HyBBI] The translation from the language of \SFour$_{\mathcal{H}(@)}$ into the language of \BBI is $t(-): \mathcal{L}_{S4_{\mathcal{H}(@)}}\longrightarrow {\mathcal{L}_{\textbf{HyBBI}}}$ such that

\begin{description}
    \item[1.] $t(\Box \alpha) =_{Def}\top \magicwand t(\alpha)$.
    \item[2.] $t(\Diamond \alpha) = =_{Def}\top \septraction t(\alpha) $
    \item[3.] $t(\neg \alpha) =_{Def}\neg t(\alpha) $.
    \item[4.] $t(\alpha \circ \beta)=_{Def}t(\alpha) \circ t(\beta) $, for $\circ \in \{\land, \lor, \rightarrow\}$.
    \item[5.] $t(C)=_{Def}C$, for $C \in \{\top, \bot\}$ or $C \equiv p$ for $p$ propositional letter or $C\equiv l$ for $l$ nominal.
    \item[6.] $t(@_i\alpha)=_{Def}@_it(\alpha)$ 
    
\end{description}
    
\end{definition}

\begin{definition}[Reverse Translation]
The translation from the language of \textbf{HyBBI} into the language of $\SFour_{\mathcal{H}(@)}$ $\rev{-}:\mathcal{L_{\textbf{HyBBI}}}\longrightarrow {L_{\SFour_{\mathcal{H}(@)}}}$ is the same as that of Definition~\ref{Def:Reverse_Translation}, adding:

\begin{description}
    \item[1.] $[\alpha \septraction \beta]'=_{Def} \alpha \land \beta \land \Diamond \beta$.
    \item[2.] $[@_i\alpha]'=_{Def}@_i[\alpha]'$.
    \item[3.] $[l]'=_{Def}l$ for $l$ nominal. 
\end{description}

\end{definition}

Out of practicality, we employ semantic reasoning rather than axiomatic reasoning throughout the proof. Still, the completeness of both proof systems guarantees the existence of proper derivations, allowing us to obtain the result through purely syntactical means. 

\begin{theorem}[Soundness of $t(-)$ Translation]\label{Theor:Soundness_HyBBI}
If $\vdash_{\SFour_{\mathcal{H}(@)}}\alpha$, then $\vdash_{\textbf{HyBBI}}t(\alpha)$.    
\end{theorem}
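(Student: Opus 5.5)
We argue by induction on the length of the derivation of $\alpha$ in $\SFour_{\mathcal{H}(@)}$, showing that each axiom translates to a \textbf{HyBBI}-theorem and each rule translated under $t(-)$ is admissible in \textbf{HyBBI}. Since the paper has announced that semantic reasoning may be used, with completeness of \textbf{HyBBI} for hybrid relational resource models~\cite{separation_theories} supplying the derivations, the base case amounts to checking that $t(\phi)$ is valid in every hybrid \BBI model for each axiom $\phi$. The key tool is the pointwise correspondence behind Theorem~\ref{theor:induced_frames}: in a hybrid \BBI model on $(M,\triangleright,e)$, $r\Vdash\top\magicwand t(\beta)$ iff every $\preceq$-successor of $r$ in $\mathcal{U}(\mathcal{F})$ forces $t(\beta)$, and $r\Vdash\top\septraction t(\beta)$ iff some $\preceq$-successor forces it. Nominals and $@_i$ are untouched by $t(-)$ and are interpreted identically on both sides, so a routine induction gives $r\Vdash_{\rho} t(\phi)$ in the \BBI model iff $r\Vdash_\rho\phi$ in the induced hybrid Kripke model on $\mathcal{U}(\mathcal{F})$ with the same valuation.

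With this correspondence, validity of $t(\phi)$ in every hybrid \BBI model reduces to validity of $\phi$ in every induced hybrid Kripke model. Each induced frame is a preorder: reflexivity comes from the identity condition, and transitivity from associativity together with commutativity. Hence all axioms 1--9 of hybrid \SFour, in particular the pure axioms 8 and 9 for reflexivity and transitivity and the bridge axiom 7, hold there, as they are valid on all preordered hybrid frames. Modus ponens and Generalization-$@$ are preserved directly, since \textbf{HyBBI} has them. Generalization-$\Box$ follows because validity of $t(\beta)$ at every point of every model gives validity of $\top\magicwand t(\beta)$; syntactically, one derives $(t(\beta)\wedge\top)\ast\top\rightarrow t(\beta)$ and applies R3. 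Substitution of nominals commutes with $t(-)$.

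The main obstacle is the rules Name (R5) and Bounded Generalisation (R6), whose correctness depends on the nominals being \emph{arbitrary}. We handle these by proving the stronger statement that $t(\phi)$ is valid in all hybrid \BBI models whenever $\phi$ is valid on all hybrid frames of the form $\mathcal{U}(\mathcal{F})$. This statement is preserved by every rule, since each rule of hybrid \textbf{K} is sound on any fixed class of frames. For R6, suppose $@_i\Diamond j\rightarrow @_j\beta$ is valid and $j$ is fresh. Given any valuation, we can reassign $j$ to an arbitrary $\preceq$-successor of $\rho(i)$; freshness of $j$ then yields $@_i\Box\beta$. The correspondence above transports this argument to $@_i(\top\magicwand t(\beta))$. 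R5 is handled in the same way by reassigning $i$.

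Finally, since every axiom of $\SFour_{\mathcal{H}(@)}$ is valid on the class $\{\mathcal{U}(\mathcal{F})\}$ and that class is closed under all the rules, every theorem $\alpha$ gives $\models t(\alpha)$ in all hybrid \BBI models. Completeness of \textbf{HyBBI} then yields $\vdash_{\textbf{HyBBI}}t(\alpha)$.
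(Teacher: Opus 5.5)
Your proof is correct, but it is organised differently from the paper's. The paper inducts on derivations and treats only the items of $\SFour_{\mathcal{H}(@)}$ it regards as absent from \textbf{HyBBI}: the axiom $i\rightarrow(\alpha\leftrightarrow @_i\alpha)$, the bridge axiom (translated to $(\top\septraction @_i\alpha)\rightarrow @_i\alpha$), and bounded generalisation. It checks each by a separate semantic argument in hybrid \BBI models and carries the shared axioms and rules over syntactically.

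You instead prove once the pointwise hybrid version of Theorem~\ref{theor:induced_frames}, namely that $r\Vdash_\rho t(\phi)$ in $(\mathcal{F},\rho)$ iff $r\Vdash_\rho \phi$ in $(\mathcal{U}(\mathcal{F}),\rho)$. You then check that $\mathcal{U}(\mathcal{F})$ is a preorder and let soundness of hybrid \SFour on that frame class do all the case analysis, invoking completeness of \textbf{HyBBI} only once, at the end.

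Your route is uniform, and it covers cases the paper's sketch does not treat explicitly. Chief among these are the translations $@_i(\top\septraction i)$ and $@_i(\top\septraction j)\land @_j(\top\septraction k)\rightarrow @_i(\top\septraction k)$ of the pure axioms 8 and 9. These are not \textbf{HyBBI} axioms, and they are exactly where reflexivity and transitivity of $\preceq$ are needed. Your route also covers K and Generalization-$\Box$. The pointwise formulation is also the form in which the correspondence genuinely goes through by induction, since validity on a frame is not compositional for $\rightarrow$ and $\neg$. The paper's route is more local: it keeps most cases syntactic and uses semantics for only three items.

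One detail to tighten: the completeness result for \textbf{HyBBI} in~\cite{separation_theories} is for models with a set $E$ of units (cf.\ the ``single unit'' row of Table~\ref{Table:Separation_Theories}). So derive reflexivity from $w\circ E=\{w\}$, which yields some unit $e$ with $e,w\triangleright w$, rather than from the single-unit identity condition. Nothing else in your argument changes.
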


\begin{proof}
 We proceed by induction on the length of the formulae. A number of axioms and rules are shared between the systems, so we focus on the ones not present in \textbf{HyBBI}.

\begin{description}
    \item[1.] Axiom $\vdash_{\SFour_{\mathcal{H}(@)}} i \rightarrow (\alpha \leftrightarrow@_i \alpha)$ is left unchanged by translation $t(-)$ into \textbf{HyBBI}. Thus, considering that $\vdash_{\textbf{HyBBI}} i \rightarrow (\alpha \rightarrow@_i \alpha)$ is an axiom, we only need to prove $\vdash_{{_{HyBBI}}} i \rightarrow (@_i \alpha\rightarrow\alpha)$. We work by indirect reasoning, assuming the negation of the statement and obtaining a contradiction. Thus, we assume that if $s\Vdash i$, then $s\cancel{\Vdash}@_i\alpha\rightarrow \alpha$. Out of the semantic clause for nominals, we have that $s\Vdash i$ iff $s=i$. Also, we have that if $i\Vdash @_i\alpha$, then $i\cancel{\Vdash}\alpha$. But $i\Vdash @_i\alpha$ iff $i\Vdash \alpha$. Hence, a contradiction.

    \item[2.] Axiom $\vdash_{\SFour_{\mathcal{H}(@)}} \Diamond@_i\alpha \rightarrow @_i\alpha$ maps into $\textbf{HyBBI}$-formula ``$(\top \septraction @_i\alpha)\rightarrow @_i \alpha$'', which we need to show valid in $\textbf{HyBBI}$. Again, we work by indirect reasoning. We assume that if $s\Vdash \top \septraction @_i\alpha$, then $s\cancel{\Vdash}@_i\alpha$, and obtain a contradiction. By semantic clause, $s\Vdash \top \septraction @_i\alpha$ iff $\exists s',\exists s''$ such that $s,s'\triangleright s''$, and $s'\Vdash \top$ and $s''\Vdash @_i\alpha$. Also, $s''\Vdash @_i\alpha$ iff $i\Vdash \alpha$. Still, $s\cancel{\Vdash}@_i\alpha$ iff $i\cancel{\Vdash}\alpha$. Hence, a contradiction.

    \item[3.] We show that the translation of R6 of bounded generalization of ${\SFour_{\mathcal{H}(@)}}$ is valid in $\textbf{HyBBI}$. The translation of the rule states that if $\vdash_{{\textbf{HyBBI}}} @_i(\top \septraction j)\rightarrow @_j\alpha$ and $j\cancel{=}i$ does not occur in $\alpha$, then we can conclude $\vdash_{{\textbf{HyBBI}}} @_i(\top \magicwand \alpha)$. We assume the premise of the translated rule. By semantic clause, $s\Vdash @_i(\top \septraction j)$ iff $\exists s' \exists s''$ such that $i,s'\triangleright s''$, and $s'\Vdash \top$ and $s''\Vdash j$; thus, considering $\Vdash @_j \alpha$ and the semantic clause on nominals, we have that $s''=j$. Also, we have that $j\Vdash \alpha$. Considering the restriction that nominals $j\cancel{=}i$ do not occur in $\alpha$, $j$ is in fact a generic world seen by $i$ for which we have that $\alpha $obtains. Thus, we may universally quantify on it, and obtain that for all $j$ such that $i,s'\triangleright j$ (and the void condition $s'\Vdash \top$), $j\Vdash \alpha$. Thus, the rule remains valid.
    
\end{description}
   
\end{proof}

We prove the needed properties on the reverse translation.

\begin{lemma}[Soundness of ${[}-{]}'$ Translation]\label{Theor:Faithfulnes_1_HyBBI}
For every formula $\alpha$ of $\textbf{HyBBI}$, if $\vdash_{\textbf{HyBBI}} \alpha$ then $\vdash_{\SFour_{\mathcal{H}(@)}} [\alpha]'$.

\end{lemma}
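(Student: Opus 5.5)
The proof goes by induction on the length of the \textbf{HyBBI} derivation of $\alpha$, extending the proof of Proposition~\ref{Prop:rev_sound}. I read clause 1 of the hybrid reverse translation as $[\alpha \septraction \beta]' = [\alpha]' \land [\beta]' \land \Diamond[\beta]'$, with the reverse translation applied recursively. For every axiom I show that its image is a theorem of $\SFour_{\mathcal{H}(@)}$. For every rule I show that the image of the conclusion is derivable from the images of the premises.

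\emph{Axioms and ordinary rules.}
\begin{itemize}
\item The \BBI axioms B1--B4 are handled exactly as before: their images are classical tautologies.
\item Hybrid axioms 1--4 and 7 are fixed by $\rev{-}$ up to applying it to subformulae. They are therefore instances of axioms 2--5 of $\SFour_{\mathcal{H}(@)}$, or a weakening of axiom 6 in the case of axiom 3.
\item Axioms 5 ($@_i k \rightarrow @_k i$) and 6 ($@_i k \land @_k\alpha \rightarrow @_i\alpha$) are the standard symmetry and substitution theorems of $\textbf{K}_{\mathcal{H}(@)}$. I would derive them from axioms 2--6 together with Generalization-$@$ in the usual way.
\item Axiom 8 translates to $@_i(k \land k') \land @_k[\alpha]' \land @_{k'}[\beta]' \rightarrow @_i([\alpha]' \land [\beta]')$. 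This follows from two instances of axiom 6 after distributing $@_i$ over $\land$.
\item Modus ponens is shared by both systems.
\item The rules $[\magicwand 1]$ and $[\magicwand 2]$ are handled exactly as in Proposition~\ref{Prop:rev_sound}, using axiom T. Here T is derivable in $\SFour_{\mathcal{H}(@)}$ from the reflexivity axiom 8 by the usual hybrid argument.
\item Rule R4 for $*$ is classical.
\item Substitution of nominals commutes with $\rev{-}$, because $\rev{-}$ fixes nominals.
\item Generalization-$@$ commutes with $\rev{-}$.
\item For Name, $i$ occurs in $\alpha$ iff it occurs in $[\alpha]'$, so the side condition is preserved.
\end{itemize}

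\emph{Axiom 9.} Its image is
\[
@_i(k\land k'\land\Diamond k') \land @_k[\alpha]' \land @_{k'}[\beta]' \rightarrow @_i([\alpha]' \land [\beta]' \land \Diamond[\beta]').
\]
The first two conjuncts of the consequent are obtained as for axiom 8. The third uses the bridge principle $@_i\Diamond k' \land @_{k'}\varphi \rightarrow @_i\Diamond\varphi$, which is standard in $\textbf{K}_{\mathcal{H}(@)}$. I would derive it from axiom 6 ($k' \rightarrow (\varphi \leftrightarrow @_{k'}\varphi)$), monotonicity of $\Diamond$ under $\Box$-necessitation, and axiom 7 ($\Diamond @_{k'}\varphi \rightarrow @_{k'}\varphi$), which lets the $@_{k'}$ be discharged.

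\emph{The paste rules.} I expect SR1 and SR2 to be the main obstacle, because their target is not simply a hybrid rule of $\SFour_{\mathcal{H}(@)}$. The device I would try first is to substitute the fresh nominals away: after translation the multiplicative structure collapses to $\land$, so $k$ and $k'$ are no longer needed to witness a decomposition.

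For SR1 the translated premise is
\[
@_i(k\land k') \land @_k[\alpha]' \land @_{k'}[\beta]' \rightarrow [\gamma]'.
\]
I apply the nominal Substitution rule R2 with $k,k' := i$. Since $k,k'$ do not occur in $\alpha,\beta,\gamma$, this gives $@_i(i \land i) \land @_i[\alpha]' \land @_i[\beta]' \rightarrow [\gamma]'$. Discharging $@_i i$ (axiom 4) and distributing $@_i$ over $\land$ yields $@_i([\alpha]' \land [\beta]') \rightarrow [\gamma]'$, which is the image of the conclusion.

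For SR2 the same substitution leaves an extra conjunct $@_i \Diamond i$. This is exactly the reflexivity axiom 8 of $\SFour_{\mathcal{H}(@)}$, so it can also be discharged. What remains is $@_i([\alpha]' \land [\beta]') \rightarrow [\gamma]'$. This still has to be weakened to $@_i([\alpha]' \land [\beta]' \land \Diamond[\beta]') \rightarrow [\gamma]'$, which is immediate since we only add a conjunct to the antecedent.

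The only delicate point is checking that the side conditions on $k,k'$ make the substitution legitimate. In particular it must not capture $i$ in the remaining formulae, which holds because $k,k' \notin \{i\}$ and $k,k'$ are absent from $\alpha,\beta,\gamma$. If the substitution route fails, the fallback is a semantic argument: read the translated premise as valid over reflexive–transitive frames, instantiate $k = k' = i$ in a model, and appeal to completeness of $\SFour_{\mathcal{H}(@)}$ for pure extensions.
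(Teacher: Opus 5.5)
Your proof is correct, but it takes a different route from the paper's.

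\textbf{The paper's route.} The paper argues semantically and treats only axioms 8, 9 and the paste rules SR1, SR2. It checks that their translations are valid, or validity-preserving, on reflexive--transitive hybrid models by collapsing the witnesses so that $i=k=k'$. It then appeals to completeness of $\SFour_{\mathcal{H}(@)}$ to recover derivations, and dismisses everything else as ``shared''.

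\textbf{Your route.} You stay inside the calculus throughout:
\begin{itemize}
\item you use the standard $\textbf{K}_{\mathcal{H}(@)}$ theorems: symmetry, Nom and Bridge;
\item you derive T from the pure reflexivity axiom, so that $[\magicwand 1]$ and $[\magicwand 2]$ go through as in Proposition~\ref{Prop:rev_sound};
\item you handle the paste rules by the nominal Substitution rule $k,k':=i$, followed by discharging $@_i i$ and $@_i\Diamond i$.
\end{itemize}
That substitution is exactly the syntactic counterpart of the paper's world-collapsing step. Your stated semantic fallback is essentially the paper's proof.

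\textbf{What your approach buys.}
\begin{itemize}
\item It needs no completeness theorem, so it is genuinely syntactical, as the paper advertises.
\item It transfers verbatim to the extensions $+Ax+[\beta]'$ of Theorem~\ref{Hybrid_Embedding}. The semantic argument instead needs completeness of each extended system, which hybrid completeness supplies only for pure axioms.
\item It covers cases the paper passes over, such as the derivation of T needed for the $\magicwand$ rules.
\item It shows exactly where the freshness condition on $k,k'$ is used: it guarantees that the substitution leaves $[\alpha]',[\beta]',[\gamma]'$ unchanged. The paper remarks that this restriction is not needed. In fact, without it the translated SR1 fails: take $\alpha=\beta=\top$ and $\gamma=@_i k$.
\end{itemize}

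One small correction: $\mathcal{H}(@)$ has no binders, so ``capture'' is not the concern. Only the invariance of $\alpha,\beta,\gamma$ under the substitution matters.

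\textbf{What the paper's approach buys.} It is brief, and it gives the intuitive picture of $\rev{-}$ collapsing a resource decomposition onto a single world.
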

\begin{proof}
  A number of axioms and rules are shared between systems. For readability, we omit the $\rev{-}$ on sub-formulae $\alpha, \beta$, and so on. We focus on the axioms and rules for substructural operators:

    \begin{description}
        \item[1.] Axiom 8. of $\textbf{HyBBI}$ translates via $\rev{-}$ into $\vdash (@_i (k\land k')\land @_k\alpha\land@_{k'}\beta) \rightarrow @_i(\alpha \land \beta)$. Semantically, this means that we need to prove that if $i\Vdash k \land k'$ and $k \Vdash \alpha$ and $k'\Vdash \beta$, then $i\Vdash \alpha \land \beta$. This easily checks, considering how the semantic clauses imply that $i\Vdash k \land k'$ iff $i\Vdash k$ and $i \Vdash k'$, and thus $i=k=k'$; moreover, the two other conjuncts of the hypothesis state that $k\Vdash \alpha$ and $k'\Vdash \beta$. Thus, $i \Vdash \alpha \land \beta$.

        \item[2.] Axiom 9. of $\textbf{HyBBI}$ translates via $\rev{-}$ into $\vdash (@_i (k\land k' \land \Diamond k')\land @_k\alpha\land@_{k'}\beta) \rightarrow @_i(\alpha \land \beta \land \Diamond \beta)$. Assuming the antecedent, we show the consequent. In fact, $s\Vdash @_i (k\land k'\land \Diamond k')$ iff $i\Vdash k$ and $i\Vdash k'$ and $i\Vdash \Diamond k'$; thus, we have that $i=k=k'$, and that $i\Vdash \Diamond k'$. In short then, world $k$ sees itself. Also, the other conjuncts of the antecedent state that $k\Vdash \alpha$ and $k'\Vdash \beta$. From this, we conclude that $i\Vdash \alpha$ and $i \Vdash \beta$ and $i\Vdash \Diamond \beta$ (this last conjunct obtains because $i=k=k'$ sees itself). Hence, the consequent.

        \item[3.] Rule SR1 of $\textbf{HyBBI}$ translates via $\rev{-}$ into a rule stating that if $@_i(k\land k')\land @_k\alpha \land @_{k'}\rightarrow \gamma$ and $k, k'$ do not occur in $\alpha, \beta,\gamma$ or $\{i\}$, then $@_i(\alpha \land \beta) \rightarrow \gamma$. Assuming the antecedent, we prove the consequent. Thus, if $i\Vdash k \land k'$ and $k\Vdash \alpha$ and $k'\Vdash \beta$, then $s \Vdash \gamma$; hence, $i=k=k'$ and $i\Vdash \alpha$ and $i\Vdash \beta$. Thus, the semantic clause for the consequent (i.e. if $i\Vdash \alpha \land \beta$ then $s\Vdash \gamma$) is easily established. Also, we notice that we had no need to enforce the particular restriction on nominals, which we may assume \textit{a fortiori}.

        \item[4.] Rule SR2 of $\textbf{HyBBI}$ translates via $\rev{-}$ into a rule stating that if $\vdash (@_i(k \land k' \land \Diamond k')\land @_k\alpha\land@_{k'}\beta)\rightarrow \gamma$ and $k, k'$ are not in $\alpha, \beta, \gamma$ or $\{i\}$, then $@_i(\alpha \land \beta \land \Diamond \beta)\rightarrow \gamma$. Assuming the antecedent, we prove the consequent. Unpacking the semantic clauses for the antecedent, we have that if $i\Vdash k \land k' \land \Diamond k'$ and $k\Vdash \alpha$ and $k'\Vdash \beta$, then $s\Vdash \gamma$. Again, we have that $i=k=k'$, and $k'$ sees itself. From these, the semantic clause for the consequent (that is, if $i\Vdash \alpha \land \beta \land \Diamond \beta$ then $s\Vdash \gamma$) is easily established. Again, we had no need to enforce the particular restriction on nominals (which we assume \textit{a fortiori}). In fact, we notice that the reverse translation ``collapses'' worlds that would be separated in the original substructural environment.

    \end{description}
   
\end{proof}

\begin{lemma}[Translation ${[}-{]}'$ is Cancelling]\label{Theor:Faithfulnes_2_HyBBI}
For every formula $\alpha$ in $\SFour_{\mathcal{H}(@)}$, $\rev{t(\alpha)}$ is $\SFour_{\mathcal{H}(@)}$-equivalent to $\alpha$: that is, $\vdash_{\SFour_{\mathcal{H}(@)}} \alpha \leftrightarrow \rev{t(\alpha)}$.

\end{lemma}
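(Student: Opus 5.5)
We argue by structural induction on $\alpha$, following the proof of Proposition~\ref{Prop:cancel}. The new ingredients are the nominals, the operator $@_i$, and the treatment of $\Diamond$ and $\septraction$.

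\textbf{Routine cases.} In the base case, $t(-)$ and $\rev{-}$ both fix propositional letters, nominals, $\top$ and $\bot$, so $\rev{t(\alpha)}=\alpha$. The Boolean cases are exactly those of Proposition~\ref{Prop:cancel}.

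\textbf{Replacement of equivalents.} To use the induction hypothesis inside modal contexts, we need replacement of provable equivalents under $\Box$ and $@_i$. For $\Box$ this comes from Generalization-$\Box$ (R3) together with axiom 1. For $@_i$ it comes from Generalization-$@$ (R4) together with axiom 2 of Axiomatic Hybrid \SFour. Both facts are standard consequences of normality.

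\textbf{The $@_i$ case.} We have $\rev{t(@_i\beta)}=@_i\rev{t(\beta)}$. The induction hypothesis gives $\vdash\beta\leftrightarrow\rev{t(\beta)}$, and $@$-replacement then yields $\vdash @_i\beta\leftrightarrow @_i\rev{t(\beta)}$.

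\textbf{The $\Box$ case.} This is verbatim the computation of Proposition~\ref{Prop:cancel}:
\[
\rev{t(\Box\beta)}=\top\rightarrow\Box\rev{t(\beta)}\lor\rev{t(\beta)}.
\]
By the induction hypothesis and $\Box$-replacement this is equivalent to $\Box\beta\lor\beta$. The reflexivity axiom (T) then gives $\Box\beta\lor\beta\leftrightarrow\Box\beta$; in the hybrid system, T is derivable from the pure axiom 8 via completeness.

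\textbf{The $\Diamond$ case, which I expect to be the main obstacle.} Following the paper's convention, $\Diamond\beta$ abbreviates $\neg\Box\neg\beta$ and $\gamma\septraction\delta$ abbreviates $\neg(\gamma\magicwand\neg\delta)$. The clauses for $t(\Diamond\,\cdot)$ and $\rev{\,\cdot\septraction\cdot\,}$ must agree with these abbreviations, up to provable equivalence, for the lemma to hold. I therefore plan to compute through the abbreviations:
\[
\rev{t(\neg\Box\neg\beta)}=\neg\bigl(\top\rightarrow\Box\neg\rev{t(\beta)}\lor\neg\rev{t(\beta)}\bigr).
\]
By the induction hypothesis, $\Box$-replacement and T, this is equivalent to $\neg\Box\neg\beta$, that is, to $\Diamond\beta$.

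The delicate point is the displayed clause $[\gamma\septraction\delta]'=[\gamma]'\land[\delta]'\land\Diamond[\delta]'$. Read with the primitive $\Diamond$-clause for $t$, it gives $\top\land\beta\land\Diamond\beta$, which by T is only equivalent to $\beta$, not to $\Diamond\beta$. So the proof must either treat $\Diamond$ and $\septraction$ as the defined abbreviations above, so that the $\Box$/$\magicwand$ clauses govern the computation, or check that the $\septraction$-clause is used only where it agrees with $\neg(\gamma\magicwand\neg\delta)$ up to equivalence. I will adopt the former reading, which is consistent with the paper's definition of $\Diamond$ and $\septraction$ as De Morgan duals.

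With all cases covered, the induction yields $\vdash_{\SFour_{\mathcal{H}(@)}}\alpha\leftrightarrow\rev{t(\alpha)}$. Combined with Lemma~\ref{Theor:Faithfulnes_1_HyBBI}, this gives faithfulness exactly as in Theorem~\ref{Theor:Faithfulness}.
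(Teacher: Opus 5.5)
\textbf{The lemma is false, and the failure is in your $\Box$ case.} Axiom T is $\Box\beta\rightarrow\beta$. What it yields is $\vdash\Box\beta\lor\beta\leftrightarrow\beta$, not $\vdash\Box\beta\lor\beta\leftrightarrow\Box\beta$; the latter would need $\beta\rightarrow\Box\beta$. Hence $\rev{t(\Box p)}=\top\rightarrow\Box p\lor p$ is provably equivalent to $p$. But $\Box p\leftrightarrow p$ is not a theorem of \SFour$_{\mathcal{H}(@)}$. Take the reflexive transitive frame with worlds $w,v$, $wRv$, $p$ true only at $w$, and nominals assigned arbitrarily. Then $w$ forces $\top\rightarrow\Box p\lor p$ but not $\Box p$. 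So no induction can establish the lemma as stated.

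Following the paper does not rescue this. Its proof only inspects the instance $\Diamond@_i\alpha\rightarrow@_i\alpha$ and defers everything else to Proposition~\ref{Prop:cancel}. That proposition's $\Box$ case asserts exactly the same incorrect equivalence $\Box\beta\lor\beta\leftrightarrow\Box\beta$. Your $@_i$ case and replacement lemmas are fine; the gap is this one step, inherited from the paper.

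\textbf{Your $\Diamond$ analysis caught the symptom but misplaced the cause.} The $\septraction$ clause does agree with the De Morgan reading. Indeed $\rev{\neg(\gamma\magicwand\neg\delta)}=\neg(\rev{\gamma}\rightarrow\Box\neg\rev{\delta}\lor\neg\rev{\delta})$ is propositionally equivalent to $\rev{\gamma}\land\rev{\delta}\land\Diamond\rev{\delta}$. Correspondingly, in your displayed computation, T applied to $\neg\beta$ turns $\neg(\Box\neg\beta\lor\neg\beta)$ into $\neg\neg\beta$, i.e.\ $\beta$. That is exactly the value $\top\land\beta\land\Diamond\beta$ you rejected: both readings give $\beta$, not $\Diamond\beta$.

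The collapse comes from the $\magicwand$ clause itself. Under T, $\rev{\gamma\magicwand\delta}$ is equivalent to $\rev{\gamma}\rightarrow\rev{\delta}$, so $\rev{-}$ erases $\Box$. Reorganising the induction cannot repair this:
\begin{itemize}
\item Since $\rev{I}=\rev{\top}=\top$ and the $\magicwand$ clause only sees the translations of its arguments, $\rev{I\magicwand\phi}=\rev{\top\magicwand\phi}$.
\item \BBI proves $\phi\rightarrow(I\magicwand\phi)$: chain B3 and B2 to get $(\phi*I)\rightarrow\phi$, then apply R3.
\item Soundness of $\rev{-}$ then gives $\vdash\rev{t(\beta)}\rightarrow\rev{t(\Box\beta)}$.
\item Combined with cancelling, this would yield $\vdash\beta\rightarrow\Box\beta$.
\end{itemize}
A correct faithfulness argument therefore needs a reverse translation that separates $I$ from $\top$, or a different (e.g.\ semantic) route.
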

\begin{proof}
It can be easily noticed that in the case of axiom $\vdash \Diamond @_i \alpha \rightarrow@_i \alpha$, the only one presenting an interaction between modal and hybrid operators, applying $t(-)$ and then $[-]'$ results in the valid $\vdash_{\SFour_{\mathcal{H}(@)}} (\Diamond @_i \alpha \rightarrow @_i \alpha) \leftrightarrow ((\top \land  @_i \alpha \land \Diamond @_i \alpha) \rightarrow@_i \alpha)$. Considering how translation $\rev{-}$ leaves unaltered the new elements of the language, the rest of the proof is straightforward.    
\end{proof}

Theorem~\ref{Hybrid_Embedding} follows from the aforementioned results in the same way as Theorem~\ref{Theor:Faithfulness}.

\end{document}